\newtheorem{remark}{Remark}
\newtheorem{lemma}{Lemma}
\newtheorem{proposition}{Proposition}
\renewcommand{\vec}{\boldsymbol}
\renewcommand{\d}[1][]{\ensuremath{\,\mathrm{d}#1}}
\newcommand{\op}[1]{\ensuremath{\operatorname{#1}}}
\renewcommand*{\env@matrix}[1][*\c@MaxMatrixCols c]{%
  \hskip -\arraycolsep
  \let\@ifnextchar\new@ifnextchar
  \array{#1}}
\DeclareOldFontCommand{\it}{\normalfont\itshape}{\mathit}
\newcommand{\X}{\vec{X}}
\newcommand{\x}{\vec{x}}
\newcommand{\y}{\vec{y}}
\newcommand{\abs}[1]{\left|{#1}\right|}
\newcommand{\inv}{^{-1}}
\newcommand{\tp}{\ensuremath{^{\scriptstyle \mathrm{T}}}}
\renewcommand{\exp}[1]{e^{#1}}
\renewcommand{\i}{\mathrm{i}}
\renewcommand{\d}{\operatorname{d}\!}
\renewcommand{\O}{\mathcal{O}}
\newcommand{\VPTen}[1]{\left[#1\right]_{\times}}
\renewcommand{\det}{\operatorname{det}}
\newcommand{\cof}{\operatorname{cof}}
\renewcommand{\vec}[1]{\pmb{#1}}
\newcommand{\vct}[1]{\vec{#1}}
\newcommand{\ten}[1]{\pmb{#1}}
\newcommand{\drv}[2]{\frac{\partial #1}{\partial #2}} 
\newcommand{\R}{\mathbb{R}} 
\renewcommand{\L}[1]{\op{L}^{\!#1}} 
\renewcommand{\H}[1]{\op{H}^{#1}} 
\newcommand{\Cn}[1]{\op{C}^{#1}} 
\newcommand{\TrialSpace}{\mathcal{C}} 
\newcommand{\TestSpace}{\mathcal{V}}  
\newcommand{\fib}[1]{\tilde{#1}}
\newcommand{\mtx}[1]{#1}
\newcommand{\axl}{\operatorname{axl}}
\newcommand{\EC}[1]{\left.\left[\!\!\left[ #1 \right]\!\!\right] \right|_{0}^{L}}
\newcommand{\CH}[1]{{\color{black} #1}}
\newcommand{\BW}[1]{{\color{black} #1}}
\newcommand{\UK}[1]{{\color{black} #1}}
\newcommand{\review}[1]{{\color{black} #1}}
\tikzstyle{block} = [draw,rectangle,thick,minimum height=2em,minimum width=2em]
\tikzstyle{sum} = [draw,circle,inner sep=0mm,minimum size=2mm]
\tikzstyle{connector} = [->,thick]
\tikzstyle{line} = [thick]
\tikzstyle{branch} = [circle,inner sep=0pt,minimum size=1mm,fill=black,draw=black]
\tikzstyle{guide} = []
\tikzstyle{snakeline} = [connector, decorate, decoration={pre length=0.2cm,
\begin{document}

\begin{center}
\large{\textbf{Multidimensional coupling: A variationally consistent approach to fiber-reinforced materials.}}

{\large Ustim Khristenko$^{a}$, Stefan Schu{\ss}$^{b}$, Melanie Kr{\"u}ger$^{b}$, Felix Schmidt$^{b}$, Barbara Wohlmuth$^{a}$, Christian Hesch$^{b}$\footnote{Corresponding author. E-mail address: christian.hesch@uni-siegen.de}}

{\small
\(^a\) Faculty of Mathematics, Technical University of Munich, Garching, Germany\\
\(^b\) Chair of Computational Mechanics, University of Siegen, Siegen, Germany}

\end{center}

\vspace*{-0.1cm}\textbf{Abstract}

\BW{A novel mathematical model for fiber-reinforced materials is proposed. It is based on a 1-dimensional beam model for the thin fiber structures, a flexible and general 3-dimensional elasticity model for the matrix and an overlapping domain decomposition approach. From a computational point of view, this is  motivated by the fact that matrix and fibers can easily meshed independently. Our main interest is in fiber reinforce polymers where the Young's modulus are quite different. Thus the modeling error from the overlapping approach is of no significance. }
The coup\-ling conditions acknowledge both, the forces and the moments of the beam model and transfer them to the background material.  A suitable static condensation procedure is applied to remove the beam balance equations. \BW{The condensed system then forms our starting point for a numerical approximation in terms of isogeometric analysis. The choice of our discrete basis functions of higher regularity is motivated by the fact, that as a result of the static condensation, we obtain second gradient terms in fiber direction. }
Eventually, a series of benchmark tests demonstrate the flexibility and robustness of the proposed methodology. \BW{As a proof-of-concept, we show that our new model is able to capture bending, torsion and shear dominated situations.}

\textbf{Keywords}: 1D-3D coupling, overlapping domain \BW{decomposition, 
 nonlinear} beam mo\-del, con\-densation, second gradient material
\section{Introduction}

Fiber reinforced materials are subjected to various physical mechanisms on different scales, depending on the size, orientation and distribution within a suitable matrix material. Applications contain, e.g.\ steel reinforced ultra-high performance concrete or fiber reinforced polymers.  Beside such technical materials, many biological tissues are reinforced by certain types of fibers.  A model containing fibers fully resolved as a discretized continua, subsequently referred to as Cauchy continuum approach, is far out of the range of today's computational capabilities.  Therefore, we develop a model containing the fibers as continuum degenerated 1-dimensional beams within the matrix material in the sense of an overlapping domain decomposition method. 

This intermediate scale model can be considered as a material with microstructures.  We refer here to dell'Isola et al.\ \cite{hesch2019a},  where panthographic mechanisms as prototypical material with dedicated microstructure have been investigated.  In Giorgio \cite{giorgio2016} a detailed ana\-ly\-sis demonstrates that a microscale model using a Cauchy continuum model can be homogenized using a macro second gradient model,  see also dell'Isola et al.\ \cite{delIsola2015} for a suitable Piola homogenization of rod-like structures.  More generally,  we consider a higher-gradient framework as proposed by Mindlin \cite{mindlin1964,mindlin1965a},  see also the work of Germain \cite{germain1973b}, Toupin \cite{toupin1962,toupin1964} as well as Eringen \cite{eringen1999}. We refer also to  Asmanoglo \& Menzel \cite{menzel2017a,menzel2017c} for higher-order formulations used in the context of composites based on the early work of Spencer \& Soldatos \cite{spencer2007} and Soldatos \cite{soldatos2010}.  For the application to Kirchhoff-Love shell elements see Schulte et al.\ \cite{Schulte2020b} and Dittmann et al.\ \cite{dittman2020straingradient} for the application of strain gradient formulation on porous-ductile fracture.

For the embedded beams,  several models can be taken into account.  Geometrically exact beam formulations using finite elements are presented foremost in the Simo-Reissner beam model,  see Simo \cite{Simo1985}, Simo \& Vu-Quoc \cite{Simo1986} and Reissner \cite{Rei1981}.  For the parametrization of the rotation, director interpolations can be employed to the  Simo-Reissner beam theory as shown in Romero \& Armero \cite{romero2001} and Betsch \& Steinmann \cite{betsch2002b,betsch2002c} and further advanced in Eugster et al.\ \cite{eugster2014}. Quaternions are used e.g.\ in McRobie \& Lasenby \cite{mcrobie1999}, see also in Weeger et al.\ \cite{Weeger2017} for the application of quaternions in the context of isogeometric collocation methods.  In Meier et al.\ \cite{meier2014},  geometrically exact Kirchhoff rods are used and in Meier et al.\ \cite{meier2018} applied for the modeling of fiber based materials using contact algorithms.

In contrast to the surface coupling of dedicated surfaces between beam and solid using either Dirichlet or Neumann conditions (e.g.\ within staggered algorithms or via Lagrange multipliers using Mortar methods),  overlapping domain decomposition allows for a flexible and efficient embedding of the fibers.  This idea has been applied in the context of fluid-structure interaction problems (FSI) using the terminus immersed technologies,  see \cite{peskin2002,liu2006,liu2007,gil2010,hesch2012b,hesch2014b}.  From a methodological point of view, immersed methods for FSI are part of the so-called Fictitious Domain (FD) philosophy,  introduced by Glowinski et al.\ in \cite{glowinski1994} for the resolution of boundary value problems in complex geometrical settings.  For the application on two elastic domains,  we refer to Sanders \& Puso \cite{puso2012}.  An overlapping domain decomposition for beam-solid interaction problems using position constraints is presented in Steinbrecher et al.\ \cite{steinbrecher2019c}. 

To be more specific, \BW{we propose a new mathematical model for fiber-reinforced materials.}
The model is derived from a full surface-to-volume beam-matrix coupling via Lagrange multiplier, representing the coupling force. \BW{For the matrix we apply a large strain elasticity model in 3 dimensions and for the fiber a 1-dimensional beam model.} This full model is \BW{then} reduced via its Fourier expansion as function of the angular coordinate in the beam cross-section. Taking into account only the first term (average over cross-section) leads to the common displacement coupling constraint~(see, e.g., \cite{steinbrecher2019c}), which neglects the connection of the beam directors to the matrix material. Consideration of the second term allows to transfer dilatation,  shear and coupled stresses on the beam-matrix interface.  It thus provides an additional constraint,  which couples the beam directors with the deformation gradient,  such that bending and torsion are transferred to the matrix as well.  This also yields an additional tensor-valued Lagrange multiplier enforcing the matrix deformation to take into account the incompressibility of the beam cross-section.  The constraints typically involve the circular mean over the beam mantle (see, e.g.,  \cite{d2008coupling,koppl2018mathematical}).  Assuming enough regularity \review{of Galerkin approximation},  we truncate the Taylor expansion of the matrix displacement field at the beam centerline up to \review{quadratic} term with respect to the beam radius~$r$.  Hence, the constraints are collapsed to the beam centerline,  yielding a modeling error of order~\review{$\O\left(r^3\right)$}.
\review{Together with condensation of the beam balance equations, this} leads to a second gradient formulation along the beam\BW{. Thus, $C^1$-regular isogeometric analysis techniques are a natural candidate for discretization.} 

The paper is structured as follows. In Section \ref{sec:preliminaries} we present the fundamental formulations for the first and second gradient continuum as well as \review{for} the beam.  In Section~\ref{sec:mainresults},  we derive \review{our main results: the coupling terms and the final variational system.}  The spatial discretization is given in Section \ref{sec:discretization},  followed by a series of representative examples in Section \ref{sec:numerics}.  Eventually,  conclusions are summarized in Section \ref{sec:conclusions}.

\subsection{Definitions and notations.}

This section gives a brief summary on the used notation.  A single contraction of two vectors will be understood as $[\vec{a}\cdot\vec{b}] = a_i\,b_i$,  where the Einstein summation convention on repeated indices is used.  For two second order tensors it holds $[\vec{A}\,\vec{B}]_{ij} = A_{ik}\,B_{kj}$ and the double contraction reads  $[\vec{A}:\vec{B}] = A_{ij}\, B_{ij}$.  Next, we define the gradient with respect to the reference \(\nabla(\bullet)\) of a vector field \(\vec{a}\) and of a second-order tensor field \(\vec{A}\) as
\begin{equation}
[\nabla\vec{a}]_{iJ}=\frac{\partial[\vec{a}]_i}{\partial[\vec{X}]_J}\quad\text{and}\quad[\nabla\vec{A}]_{iJK}=\frac{\partial[\vec{A}]_{iJ}}{\partial[\vec{X}]_K}.
\end{equation}
For the divergence operator it follows
\begin{equation}
[\nabla\cdot\vec{A}]_{i}=\frac{\partial[\vec{A}]_{iJ}}{\partial[\vec{X}]_J}\quad\text{and}\quad[\nabla\cdot\mathfrak{A}]_{iJ}=\frac{\partial[\mathfrak{A}]_{iJK}}{\partial[\vec{X}]_K}.
\end{equation}
The triple contraction for two third order tensors is given via $[\mathfrak{A}\,\vdots\,\mathfrak{B}] = \mathfrak{A}_{ijk}\,\mathfrak{B}_{ijk}$. For the double contraction we define
\begin{equation}
[\mathfrak{A}:\vec{A}]_{i}=\mathfrak{A}_{iJK}\,A_{JK} \quad\text{and}\quad [\mathfrak{A}\tp:\vec{A}]_{K}=\mathfrak{A}_{KiJ}\,A_{iJ}.
\end{equation}
The axial vector of a skew symmetric $3\times 3$ matrix is defined via $[\axl \ten{A}]_i = -\frac{1}{2}\,\epsilon_{ijk}\,[\vec{A}]_{jk}$ and the spin of a 3-dimensional vector $[[\vec{a}]_{\times}]_{ij} = -\epsilon_{ijk}\,[\vec{a}]_{k}$ such that $\vec{a}\times\vec{b} = [\vec{a}]_{\times}\,\vec{b}$, i.e.\
\begin{equation}\label{axialVector}
	\axl \ten{A} = \frac{1}{2}
	\begin{bmatrix}
		A_{32} - A_{23} \\
		A_{13} - A_{31} \\
		A_{21} - A_{12}
	\end{bmatrix}\quad\text{and}\quad
\left[\vec{a}\right]_{\times}
=
\begin{bmatrix}
0& -a_3 & \phantom{-}a_2 \\
\phantom{-}a_3 & 0& -a_1 \\
-a_2 & \phantom{-}a_1 &0
\end{bmatrix}.
\end{equation}
Here, we have made use of the Levi-Civita permutation tensor $\epsilon_{ijk}$ with $\epsilon_{ijk}\,\epsilon_{ijl} = 2\,\delta_{kl}$ and the Kronecker symbol $\delta_{kl} = [\ten{I}]_{kl}$,  such that $\axl \left[\vec{a}\right]_{\times} = \vec{a}$.  Hence,  $\left[\vec{a}\right]_{\times}$ belongs to the vector space of skew-symmetric matrices
\begin{equation}
\mathrm{so}(3) = \{\vec{A}\in\mathbb{R}^{3\times 3}\,|\,\vec{A} + \vec{A}\tp = \vec{0}\},
\end{equation}
The image of the matrix exponential map of $\mathrm{so}(3)$ is the special orthogonal group $\mathrm{SO}(3)$,  i.e.\ $\mathrm{exp}:\mathrm{so}(3)\rightarrow\mathrm{SO}(3)$ and 
\begin{equation}\label{eq:so3}
\mathrm{SO}(3) = \{\vec{R}\in\mathrm{GL}(3)\,|\,\vec{R}\tp\,\vec{R} = \vec{R}\,\vec{R}\tp = \ten{I},\,\mathrm{det}(\vec{R}) = 1\},
\end{equation}
where $\mathrm{GL}(3)$ is the general linear group of $3\times 3$ matrices.  Eventually, we make use of the tensor cross product operation $\vec{\times}$,  defined as  $[\vec{A}\vec{\times}\vec{B}]_{iJ} = \epsilon_{imn}\,\epsilon_{JPQ}\,[\vec{A}]_{mP}\,[\vec{B}]_{nQ}$ for the two-point second order tensors $\vec{A}$ and $\vec{B}$.

Moreover, for any function $f(s)$, $s\in\{0,L\}$,  we introduce the following notation:
\begin{equation}\label{endpoint_int}
\EC{f} := f(L) - f(0).
\end{equation}

\section{Preliminaries and problem description}\label{sec:preliminaries}
We start with a short summary of non-linear continuum mechanics. Therefore, consider a \review{bounded Lipschitz} domain $\Omega_0\subset\mathbb{R}^3$ in its reference configuration with boundary $\partial\Omega_0$ and outward unit normal $\vec{N}$. The actual configuration $\Omega\subset\mathbb{R}^3$ with boundary $\partial\Omega$ and outward unit normal $\vec{n}$ is related to the reference configuration by a deformation mapping $\vec{\varphi}:\Omega_0\rightarrow\mathbb{R}^3$, such that $\Omega = \vec{\varphi}(\Omega_0)$. Material points are labelled by $\vec{X}\in\Omega_0$ with corresponding actual position \(\vec{x} = \vec{\varphi}(\vec{X})\). 

\subsection{Solid mechanics}
\paragraph{Kinematics.}
For the matrix material, we introduce the deformation gradient as second order tensor field $\vec{F}:\Omega_0\rightarrow\mathbb{R}^{3\times 3}$, such that
\begin{equation}\label{eq:deformationgradient}
\vec{F} = \nabla\vec{\varphi}(\vec{X}), 
\end{equation}
which maps the infinitesimal vector $\text{d}\vec{X}$ at $\vec{X}\in\Omega_0$ to the infinitesimal vector $\text{d}\vec{x}$ in the actual configuration. Next, the second order tensor field $\vec{H}:\Omega_0\rightarrow\mathbb{R}^{3\times 3}$ is introduced, where $\vec{H}=\cof{\ten{F}}$ denotes the cofactor of $\ten{F}$, defined as follows
\begin{equation}
\vec{H} = \frac{1}{2}\,\vec{F}\vec{\times}\vec{F},
\end{equation}
which maps the infinitesimal oriented area element $\text{d}\vec{A} = \vec{N}(\vec{X})\,\text{d}A$ to the infinitesimal oriented area element $\text{d}\vec{a} = \vec{n}(\vec{X})\,\mathrm{d}a$ in the actual configuration. Eventually, the scalar field $J:\Omega_0\rightarrow\mathbb{R}$ is introduced, where $J = \det{\ten{F}}$ denotes the determinant, defined as
\begin{equation}
J = \frac{1}{6}\,(\vec{F}\vec{\times}\vec{F}):\vec{F},
\end{equation}
which relates the infinitesimal volume element $\text{d}V$ to the corresponding infinitesimal volume element $\text{d}v$ in the actual configuration. This last equation completes the set of kinematic relations, to be used to define the strain energy of \review{a hyperelastic material in the large strain regime}. Note that we omit in the following the time dependency of the equations and postulate a zero density, as we focus here on static problems.

\paragraph{Large strain elasticity.}
We assume the existence of a strain energy function of the form
\begin{equation}
\Psi := \Psi(\vec{F},\cof{\vec{F}},\det{\vec{F}}),
\end{equation}
where $\Psi:\mathbb{R}^{3\times 3}\times\mathbb{R}^{3\times 3}\times\mathbb{R}\rightarrow\mathbb{R}$. \BW{To ensure frame invariance, i.e., objectivity, we require that $\Psi$ must be independent of rotational components of the deformation gradient and the cofactor.} This allows us to formulate the total stored energy in the material configuration \BW{as
\begin{equation}\label{eq:energies}
\Pi^{int} = \int\limits_{\Omega_0}\Psi(\vec{F},\cof{\vec{F}},\det{\vec{F}})\,\mathrm{d}V.
\end{equation}}
\review{
The virtual work of the internal contributions writes
\begin{equation}\label{eq:va_int}
\delta\Pi^{int}(\vec{\varphi}) = \int\limits_{\Omega_0}\vec{P}(\vec{\varphi}):\nabla\delta\vec{\varphi}\d V,
\end{equation}
using the first Piola-Kirchhoff stress tensor given by
\begin{equation}
\vec{P} = \frac{\partial\Psi}{\partial \vec{F}} + \frac{\partial\Psi}{\partial\vec{H}}\vec{\times}\vec{F} + \frac{\partial\Psi}{\partial J}\,\vec{H},
\end{equation}
see Bonet et al.\ \cite{bonet2015a} and references therein on polyconxity.
}

\review{
	\begin{remark}
		The well-known Mooney-Rivlin constitutive model
		\begin{equation}\label{eq:mooney}
		\Psi_{MR}(\vec{F},\vec{H},J) = \alpha\,\vec{F}:\vec{F} + \beta\,\vec{H}:\vec{H} + f(J),
		\end{equation}
		where $f(J) = -2\,\alpha\, \operatorname{ln}(J) - 4\,\beta \,J + \lambda/2\,(J-1)^2-3\,(\alpha+\beta)$ and $\alpha,\,\beta,\,\lambda$ are non-negative material parameters, is a suitable polyconvex constitutive model. In particular, $\Psi_{MR}$ is convex with respect to its 19 variables, i.e.\ the components of $\vec{F}$, $\vec{H}$ and $J$, which can be seen by using the Hessian of $\Psi_{MR}$, see \cite{bonet2015a}.
	\end{remark}
}
\BW{
	Let the boundary $\partial\Omega_0 = \Gamma^{\varphi}\cup\Gamma^{\sigma}$, such that $\Gamma^{\varphi}\cap\Gamma^{\sigma} = \emptyset$, be decomposed in Dirichlet boundary $\Gamma^{\varphi}$ with prescribed deformation~$\vec{\varphi}_{\Gamma}$ and Neumann boundary $\Gamma^{\sigma}$ with prescribed external surface stresses $\vec{T}_{ext}$.
	Denoting the external body forces by $\vec{B}_{ext}$, the external energy contributions are given by
	\begin{equation}\label{eq:ext_energies}
	\Pi^{ext} = 
	-\int\limits_{\Omega_0}\vec{B}_{ext}\cdot\vec{\varphi}\,\mathrm{d}V - \int
	\limits_{\Gamma^{\sigma}}\vec{T}_{ext}\cdot\vec{\varphi}\,\mathrm{d}A.
	\end{equation}	 
	Then, the principle of virtual work reads
\begin{equation}\label{eq:pdva}
	\delta\Pi^{int}(\vec{\varphi}) + \delta\Pi^{ext}(\vec{\varphi})=0.
\end{equation}
}


\review{
	Let $\TrialSpace$ be a suitable configuration function space, satisfying $J > 0$ in $\Omega_0$ and the Dirichlet boundary conditions $\vec{\varphi}|_{\Gamma^{\varphi}} = \vec{\varphi}_{\Gamma}$,
	and $\TestSpace$ be a suitable function space of kinematically admissible variations vanishing on~$\Gamma^{\varphi}$
	(cf.  Simo et al.\ \cite{simo1992h} and Marsden \& Hughes \cite{marsden1983}).
	Then, according to~\eqref{eq:pdva} and \eqref{eq:va_int}, the problem reads: find $\vec{\varphi}\in\TrialSpace$ such that for all $\delta\vec{\varphi}\in\TestSpace$ it holds
\begin{equation}\label{eq:pdvA}
\int\limits_{\Omega_0}\vec{P}:\nabla\delta\vec{\varphi} \d V - \int\limits_{\Omega_0}\vec{B}_{ext}\cdot\delta\vec{\varphi}\d V -
\int\limits_{\Gamma^{\sigma}}\vec{T}_{ext}\cdot\delta\vec{\varphi}\d A = 0.
\end{equation}
}

\paragraph{Second gradient material.}
Next we assume, that the second gradient of the deformation $\nabla\vec{F}$ with respect to the reference configuration can be taken into account as well, see, among many others, \cite{steinmann2013,mindlin1965a}. Here, we start with a most general definition of a second gradient strain energy function, defined by
\begin{equation}
\Psi := \Psi(\nabla\vec{F},\vec{F},\cof{\vec{F}},\det{\vec{F}}),
\end{equation}
such that the internal virtual work reads
\begin{equation}
\delta\Pi^{int}(\vec{\varphi}) = \int\limits_{\Omega_0}\vec{P}:\nabla\delta\vec{\varphi} + \mathfrak{P}\,\vdots\,\nabla^2\delta\vec{\varphi}\d V,
\end{equation}
where $\mathfrak{P}$ is a third order stress tensor, conjugated to $\nabla^2\delta\vec{\varphi}$.  To identify and collect the corresponding boundary conditions, we apply twice integration by parts and the divergence theorem yielding
\begin{equation}
\delta\Pi^{int}(\vec{\varphi}) = \int\limits_{\Omega_0}\nabla\cdot(\nabla\cdot\mathfrak{P}-\vec{P})\cdot\delta\vec{\varphi} \d V +
\int\limits_{\partial\Omega_0}\delta\vec{\varphi}\cdot(\vec{P}-\nabla\cdot\mathfrak{P})\,\vec{N} + \nabla\delta\vec{\varphi}:(\mathfrak{P}\cdot\vec{N})\d A.
\end{equation}
Using the orthogonal decomposition $\nabla_{\bot}\cdot(\bullet) = \nabla(\bullet):(\vec{N}\otimes\vec{N})$ and $\nabla_{\|}\cdot(\bullet) = \nabla(\bullet):(\ten{I}-\vec{N}\otimes\vec{N})$, we obtain after some further technical steps
\begin{equation}\label{eq:gradBoundary}
\begin{aligned}
\delta\Pi^{int}(\vec{\varphi}) =& \,\int\limits_{\Omega_0}\nabla\cdot(\nabla\cdot\mathfrak{P}-\vec{P})\cdot\delta\vec{\varphi} \d V +
\int\limits_{\partial\Omega_0}\delta\vec{\varphi}\cdot(\vec{P}-\nabla\cdot\mathfrak{P})\,\vec{N} \d A \\
&-\int\limits_{\partial\Omega_0}\left[\delta\vec{\varphi}\cdot(K\,(\mathfrak{P}\,\vec{N})\,\vec{N}+\nabla_{\|}\cdot(\mathfrak{P}\,\vec{N}))  - \nabla_{\bot}\delta\vec{\varphi}:\left(\mathfrak{P}\,\vec{N}\right)\right]\d A \\ 
&+\int\limits_{\partial^2\Omega_0}\delta\vec{\varphi}\cdot(\mathfrak{P}:(\hat{\vec{N}}\otimes\vec{N}))\d S,
\end{aligned}
\end{equation}
\review{for a sufficiently smooth $\Omega_0$,}
where $\hat{\vec{N}}$ is the normal to $\partial^2\Omega_0$ and the tangent to $\partial\Omega_0$.  \review{Note that $\partial^2\Omega_0$ is defined by the union of the boundary curves of the boundary surface patches and thus, $\hat{\vec{N}}$ can be defined differently from both adjacent surfaces, see Javili et al. \cite{steinmann2013} and the citations therein for details.}
Moreover, $K = -\nabla_{\|}\cdot\vec{N}$ is the curvature of the surface. \review{Equilibrating this result with the external contributions defined in \eqref{eq:energies} yields 
\begin{equation}\label{eq:secondGradient}
\begin{aligned}
\int\limits_{\Omega_0}\vec{P}:\nabla\delta\vec{\varphi}+\mathfrak{P}\,\vdots\,\nabla^2\delta\vec{\varphi}\d V - \int\limits_{\Omega_0}\vec{B}_{ext}\cdot\delta\vec{\varphi}\d V -
\int\limits_{\Gamma^{\sigma}}\vec{T}_{ext}\cdot\delta\vec{\varphi}\d A = 0.
\end{aligned}
\end{equation}
Note that additional external contributions related to the gradient on the boundary can be applied as well. In general, there are only a few examples for higher-order strain energy formulations known, we refer to dell'Isola et al. \cite{delIsola2015} where a Piola homogenization procedure is used to derive a suitable formulation.  We will show the specific construction of this type of gradient material in the context of fiber reinforcements in the further course of this paper, see Section~\ref{sec:condensation} for the final formulation and Remarks \ref{remark_6} for a detailed discussion on the arising third order tensor $\mathfrak{P}$. }

\subsection{Continuum degenerate beam formulation}
In this section, we degenerate the general continuum mechanical framework as introduced above to a beam formulation. As we intend to embed fibers with a length-to-diameter ratio of 20 as standard for e.g. fiber reinforced polymers, it is reasonable that we restrict the kinematics of the 3-dimensional continuum along the fiber direction to a beam-like kinematic. In particular, we use the theory of geometrically exact beams, also known as Cosserat beam, introduced in \cite{cosserat1909}, see also \cite{rubin2010}.

\paragraph{Beam kinematic.}
Let us consider a beam as a 3-dimensional body, occupying $\tilde{\Omega}_0\subset\mathbb{R}^3$ with the following kinematical ansatz for the position field in the reference configuration
\begin{equation}\label{eq:fib:position_ref}
\tilde{\vec{X}}(\theta^{\alpha},s) = \tilde{\vec{\varphi}}_0(s) + \theta^{\alpha}\,\vec{D}_{\alpha}(s),
\end{equation}
parametrized in terms of $s\in [0,\,L]$ along the center line of the beam with length $L$. Moreover, an orthonormal triad $[\vec{D}_1,\,\vec{D}_2,\,\vec{D}_3]$ with convective coordinates $(\theta^{1},\theta^{2},s)\in\tilde{\Omega}_0$, is introduced, where $\vec{D}_{\alpha}$, $\alpha=1,2$, span the cross-section plane of the beam. Note that we assume a straight initial beam throughout the paper for the sake of a clear presentation, i.e.,\ the orthonormal triad is constant along the beam axis.

The motion of the geometrically exact beam is the restricted position field
\begin{equation}\label{eq:fib:position}
\tilde{\vec{x}}(\theta^{\alpha},s) = \tilde{\vec{\varphi}}(s) + \theta^{\alpha}\,\vec{d}_{\alpha}(s).
\end{equation}
Here, the orthonormal triad $\vec{d}_i$ is related to the reference triad via the rotation tensor $\tilde{\vec{R}}\in SO(3)$, i.e., $\tilde{\vec{R}} = \vec{d}_i\otimes\vec{D}_i$. 
The deformation gradient now reads \cite{ortigosa2016}
\begin{equation}\label{def_grad_beam}
\tilde{\vec{F}} = \tilde{\vec{\varphi}}^{\prime}\otimes\vec{D}_3 + \vec{d}_{\alpha}\otimes\vec{D}_{\alpha} + \theta^{\alpha}\,\vec{d}_{\alpha}^{\prime}\otimes\vec{D}_3,
\end{equation}
where $(\bullet)^{\prime}$ represents the derivative with respect to $s$.  
\UK{In case of a straight initial beam, when $\vec{D}_{\alpha}^{\prime}\equiv 0$,} the last equation can be rewritten as follows
\begin{equation}\label{def_grad_beam_2}
\tilde{\vec{F}} = \tilde{\vec{R}}\,\left(\vec{\Gamma}\otimes\vec{D}_3 + \left[\vec{K}\right]_\times\,\theta^{\alpha}\,\vec{D}_{\alpha}\otimes\vec{D}_3+\vec{I}\right),
\end{equation}
using the strain measure (\review{cf.}\ \cite{betsch2002b})
\begin{equation}\label{eq:def_Gamma_K}
\vec{\Gamma} = \tilde{\vec{R}}\tp\,\tilde{\vec{\varphi}}^{\prime}-\vec{D}_3,\quad\left[\vec{K}\right]_\times = \tilde{\vec{R}}\tp\,\tilde{\vec{R}}^{\prime}.
\end{equation}
The strain measure $\vec{\Gamma}$ is known as the axial-shear strain vector, whereas the curvature represented by $\vec{K}$ is called the torsional-bending strain vector. We refer to \cite{eugster2014} for a detailed analysis of the contravariant components of the effective curvature. For the strain energy to be defined subsequently, it is useful to introduce the polar decomposition of the deformation gradient via
\begin{equation}\label{eq_polar}
\tilde{\vec{F}} = \tilde{\vec{R}}\,\tilde{\vec{U}}, \quad\tilde{\vec{U}} = \vec{I}+\vec{B}\otimes\vec{D}_3, \quad \vec{B} = \vec{\Gamma} + (\vec{K}\times\theta^{\alpha}\,\vec{D}_{\alpha}). 
\end{equation}
With regard to \eqref{eq:so3}, the rotation tensor can be defined in terms of a rotation vector $\vec{\phi}\in\mathbb{R}^3$ via the exponential map $\tilde{\vec{R}}(\vec{\phi}) = \exp{\left[\vec{\phi}\right]_\times}$. A closed form expression is given by Rodrigues formula
\begin{equation}
\tilde{\vec{R}}(\vec{\phi}) = \vec{I} + \frac{\operatorname{sin}\|\vec{\phi}\|}{\|\vec{\phi}\|}\,\left[\vec{\phi}\right]_\times + \frac{1}{2}\,\left(\frac{\operatorname{sin}\left(\|\vec{\phi}\|/2\right)}{\|\vec{\phi}\|/2}\right)^2\,\left[\vec{\phi}\right]_\times^2.
\end{equation}
According to~\cite{meier2019geometrically}, the relationship between the spatial variation of $\delta\vec{\phi}$ and $\delta\fib{\ten{R}}$ is given by $\delta\fib{\ten{R}} = \left[\delta\vec{\phi}\right]_\times\fib{\ten{R}}$. 

\UK{
Alternatively, the rotation tensor~$\tilde{\vec{R}}$ can be also parameterized in terms of unit quaternions $\mathfrak{q}\in \mathcal{S}^3 = \{(q_0,\vec{q})\;|\; q_0\in\mathbb{R},\;\vec{q}\in\mathbb{R}^3, \;q_0^2+\vec{q}\cdot\vec{q} = 1\}$, by the formula 
\begin{equation}\label{eq:EulerRodrigues}
\tilde{\vec{R}}(\mathfrak{q}) = (2q_0^2-1)\,\vec{I} + 2q_0\left[\vec{q}\right]_\times + 2\,\vec{q}\otimes\vec{q},
\end{equation}
called Euler-Rodrigues parametrization, see \cite{betsch2009c,marsden2003} for details. 
}

\paragraph{Strain energy for large deformation beam.}
With regard to \eqref{eq_polar}, we can write our strain energy function as $\Psi(\vec{F},\vec{H},J)\, \hat{=}\, \Psi(\vec{B}(\vec{\Gamma},\vec{K}))$. In \cite{ortigosa2016}, we have shown that the Mooney-Rivlin material model presented in \eqref{eq:mooney} can be rewritten after some technical calculations in terms of the beam strain measures as
\begin{equation}\label{eq:beamMR}
\begin{aligned}
\tilde{\Psi}_{MR}(\vec{B}(\vec{\Gamma},\vec{K})) = \int\limits_{A(s)}&\left[\alpha\,(\vec{B}\cdot\vec{B}+2\,\vec{B}\cdot\vec{D}_3 + 3)\right.\, \\ 
&+ \beta\,(2\,\vec{B}\cdot\vec{B}-(\vec{D}_3\times\vec{B})\cdot(\vec{D}_3\times\vec{B}) + 4\,\vec{B}\cdot\vec{D}_3 + 3)\\
&+\left.f(\vec{B}\cdot\vec{D}_3 + 1)\right]\d^2\theta,
\end{aligned}
\end{equation}
where $A(s)$ denotes the local cross section of the beam, such that the internal energy can be written as a line integral
\begin{equation}
\tilde{\Pi}^{int} = \int\limits_{\mathfrak{C}_0}\tilde{\Psi}_{MR}(\vec{B}(\vec{\Gamma},\vec{K})) \d s,
\end{equation}
\review{where $\mathfrak{C}_0$ denotes the centerline of the beam.}
\review{A common model} is written directly in terms of $\vec{\Gamma}$ and $\vec{K}$:
\begin{equation}\label{eq:simpleMat}
\tilde{\Psi}(\vec{\Gamma},\vec{K}) = \frac{1}{2}\,\vec{\Gamma}\cdot\left(\mathbb{K}_1\,\vec{\Gamma}\right) + \frac{1}{2}\,\vec{K}\cdot\left(\mathbb{K}_2\,\vec{K}\right),
\end{equation}
\review{where $\mathbb{K}_1$ and $\mathbb{K}_2$ are diagonal $3\times3$ matrices given in terms of the material constants and the cross-section geometrical parameters~\cite{steinbrecher2019c,Weeger2017}.}
The corresponding constitutive relations in the reference and actual configuration \review{read} 
\begin{equation}
\tilde{\vec{N}} = \frac{\partial\tilde{\Psi}}{\partial\vec{\Gamma}},\quad\tilde{\vec{M}} = \frac{\partial\tilde{\Psi}}{\partial\vec{K}},\quad
\fib{\vec{n}} = \tilde{\vec{R}}\,\fib{\vec{N}},\quad
\fib{\vec{m}} = \tilde{\vec{R}}\,\fib{\vec{M}}.
\end{equation}
\review{
	The virtual work of the internal energy contributions then writes:
	\begin{equation}
	\delta\tilde{\Pi}^{int} = \int\limits_{\mathfrak{C}_0}\left[\fib{\vec{n}}\cdot\delta\tilde{\vec{\varphi}}^{\prime} 
	- \left(\fib{\vec{\varphi}}^\prime \times \fib{\vec{n}}\right)\cdot\delta\vct{\phi}
	+ \fib{\vec{m}}\cdot\delta\vct{\phi}^\prime\right]\d s.
	\end{equation}
} 

Introducing external \review{distributed} forces $\bar{\tilde{\vec{n}}}$ and couples $\bar{\tilde{\vec{m}}}$ \review{as well as external endpoint forces~$\vec{n}_{ext}^e$ and couples~$\vec{m}_{ext}^e$}, where we assume the existence of an external energy
\begin{equation}
\tilde{\Pi}^{ext} = -\int\limits_{\mathfrak{C}_0}\left(\bar{\tilde{\vec{n}}}\cdot\tilde{\vec{\varphi}} + \bar{\tilde{\vec{m}}}\cdot\vct{\phi}\right)\d s
- \review{\EC{\vec{n}_{ext}^e\cdot\tilde{\vec{\varphi}} + \vec{m}_{ext}^e\cdot\vct{\phi}}}
,
\end{equation}
we obtain from integration by parts 
\review{the classical balance equations in the actual configuration, \review{cf.}\ Antman \cite{antman1995},}
\begin{equation}\label{eq:pdvaBeam}
\begin{aligned}
-\int\limits_{\mathfrak{C}_0}\delta\tilde{\vec{\varphi}}\cdot\left(\tilde{\vec{n}}^{\prime} + \bar{\tilde{\vec{n}}}\right) \d s&= 0,\\
-\int\limits_{\mathfrak{C}_0}\delta\vct{\phi}\cdot\left(\tilde{\vec{m}}^{\prime} + \tilde{\vec{\varphi}}^{\prime}\times\tilde{\vec{n}} + \bar{\tilde{\vec{m}}}\right) \d s&= 0,
\end{aligned}
\end{equation}
\review{for all $(\delta\tilde{\vec{\varphi}}, \delta\vct{\phi})\in \tilde{\TestSpace}$,  a suitable functional space of kinematically admissible variations,
with boundary conditions $\tilde{\vec{n}}|_{\Gamma^{0,L}}=\vec{n}_{ext}^e$ and $\tilde{\vec{m}}|_{\Gamma^{0,L}}=\vec{m}_{ext}^e$.
}

\paragraph{The method of weighted residual.}
To avoid shear locking, a mixed, Hu-Washizu type method is employed. Therefore, we introduce additional, independent fields for the resultant spatial contact force vector ${\mathfrak{n}}$ and contact couple vector ${\mathfrak{m}}$ as well as the spatial axial-shear strain vector ${\mathfrak{g}}$ and the torsional-bending strain vector ${\mathfrak{k}}$. Now we assume a strain energy function of the form
\begin{equation}
\tilde{\Psi}_{HW} = \review{\tilde{\Psi}(\tilde{\vec{R}}\tp{\mathfrak{g}},\tilde{\vec{R}}\tp{\mathfrak{k}})} + {\mathfrak{n}}\cdot\left(\vec{\gamma}-\mathfrak{g}\right) + {\mathfrak{m}}\cdot\left(\vec{k}-\mathfrak{k}\right)
\end{equation}
where 
$\vec{\gamma} = \tilde{\vec{R}}\,\vec{\Gamma}$ and $\vec{k} = \tilde{\vec{R}}\,\vec{K}$,
The principle of virtual work yields now
\begin{equation}\label{eq:pdvaHW}
\begin{aligned}
\int\limits_{\mathfrak{C}_0}&\left[-\,\delta\tilde{\vec{\varphi}}\cdot\left({\mathfrak{n}}^{\prime} + \bar{\tilde{\vec{n}}}\right) -
\delta\vec{\phi}\cdot\left({\mathfrak{m}}^{\prime} + \tilde{\vec{\varphi}}^{\prime}\times{\mathfrak{n}} + \bar{\tilde{\vec{m}}}\right) +
\delta{\mathfrak{g}}\cdot\left( \review{\frac{\partial\tilde{\Psi}(\tilde{\vec{R}}\tp{\mathfrak{g}},\tilde{\vec{R}}\tp{\mathfrak{k}})}{\partial{\mathfrak{g}}}} - {\mathfrak{n}} \right) \right.  \\
&\left.+\,\delta{\mathfrak{k}}\cdot\left(\review{\frac{\partial\tilde{\Psi}(\tilde{\vec{R}}\tp{\mathfrak{g}},\tilde{\vec{R}}\tp{\mathfrak{k}})}{\partial{\mathfrak{k}}}}-{\mathfrak{m}} \right) +
\delta{\mathfrak{n}}\cdot\left({\vec{\gamma}}-{\mathfrak{g}}\right) + \delta{\mathfrak{m}}\cdot\left(\vec{k}-{\mathfrak{k}}\right) \right]\d s = 0.
\end{aligned}
\end{equation}
Assuming that the last two variations with respect to $\delta{\mathfrak{n}}$ and $\delta{\mathfrak{m}}$ in \eqref{eq:pdvaHW} are fulfilled locally, we can rewrite the mixed formulation as follows
\begin{equation}\label{eq:pdvaHW2}
\begin{aligned}
\int\limits_{\mathfrak{C}_0}\biggl[-\,\delta\tilde{\vec{\varphi}}\cdot\left({\mathfrak{n}}^{\prime} + \bar{\tilde{\vec{n}}}\right) - 
\delta\vec{\phi}\cdot\left({\mathfrak{m}}^{\prime} + \tilde{\vec{\varphi}}^{\prime}\times{\mathfrak{n}} + \bar{\tilde{\vec{m}}}\right)\phantom{\d s}&\\
+\,\delta{\mathfrak{g}}\cdot\left( \tilde{\vec{R}}\,\frac{\partial\tilde{\Psi}(\vec{\Gamma},\vec{K})}{\partial\vec{\Gamma}}-{\mathfrak{n}} \right) + 
\delta{\mathfrak{k}}\cdot\left(\tilde{\vec{R}}\,\frac{\partial\tilde{\Psi}(\vec{\Gamma},\vec{K})}{\partial\vec{K}}-{\mathfrak{m}} \right)\biggl]\d s &= 0,
\end{aligned}
\end{equation}
\review{where $\vec{\Gamma}$ and $\vec{K}$ are defined in \eqref{eq:def_Gamma_K}.

}

\section{Main results: multidimensional coupling model}\label{sec:mainresults}

In this work, starting from a surface-to-volume coupling formulation for the matrix/beam system, we derive a reduced surrogate model where the new coupling \BW{constraints 
allow to transfer both the linear forces and the moments of the beam to the matrix.} 
The reduced model is based on the following assumptions:
\begin{itemize}
\item[A1. \textit{Overlapping domains}.] 
The matrix continuum is extended inside the beam domain.
This yields a modeling error due to additional stiffness, which depends on the fiber radius~$r$ and the ratio of the matrix stiffness to the fiber stiffness.

\item[A2. \textit{Form of the coupling force}.]
We make an assumption on the form of the coupling force (Lagrange multiplier), such that, applied to a beam cross-section, it presents the mean force, couple stresses, shear and dilatation forces.


\end{itemize}

\BW{	Estimation of the error due to the overlapping domains can be found, e.g., in~\cite{koppl2018mathematical} for the case of diffusion problem. Further information in the context of fluid-structure interaction problems can be obtained in, e.g., Liu et al. \cite{liu2007}.  In general,  it is possible to remove the additional matrix contributions using the chosen continuum degenerated beam formulation, as this allows to calculate the same type of stresses for the beam as given for the matrix material.  However, the kinematical assumption of a restricted cross sectional area within the beam will remain. From the application point of view, we are mainly interested in fiber reinforced polymers where the difference in the stiffness between fibers and matrix is quite high and the radius of the fiber is quite small, thus the modeling error in the assumption A1 can be neglected. }



\def\rodlength{4}
\def\rodradius{0.2}
\def\rodcolor{none}
\def\a{0.7}
\def\b{1}
\def\deformangle{45}
\def\mylinewidth{0.8pt}

\tikzset{pics/system/.style={code={
%

			\draw [line width=\mylinewidth] plot [smooth cycle, tension=.7] 
			coordinates {(-1,0.5) (0,-1.5) (\rodlength/2,-1.2) (\rodlength,-1.5) (\rodlength+1.5,-0.5) (\rodlength+1,1.5) (\rodlength-1,2.5) (1,1.5)};


			\draw [line width=\mylinewidth, color=black,fill=\rodcolor]
			plot[domain=90:270] 
			({\a*\rodradius*cos(\x)}, {\b*\rodradius*sin(\x)}) ;
			
			\draw [line width=\mylinewidth, dashed, color=black,fill=\rodcolor]
			plot[domain=-90:90] 
			({\a*\rodradius*cos(\x)}, {\b*\rodradius*sin(\x)}) ;
			\draw [line width=\mylinewidth, color=black,fill=\rodcolor]
			plot[domain=0:360] 
			({\rodlength + \a*\rodradius*cos(\x)}, {\b*\rodradius*sin(\x)}); 
			\draw [line width=\mylinewidth, color=black,fill=\rodcolor]
			( 0, \rodradius ) --  ( \rodlength, \rodradius ) 	;		
			
			\draw [line width=\mylinewidth, color=black,fill=\rodcolor]
			( 0, -\rodradius ) --  ( \rodlength, -\rodradius ) 	;

			\draw [line width=0.8pt, color=red,fill=\rodcolor]
			( 0, 0 ) --  ( \rodlength, 0 ) ;
%
			
}}}


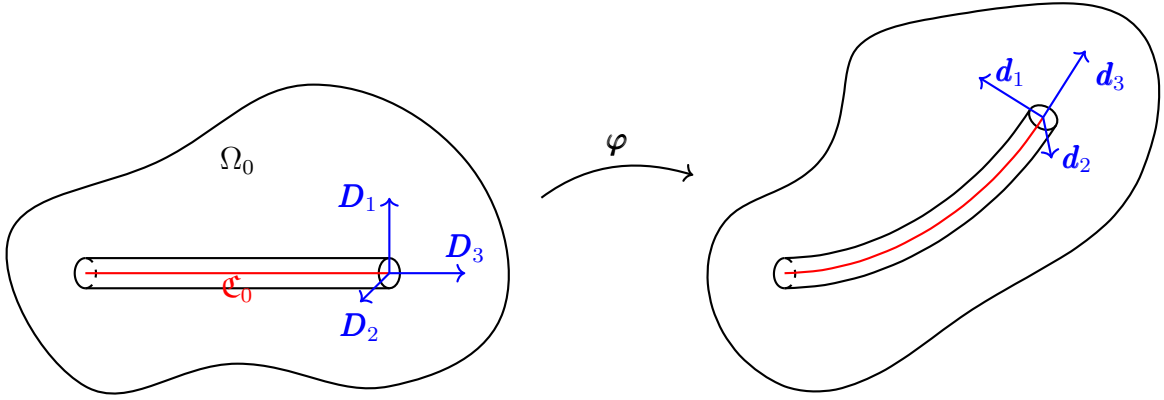
\begin{figure}[ht!]
	\centering\noindent%
\def\mycale{1}
\begin{tikzpicture}
[
scale=\mycale, every node/.style={scale=\mycale},
axis/.style={->,blue,thick},
vector/.style={-stealth,red,very thick},
vector guide/.style={dashed,red,thick}]

\pic (initial) {system} ;

\node[red] (c0) at (\rodlength/2,-\rodradius) {$\mathfrak{C}_0$};
\node (Omega) at (\rodlength/2,1.5) {$\Omega_0$};

\draw[axis] (\rodlength,0,0) -- (\rodlength+1,0,0) node[anchor=south]{$\vec{D}_3$};
\draw[axis] (\rodlength,0,0) -- (\rodlength,1,0) node[anchor=east]{$\vec{D}_1$};
\draw[axis] (\rodlength,0,0) -- (\rodlength,0,1) node[anchor=north]{$\vec{D}_2$};

\draw [line width=\mylinewidth, ->] (\rodlength+\rodlength/2,1) .. controls  (\rodlength+4/6*\rodlength,1.5) and (\rodlength+5/6*\rodlength,1.5) .. (\rodlength+\rodlength,1.3) node [midway, above] {$\vec{\varphi}$};

\def\myshift{2.3*\rodlength}
\def\myh{0.01}
\scoped{
	\pgfsetcurvilinearbeziercurve
	{\pgfpointxy{\myshift}{0}}
	{\pgfpointxy{\myshift+\rodlength*0.5}{0}}
	{\pgfpointxy{\myshift+\rodlength}{\rodlength*0.5}}
	{\pgfpointxy{\myshift+\rodlength}{\rodlength}}
	\pgftransformnonlinear{\pgfgetlastxy\x\y%
		\pgfpointcurvilinearbezierorthogonal{\x}{\y}}%
	\pic (deformed) {system};
	\coordinate (C) at (\rodlength, 0) {};
	\coordinate (C3) at (\rodlength-\myh, 0) {};
	\coordinate (C1) at (\rodlength, -\myh) {};
	\coordinate (C2) at (\rodlength, 0, -\myh) {};
	
}

\coordinate (d3) at ($ (C) - (C3)$);
\coordinate (d1) at ($ (C) - (C1)$);
\coordinate (d2) at ($ (C) - (C2)$);

\draw[axis] (C) -- ($ (C) + 1.0/\myh*(d3)$) node[anchor=north west]{$\vec{d}_3$};
\draw[axis] (C) -- ($ (C) + 1.0/\myh*(d1)$) node[midway, anchor=south]{$\vec{d}_1$};
\draw[axis] (C) -- ($ (C) + 1.0/\myh*(d2)$) node[anchor=west]{$\vec{d}_2$};

\end{tikzpicture}

\caption{
	Deformation of the matrix/beam coupled system.
}
\label{fig:matrix-fiber}
\end{figure}

Let us consider the matrix/beam system illustrated in Figure \ref{fig:matrix-fiber}.
The coupling conditions are defined via Lagrange multipliers on the beam mantle~$\Gamma_C$ and at the end faces $A_0$, $A_L$ (see Figure~\ref{fig:fiber3D_ref}).
The work associated to the coupling forces over the whole matrix/beam interface reads
\begin{equation}\label{eq:CouplingTerm}
\Pi_{\Gamma} = \int\limits_{\review{\partial\tilde{\Omega}_0}}\vec{\mu}\cdot(\vec{\varphi}-\fib{\vec{x}})\d A =\Pi_{C} + \Pi_{A},
\end{equation}
where
\begin{equation}\label{eq:coupling_term_init}
\Pi_{C} = \int\limits_{\mathfrak{C}_0} \int\limits_{C(s)}\vec{\mu}\cdot(\vec{\varphi}-\fib{\vec{x}})\d C\d s,
\qquad
\Pi_{A} = 
\EC{\int\limits_{A(s)}\vec{\mu}\cdot(\vec{\varphi}-\fib{\vec{x}})\d A}.
\end{equation}
Above, $A(s)$ and $C(s)$ are the beam cross-section and its boundary corresponding to the arc length $s$\review{, respectively.}
The area and the circumference of the cross-section are denoted as $\abs{A} := \pi\, r^2$ and $\abs{C} := 2\,\pi\, r$\review{, respectively.}
Here and further, for the sake of shortness, we use the following abuse of notations: for any function $f(\X)$ defined in~\review{$\Omega_0$},
$f(\theta^\alpha, s):=f(\fib{\X}(\theta^\alpha, s))$ means the restriction of $f$ onto the beam domain~\review{$\fib{\Omega}_0$} in convective coordinates; and $f(s):=f(\fib{\vec{\varphi}}_0(s))$, $s\in\mathfrak{C}_0$, is understood in the sense of the trace $f|_{\mathfrak{C}_0}$ of $f$ on the beam centerline.



\def\rodlength{120mm}
\def\rodradius{6mm}
\def\rodcolor{none}
\def\a{0.7}
\def\b{1}
\def\deformangle{45}
\def\mylinewidth{0.8pt}

%
%
%
%
%
%
%
%
			
%
			


\begin{figure}[ht!]
	\centering\noindent%
%
%
\def\mycale{0.9}
\begin{tikzpicture}
[
scale=\mycale, every node/.style={scale=\mycale},
axis/.style={->,blue,thick},
vector/.style={-stealth,red,very thick},
vector guide/.style={dashed,red,thick}]
\node (Cyl) [cylinder, aspect=2, shape border rotate=0, draw, minimum height=\rodlength, minimum width=2*\rodradius] {};
\draw [<->] ([yshift=-5pt]Cyl.before bottom) -- ([yshift=-5pt]Cyl.after top) node [midway, below] {$L$};

\draw[] ([xshift=-15pt]Cyl.bottom) node {$A_0$};
\draw[] ([xshift=15pt]Cyl.top) node {$A_L$};
\draw[] ([yshift=8pt]Cyl.north east) node {$\Gamma_C$};

\draw[dashed]
let \p1 = ($ (Cyl.after bottom) - (Cyl.before bottom) $),
\n1 = {0.5*veclen(\x1,\y1)-\pgflinewidth},
\p2 = ($ (Cyl.bottom) - (Cyl.after bottom)!.5!(Cyl.before bottom) $),
\n2 = {veclen(\x2,\y2)-\pgflinewidth}
in
([yshift=\pgflinewidth] Cyl.before bottom) arc [start angle=-90, end angle=90,
x radius=\n2, y radius=\n1];

\coordinate (C) at ([yshift=\pgflinewidth, xshift=-20mm] Cyl.center);

\draw[dotted] 
let \p1 = ($ (Cyl.after bottom) - (Cyl.before bottom) $),
\n1 = {0.5*veclen(\x1,\y1)-\pgflinewidth},
\p2 = ($ (Cyl.bottom) - (Cyl.after bottom)!.5!(Cyl.before bottom) $),
\n2 = {veclen(\x2,\y2)-\pgflinewidth}
in
([yshift=-\rodradius]C) arc [start angle=-90, end angle=270,
x radius=\n2, y radius=\n1]  ;
\node[above] at ([yshift=\rodradius]C)  {$C(s)$};;

\draw [<->] (C) -- ([yshift=-\rodradius]C) node [midway, left] {$r$};
\draw [->, very thick] (C) -- ([xshift=30pt]C) node [midway, above,yshift=-2pt] {$\vec{D}_3$};

\end{tikzpicture}

\caption{
	3D beam reference configuration.
}
\label{fig:fiber3D_ref}
\end{figure}
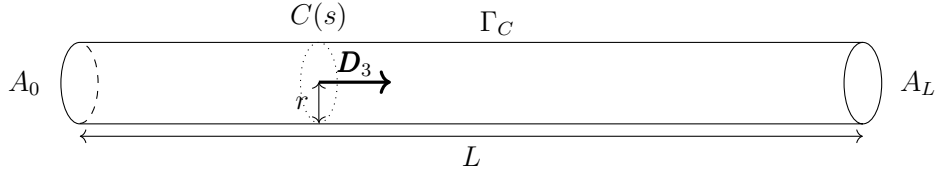

\subsection{Coupling term model}
The Lagrange multiplier~$\vec{\mu}$, physically interpreted as the interface load, is defined on the beam mantle $\Gamma_C = \lbrace (r\,\cos\theta, r\,\sin\theta, s) \; | \, s\in[0,L], \theta\in[0, 2\,\pi]\rbrace$, where $r$ is the beam radius and $\theta=\arctan(\theta^2/\theta^1)$ denotes the angle along the beam cross-section circumference.
Thus, $\vec{\mu}=\vec{\mu}(\theta, s)$ can be decomposed in Fourier series as periodic function of the angle~$\theta\in[0, 2\,\pi]$.
Here, we base our reduced model on truncation of this series after the first two terms.
That is, we assume the interface load in the form:
\begin{equation}\label{eq:LM_mu}
\vec{\mu}(\theta, s) = \bar{\vec{\mu}}(s) + \tilde{\vec{\mu}}_1(s)\,\cos\theta + \tilde{\vec{\mu}}_2(s)\,\sin\theta,
\end{equation}
where $\bar{\vec{\mu}}(s) = \frac{1}{2\,\pi}\int_{0}^{2\pi}\vec{\mu}(\theta, s)\d\theta$ is the mean cross-sectional load, and $\tilde{\vec{\mu}}_{\alpha}$, $\alpha=1,2$, are the first cosine and sine Fourier coefficients of $\vec{\mu}$\review{, respectively.}
These two terms compose the load fluctuation
\begin{equation}\label{eq:new:mu_tilde}
	\tilde{\vec{\mu}}(\theta,s)
	:= \ten{\Sigma}(s)\,\vec{N}(\theta)
	= \tilde{\vec{\mu}}_1(s)\,\cos\theta + \tilde{\vec{\mu}}_2(s)\,\sin\theta,
\end{equation}
where $\ten{\Sigma}=\tilde{\vec{\mu}}_{\alpha}\otimes\vec{D}_{\alpha}$ denotes the interface stress tensor, and $\vec{N}=\vec{D}_1\,\cos\theta + \vec{D}_2\,\sin\theta$ is the unit outer normal to the interface in the reference configuration. Note that $\int_{0}^{2\pi}\tilde{\vec{\mu}}(\theta,s)\d\theta = 0$.

While $\bar{\vec{\mu}}$ corresponds to the resulting force in the cross-section and, therefore, couples position of the beam center-line to the matrix, $\tilde{\vec{\mu}}$ contains the coupled stresses, shears and hydrostatic pressure, and is thus responsible for transition of the beam bending and torsion to the matrix.
In the following remark, we discuss in detail the structure of $\tilde{\vec{\mu}}$ and $\ten{\Sigma}$.

\begin{remark}\label{th:Sigma_structure}
	Due to the definition~\eqref{eq:new:mu_tilde}, the stress tensor~$\ten{\Sigma}$ has only six degrees of freedom.
	In particular, denoting $\ten{\mathcal{D}}:=\vec{D}_j\otimes\vec{e}_j$, we write $\ten{\Sigma}$ in the following form, involving coupled stresses, shears and dilatation:
	\begin{equation}\label{eq:new:LM_matrix}
	\ten{\Sigma} =
	\fib{\vec{R}}\,\ten{\mathcal{D}}
	\begin{bmatrix}
	\tilde{\mu}_{n,1} & \tilde{\mu}_{n,3} - \tilde{\mu}_{\tau,3} & 0\\
	\tilde{\mu}_{n,3} + \tilde{\mu}_{\tau,3} & \tilde{\mu}_{n,2} & 0\\
	-2\,\tilde{\mu}_{\tau,2} &  2\,\tilde{\mu}_{\tau,1} & 0
	\end{bmatrix}
	\ten{\mathcal{D}}\tp,
	\end{equation}
where $\tilde{\mu}_{n,1}$, $\tilde{\mu}_{n,2}$ and $\tilde{\mu}_{n,3}$ are associated to the dilatation and the shear of the cross-section, $\tilde{\mu}_{\tau,1}$ and $\tilde{\mu}_{\tau,2}$ to the bending, and $\tilde{\mu}_{\tau,3}$ to the torsion.	
Let $\tilde{\vec{\mu}}_n = \tilde{\mu}_{n,j}\,\vec{d}_j$
and $\tilde{\vec{\mu}}_\tau = \tilde{\mu}_{\tau,j}\,\vec{d}_j = \axl\left(\ten{\Sigma}\,\fib{\vec{R}}\tp\right)$.	
Then, we have
\begin{equation}\label{eq:PQ_decomp}
\ten{\Sigma} = 
\left(
	\ten{\mathcal{P}}_{\alpha}\,\tilde{\vec{\mu}}_\tau 
+ 	\ten{\mathcal{Q}}_{\alpha}\,\tilde{\vec{\mu}}_n
\right)\otimes\vec{D}_\alpha,
\end{equation}
where
\begin{equation}\label{eq:shortcuts}
\begin{aligned}
\ten{\mathcal{P}}_1 &= \phantom{+}\vec{d}_2\otimes\vec{d}_3 - 2\,\vec{d}_3\otimes\vec{d}_2, 
&\qquad
\ten{\mathcal{Q}}_1 &= \vec{d}_1\otimes\vec{d}_1 + \vec{d}_2\otimes\vec{d}_3,
\\ 
\ten{\mathcal{P}}_2 &= -\vec{d}_1\otimes\vec{d}_3 + 2\,\vec{d}_3\otimes\vec{d}_1,
&\qquad
\ten{\mathcal{Q}}_2 &= \vec{d}_2\otimes\vec{d}_2 + \vec{d}_1\otimes\vec{d}_3.
\end{aligned}
\end{equation}
Moreover, the tensor 
\begin{equation}\label{mu_sym}
	[\tilde{\vec{\mu}}_n]_s := \fib{\vec{R}}\tp\,\ten{\mathcal{Q}}_{\alpha}\,\tilde{\vec{\mu}}_n\otimes\vec{D}_\alpha = \tilde{\mu}_{n,1}\,\vec{D}_1\otimes\vec{D}_1 + \tilde{\mu}_{n,2}\,\vec{D}_2\otimes\vec{D}_2 + \tilde{\mu}_{n,3}\,\left(\vec{D}_1\otimes\vec{D}_2 + \vec{D}_2\otimes\vec{D}_1\right)
\end{equation}
is symmetric. 
\end{remark}

\review{
	Let us introduce the circular means
\begin{equation}\label{eq:phi_bar}
\BW{\vec{\varphi}_{c}} := \frac{1}{\abs{C}}\int\limits_{C(s)}\vec{\varphi}\d C
\qquad\text{and}\qquad
\BW{\ten{F}_{c}} := \BW{\frac{1}{\abs{A}}}\int\limits_{C(s)} \vec{\varphi} \otimes \vec{N}\d C,
\end{equation}
and Fourier cosine and sine coefficients 
\begin{equation}
\hat{\vec{\varphi}}_1 := \BW{\frac{1}{\abs{A}}}\int\limits_{C(s)}\vec{\varphi}\,\cos\theta\d C
\qquad\text{and}\qquad
\hat{\vec{\varphi}}_2 := \BW{\frac{1}{\abs{A}}}\int\limits_{C(s)}\vec{\varphi}\,\sin\theta\d C,
\end{equation}
respectively. Then, the following lemma gives the representation of the energy contribution due to the coupling on the beam mantle~$\Gamma_C$.
\begin{lemma}\label{th:CouplingTerm}
	If $\vec{\mu}$ is in form~\eqref{eq:LM_mu}, the coupling energy term~\eqref{eq:coupling_term_init}$_1$ writes
\begin{equation}\label{eq:coupling_term_lemma}
	\begin{aligned}
	\Pi_{C}
	&=
	\int\limits_{\mathfrak{C}_0} \vec{\mu}\cdot\left(\vec{\varphi}-\tilde{\vec{x}}\right) \abs{C}\d s
	\\
	&=
	\int\limits_{\mathfrak{C}_0} \bar{\vec{\mu}}\cdot\left(\BW{\vec{\varphi}_{c}}-\tilde{\vec{\varphi}}\right) \abs{C}\d s
	+
	\int\limits_{\mathfrak{C}_0} \ten{\Sigma}:\left(\BW{\ten{F}_{c}}-\fib{\vec{R}}\right) \BW{\abs{A}}\d s
	\\
	&=	
	\int\limits_{\mathfrak{C}_0} \bar{\vec{\mu}}\cdot\left(\BW{\vec{\varphi}_{c}}-\tilde{\vec{\varphi}}\right) \abs{C}\d s
	+
	\int\limits_{\mathfrak{C}_0} \tilde{\vec{\mu}}_{\alpha}\cdot\left(\hat{\vec{\varphi}}_{\alpha}-\vec{d}_{\alpha}\right) \BW{\abs{A}}\d s.
	\end{aligned}
	\end{equation}	
\end{lemma}
}

\begin{proof}
	Substitution of \eqref{eq:LM_mu} and \eqref{eq:new:mu_tilde} to \eqref{eq:coupling_term_init}$_1$ yields
	\begin{equation}\label{eq:coupling_term_proof}
	\Pi_{C}
	=
	\int\limits_{\mathfrak{C}_0}\int\limits_{C(s)}\underbrace{\left(\bar{\vec{\mu}} +  \ten{\Sigma}\,\vec{N}\right)}_{\vec{\mu}}
	\cdot\biggl(\vec{\varphi}-\underbrace{\left(\tilde{\vec{\varphi}}+\fib{\vec{R}}\,\vec{N}\,r\right)}_{\tilde{\vec{x}}}\biggr) \d C\d s
	.
	\end{equation}
	Given $\vec{N} = \vec{D}_{1}\cos\theta + \vec{D}_{2}\sin\theta$, we have $\int_{C(s)}\vec{N}\d C = \vec{0}$ and $\int_{C(s)}\vec{N}\otimes\vec{N}\d C = \pi\, r\,\vec{D}_{\alpha}\otimes\vec{D}_{\alpha}$.
	In particular, \BW{$\int_{C}\vec{N}\otimes\vec{N}\d C = r\int_{0}^{2\pi} \vec{N}\otimes\vec{N}\,\d\theta$ and}
	\begin{equation}
	\begin{split}	
	\int\limits_{0}^{2\pi} \vec{N}\otimes\vec{N}\,\d\theta
	=\,&\vec{D}_1\otimes\vec{D}_1\,\underbrace{\int\limits_{0}^{2\pi}\cos^2\theta \,\d\theta}_{=\pi}
	+\,
	\vec{D}_2\otimes\vec{D}_2\,\underbrace{\int\limits_{0}^{2\pi}\sin^2\theta\d\theta}_{=\pi} \\
	&+ \left[\vec{D}_1\otimes\vec{D}_2+\vec{D}_2\otimes\vec{D}_1\right]\,\underbrace{\int\limits_{0}^{2\pi}\sin\theta\,\cos\theta\,\d\theta}_{=0}
	= \pi\,\vec{D}_\alpha\otimes\vec{D}_\alpha.
	\end{split}
	\end{equation}
	Thus, taking into account that $\ten{\Sigma}\,\vec{D}_{\alpha}\otimes\vec{D}_{\alpha} = \ten{\Sigma}$, \eqref{eq:coupling_term_proof} writes
	\begin{equation}
	\begin{aligned}
	\Pi_{C}
	&=
	\int\limits_{\mathfrak{C}_0}\bar{\vec{\mu}}\cdot\int\limits_{C(s)}\left(\vec{\varphi}-\tilde{\vec{\varphi}}\right)\d C\d s
	+ 
	\int\limits_{\mathfrak{C}_0}\ten{\Sigma}:\int\limits_{C(s)}\left(\vec{\varphi}-\fib{\vec{R}}\,\vec{N}\,r\right)\otimes\vec{N} \d C\d s
	\\
	&=
	\int\limits_{\mathfrak{C}_0} \bar{\vec{\mu}}\cdot\left[\int\limits_{C(s)}\vec{\varphi}\d C-\tilde{\vec{\varphi}}\abs{C}\right] \d s
	+
	\int\limits_{\mathfrak{C}_0} \ten{\Sigma}:\left[\int\limits_{C(s)}\vec{\varphi}\otimes\vec{N}\d C - \pi \,r^2\,\fib{\vec{R}}\right]\d s
	.
	\end{aligned}	
	\end{equation}
	Hence if follows~\eqref{eq:coupling_term_lemma}$_2$.
	Finally, \eqref{eq:coupling_term_lemma}$_3$ is obtained by substitution of $\ten{\Sigma} = \tilde{\vec{\mu}}_{\alpha}\otimes\vec{D}_{\alpha}$ into \eqref{eq:coupling_term_lemma}$_2$.
\end{proof}

\BW{
	Thus, the virtual work associated to the kinematic constraints is obtained as variation of~\eqref{eq:coupling_term_lemma} in the following form:
	\begin{equation}\label{eq:VirtWork_coupling}
	\begin{aligned}
	\delta\Pi_C
	=&\,
	\int\limits_{\mathfrak{C}_0}\bigl[\delta\bar{\vec{\mu}}\cdot(\vec{\varphi}_{c} -\tilde{\vec{\varphi}})
	+
	\bar{\vec{\mu}}\cdot(\delta\vec{\varphi}_{c} -\delta\tilde{\vec{\varphi}})\bigr]\abs{C}\d s 
	\\
	&+\,
	\int\limits_{\mathfrak{C}_0}\left[
	\ten{\Sigma}:\big(\delta\ten{F}_{c}-\left[\delta\vec{\phi}\right]_\times\fib{\vec{R}}\big)
	+
	\delta\ten{\Sigma}:\big(\ten{F}_{c}-\fib{\vec{R}}\big)
	\right]\abs{A}\d s,
	\end{aligned}
	\end{equation}
	where we note that $\ten{\Sigma}:\left(\VPTen{\delta\vec{\phi}}\,\fib{\ten{R}}\right) = 2\,\tilde{\vec{\mu}}_\tau\cdot\delta\vec{\phi}$.
}


\BW{	
	Eventually, at the end faces, the associated coupling energy term can be written in the following form:
	\begin{equation}\label{eq:PiA_init}
	\Pi_{A}
	=
	\EC{\int_{A_s} \vec{\mu}\cdot\left(\vec{\varphi}-\tilde{\vec{x}}\right)\d A}
	=
	\EC{\bar{\vec{\mu}}_e\cdot\left(\vec{\varphi}_e-\tilde{\vec{\varphi}}\right)\abs{A_e} + \O(r^3)},
	\end{equation}
	where we denoted
	\begin{equation}\label{eq:face_avg}
	\bar{\vec{\mu}}_e := \frac{1}{\abs{A}}\int_{A_e}\vec{\mu}\d A,
	\qquad
	\vec{\varphi}_e := \frac{1}{\abs{A}}\int_{A_e}\vec{\varphi}\d A.
	\end{equation}
	We formalize the area factor $\abs{A_e}$ in~\eqref{eq:PiA_init} in such a way that it takes value of $\abs{A}=\pi\, r^2$ only in case when the associated face~$A_e$ is embedded into the matrix, otherwise $\abs{A_e}=0$.
	Under the assumption that the fiber is thin, we neglect the $\O(r^3)$-term.
	Then, the associated virtual work reads
	\begin{equation}\label{eq:VirtWork_coupling_A}
	\delta\Pi_A
	=\,
	\EC{\,\delta\bar{\vec{\mu}}_e\cdot(\vec{\varphi}_e -\tilde{\vec{\varphi}})\abs{A_e}
	+
	\bar{\vec{\mu}}_e\cdot(\delta\vec{\varphi}_e -\delta\tilde{\vec{\varphi}})\abs{A_e}\,}.
	\end{equation}
	In case when the face~$A_e$ is embedded into the matrix, this corresponds to Dirichlet boundary condition $\tilde{\vec{\varphi}} = \vec{\varphi}_e$ at the beam endpoint, imposed via Lagrange multiplier in $\mathfrak{n}|_{\Gamma^{0,L}} = \vec{n}_{ext}^{e} + \bar{\vec{\mu}}_e\abs{A_e}$.
	If the face~$A_e$ is not embedded into the matrix, we have only Neumann boundary condition $\mathfrak{n}|_{\Gamma^{0,L}}=\vec{n}_{ext}^{e}$ instead.	
	Note that taking into account higher order terms in~$r$ in~\eqref{eq:PiA_init}, one can consider additional Dirichlet boundary conditions for the beam directors.
}

\subsection{Total coupled system}
Finally, we obtain the principle of virtual work for the  coupled matrix/beam system
\begin{equation}
\review{
	\delta\Pi^{total}(\vec{\varphi},\fib{\vec{\varphi}},\fib{\ten{R}},\mathfrak{n},\mathfrak{m},\bar{\vec{\mu}},\bar{\vec{\mu}}_e,\tilde{\vec{\mu}}_\tau,\tilde{\vec{\mu}}_n)
}
=
\delta\Pi^{int} + \delta\Pi^{ext} + \delta\fib{\Pi}^{int} + \delta\fib{\Pi}^{ext} + \review{\delta\Pi_C + \delta\Pi_A} = 0,
\end{equation}
in terms of \eqref{eq:pdvA} and \eqref{eq:pdvaHW2} along with the virtual work of the coupling forces in \eqref{eq:VirtWork_coupling}. Regarding the matrix material, we obtain
\begin{multline}\label{eq:sys:matrix}
\int\limits_{\mtx{\Omega}_0} \left[\drv{\mtx{\Psi}}{\ten{\mtx{F}}}:\nabla\delta\vec{\mtx{\varphi}} - \vec{B}_{ext}\cdot\delta\vec{\varphi}\right] \d V	
-
\int\limits_{\Gamma^{\sigma}} \vec{T}_{ext}\cdot\delta\vec{\mtx{\varphi}} \d A
\\
+\BW{
	\EC{\,\bar{\vec{\mu}}_e\cdot\delta\vec{\varphi}_e\abs{A_e}\,}}
+
\int\limits_{\mathfrak{C}_0}\left[\vec{\bar{\mu}} \cdot \BW{\delta\vec{\mtx{\varphi}}_{c}\abs{C}}
+
\ten{\Sigma}\review{(\tilde{\vec{\mu}}_\tau,\tilde{\vec{\mu}}_n)}:\BW{\delta\ten{F}_{c}\abs{A}} \right] \d s
= 0,
\end{multline}
\review{with boundary conditions $\vec{\varphi}|_{\Gamma^{\varphi}} = \vec{\varphi}_{\Gamma}$ and $\delta\vec{\varphi}|_{\Gamma^{\varphi}} = \vec{0}$, and $\ten{\Sigma}(\tilde{\vec{\mu}}_\tau,\tilde{\vec{\mu}}_n)$ given in the form~\eqref{eq:new:LM_matrix}}.
\BW{The circular means $\delta\vec{\varphi}_{c}$ and $\delta\ten{F}_{c}$ are defined as in~\eqref{eq:phi_bar}, and $\delta\vec{\varphi}_e$ as in~\eqref{eq:face_avg}.}
The beam contributions  yield
\begin{align}
&-\int\limits_{\mathfrak{C}_0}\left(
{\mathfrak{n}}^{\prime} + \bar{\tilde{\vec{n}}}
+
\bar{\vec{\mu}}\abs{C}\right)\cdot\delta\fib{\vec{\varphi}}\d s = 0,  
\label{eq:sys:loads}
\\	
&-\int\limits_{\mathfrak{C}_0}\left({\mathfrak{m}}^{\prime} + \tilde{\vec{\varphi}}^{\prime}\times{\mathfrak{n}}
+
\bar{\tilde{\vec{m}}}
+
\BW{2}\,\tilde{\vec{\mu}}_\tau\,\BW{\abs{A}}\right)\cdot\delta\vec{\phi} \d s = 0,
\label{eq:sys:moments}
\end{align}
\review{ with endpoint conditions
\BW{$\abs{A_e}\left(\tilde{\vec{\varphi}}-\vec{\varphi}_e\right)|_{\Gamma^{0,L}}=0$ ($\abs{A_e}=0$ if the face is not embedded), $\mathfrak{n}|_{\Gamma^{0,L}}=\vec{n}_{ext}^{e}+\bar{\vec{\mu}}_e\abs{A_e}$ }
and $\mathfrak{m}|_{\Gamma^{0,L}}=\vec{m}_{ext}^{e}$,}
and supplemented by the weak constitutive equations
\begin{align}
&\int\limits_{\mathfrak{C}_0}\delta{\mathfrak{g}}\cdot\left(\fib{\ten{R}}\,\frac{\partial\tilde{\Psi}(\vec{\Gamma},\vec{K})}{\partial\vec{\Gamma}}-{\mathfrak{n}} \right)\d s = 0,
\label{eq:constitutive_n}
\\
&\int\limits_{\mathfrak{C}_0}\delta{\mathfrak{k}}\cdot\left( \fib{\ten{R}}\,\frac{\partial\tilde{\Psi}(\vec{\Gamma},\vec{K})}{\partial\vec{K}}-{\mathfrak{m}} \right)\,\text{d}s = 0,
\label{eq:constitutive_m}
\end{align}
as well as the coupling constraints
\begin{align}
\int\limits_{\mathfrak{C}_0}\delta\vec{\bar{\mu}} \cdot\left(\vec{\mtx{\varphi}}_c-\fib{\vec{\varphi}}\right)\abs{C}\d s &= 0,
\label{eq:sys:pi1}
\\
\int\limits_{\mathfrak{C}_0}\delta\ten{\Sigma}:\big(\vec{F}_{c}-\fib{\vec{R}}\big)\BW{\abs{A}}\d s &= 0.
\label{eq:sys:pi2}
\end{align}

\review{
\begin{remark}\label{re:shear}
	Note that the variation of $\ten{\Sigma} = \vec{\mu}_{\alpha}\otimes\vec{D}_{\alpha}$ has the form $\delta\ten{\Sigma} = \delta\vec{\mu}_{\alpha}\otimes\vec{D}_{\alpha}$.
	Thus, the constraints~\eqref{eq:sys:pi2} can be rewritten as follows:
	\begin{equation}\label{key6}
	\int\limits_{\mathfrak{C}_0}
	\big(\vec{F}_{c}\,\vec{D}_{1}-\vec{d}_{1}\big)\cdot\delta\tilde{\vec{\mu}}_1\BW{\abs{A}}\d s = 0,
	\qquad
	\int\limits_{\mathfrak{C}_0}
	\big(\vec{F}_{c}\,\vec{D}_{2}-\vec{d}_{2}\big)\cdot\delta\tilde{\vec{\mu}}_2\BW{\abs{A}}\d s = 0.
	\end{equation}
	Alternatively, in line with the representation~\eqref{eq:PQ_decomp}, let $\delta\ten{\Sigma} = \left(\ten{\mathcal{P}}_{\alpha}\,\delta\tilde{\vec{\mu}}_\tau + \ten{\mathcal{Q}}_{\alpha}\,\delta\tilde{\vec{\mu}}_n\right)\otimes\vec{D}_\alpha$.
	Then, the constraints~\eqref{eq:sys:pi2} rewrite as 
	\begin{align}
	\int\limits_{\mathfrak{C}_0}\left(
	\ten{\mathcal{P}}_{\alpha}\tp\,\left[\vec{F}_{c}\,\vec{D}_{\alpha}-\vec{d}_{\alpha} \right]\right)
	\cdot\delta\tilde{\vec{\mu}}_{\tau}\BW{\abs{A}}\d s& = 0,
	\label{eq:coup_cond1}
	\\
	\label{eq:coup_cond2}
	\int\limits_{\mathfrak{C}_0} \left(
	\ten{\mathcal{Q}}_{\alpha}\tp\,\left[\vec{F}_{c}\,\vec{D}_{\alpha}-\vec{d}_{\alpha} \right]\right)
	\cdot\delta\tilde{\vec{\mu}}_{n}\BW{\abs{A}}\d s &= 0,
	\end{align}
	\UK{such that the bending and torsion constraints~\eqref{eq:coup_cond1} are separated from the dilatation and shear constraints~\eqref{eq:coup_cond2}.}
	\UK{
		Let us remark that owing to the directors orthogonality, $\ten{\mathcal{P}}_{\alpha}\tp\vec{d}_{\alpha}=0$ in~\eqref{eq:coup_cond1}.
	Moreover, it turns out that
}
	the last condition does not involve the beam at all, restraining only the stretches and the shears of the matrix, which will be shown in Lemma~\ref{th:area_cond_F} below.
	
\end{remark}
}

\subsection{Static condensation}\label{sec:condensation}
The above system is large, as we have to deal with the degrees of freedom of the matrix material (three per node in the 3-dimensional continuum), and the 21 unknowns including 9 Lagrange multipliers per node along the beam center line.  Thus, we aim at a two step static condensation procedure: first, we condense ~\eqref{eq:sys:loads}-\eqref{eq:sys:moments} and eliminate the corresponding Lagrange multipliers in the continuous system. The remaining equations for the constraints and the constitutive laws for the beam are condensed in the discrete setting, such that finally only the matrix degrees of freedom remain and the beam is fully condensed.

\BW{
\begin{proposition}[\textbf{Condensed system}]\label{th:condensed_sys1}
The system~\eqref{eq:sys:matrix}-\eqref{eq:sys:pi2} can be formally reduced to the following system of the unknowns $\vec{\varphi},\fib{\vec{\varphi}},\fib{\ten{R}},\mathfrak{n},\mathfrak{m}, \bar{\vec{\mu}}_e$ and $\tilde{\vec{\mu}}_n$:
\begin{equation}\label{eq:condensedCompact1}
\begin{aligned}
\int\limits_{\mtx{\Omega}_0}\bigl(\vec{P}:\nabla\delta\vec{\mtx{\varphi}}\, -\, \vec{B}_{ext}\cdot\delta\vec{\varphi}\bigl) \d V 
-\int\limits_{\Gamma^{\sigma}} \vec{T}_{ext}\cdot\delta\vec{\mtx{\varphi}} \d A 
+ \EC{\bar{\vec{\mu}}_e\cdot\delta\vec{\varphi}_e\abs{A_e}}
&\\
+\,\int\limits_{\mathfrak{C}_0}\left[
	\left(-\frac{1}{2}\,\ten{\mathcal{P}}_{\alpha}\left({\mathfrak{m}}^{\prime} + \tilde{\vec{\varphi}}^{\prime}\times{\mathfrak{n}}+\bar{\tilde{\vec{m}}}\right)\otimes\vec{D}_{\alpha} + \ten{F}_{c}\,[\tilde{\vec{\mu}}_n]_s\,\abs{A}
\right):\delta\ten{F}_{c}
- (\mathfrak{n}^\prime + \bar{\tilde{\vec{n}}}) \cdot \delta\mtx{\vec{\varphi}}_{c}
\right]\d s
&= 0,
\\[4mm]
\int\limits_{\mathfrak{C}_0}\delta{\mathfrak{g}}\cdot\left(\fib{\ten{R}}\,\frac{\partial\tilde{\Psi}(\vec{\Gamma},\vec{K})}{\partial\vec{\Gamma}}-{\mathfrak{n}} \right)\d s+\int\limits_{\mathfrak{C}_0}\delta{\mathfrak{k}}\cdot\left( \fib{\ten{R}}\,\frac{\partial\tilde{\Psi}(\vec{\Gamma},\vec{K})}{\partial\vec{K}}-{\mathfrak{m}} \right)\,\d s &=0,
\\[4mm]
\int\limits_{\mathfrak{C}_0}\delta\vec{\bar{\mu}} \cdot\left(\vec{\mtx{\varphi}}_c-\fib{\vec{\varphi}}\right)\abs{C}\,\d s
+
\int\limits_{\mathfrak{C}_0}
\UK{\ten{\mathcal{P}}_{\alpha}\tp\,\vec{F}_{c}\,\vec{D}_{\alpha}} \cdot\delta\tilde{\vec{\mu}}_\tau\abs{A}\, \d s 
+\int\limits_{\mathfrak{C}_0}\left(\vec{F}_{c}\tp\,\vec{F}_{c}-\ten{I}\right):[\delta\tilde{\vec{\mu}}_n]_s\abs{A}\, \d s
&= 0,
\end{aligned}
\end{equation}
with boundary conditions $\vec{\varphi}|_{\Gamma^{\varphi}} = \vec{\varphi}_{\Gamma}$ and  $\delta\vec{\varphi}|_{\Gamma^{\varphi}} = \vec{0}$, and beam endpoint conditions $\abs{A_e}\left(\tilde{\vec{\varphi}}-\vec{\varphi}_e\right)|_{\Gamma^{0,L}}=0$ (where $\abs{A_e}=0$ if the face is not embedded), $\mathfrak{n}|_{\Gamma^{0,L}}=\vec{n}_{ext}^{e}+\bar{\vec{\mu}}_e\abs{A_e}$
and $\mathfrak{m}|_{\Gamma^{0,L}}=\vec{m}_{ext}^{e}$.
Recall that the operators~$\ten{\mathcal{P}}_{\alpha}$ and $[\;\cdot\;]_s$ are defined in Remark~\ref{th:Sigma_structure} \UK{and depend on $\fib{\ten{R}}$}.
\end{proposition}
}


For the proof of Proposition~\ref{th:condensed_sys1}, let us first consider
the two following lemmas.

\begin{lemma}\label{th:area_cond_F}
	Condition \eqref{eq:coup_cond2} can be rewritten as
	\begin{equation}\label{eq:area_cond_F}
	\int\limits_{\mathfrak{C}_0} \left(\vec{F}_{c}\tp\,\vec{F}_{c}-\ten{I}\right):[\delta\tilde{\vec{\mu}}_{n}]_s\, \BW{\abs{A}}\,\d s = 0,
	\end{equation}
	with
	$
	[\delta\tilde{\vec{\mu}}_{n}]_s =\delta\tilde{\mu}_{n,1}\,\vec{D}_1\otimes\vec{D}_1 + \delta\tilde{\mu}_{n,2}\,\vec{D}_2\otimes\vec{D}_2
	+ \delta\tilde{\mu}_{n,3}\,(\vec{D}_1\otimes\vec{D}_2 + \vec{D}_2\otimes\vec{D}_1)
	$.
\end{lemma}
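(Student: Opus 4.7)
I would prove Lemma~\ref{th:area_cond_F} by a direct component-wise expansion of the integrand of~\eqref{eq:coup_cond2}, followed by an identification of its solution set with that of~\eqref{eq:area_cond_F}. Writing $\delta\tilde{\vec{\mu}}_n=\delta\tilde{\mu}_{n,j}\,\vec{d}_j$ as in Remark~\ref{th:Sigma_structure} and using the identity $(\ten{\mathcal{Q}}_\alpha\tp\vec{v}_\alpha)\cdot\delta\tilde{\vec{\mu}}_n=\vec{v}_\alpha\cdot(\ten{\mathcal{Q}}_\alpha\,\delta\tilde{\vec{\mu}}_n)$ with $\vec{v}_\alpha=\vec{F}_c\vec{D}_\alpha-\vec{d}_\alpha$, the explicit forms~\eqref{eq:shortcuts} give $\ten{\mathcal{Q}}_1\,\delta\tilde{\vec{\mu}}_n=\delta\tilde{\mu}_{n,1}\vec{d}_1+\delta\tilde{\mu}_{n,3}\vec{d}_2$ and $\ten{\mathcal{Q}}_2\,\delta\tilde{\vec{\mu}}_n=\delta\tilde{\mu}_{n,3}\vec{d}_1+\delta\tilde{\mu}_{n,2}\vec{d}_2$. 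Combined with the orthonormality of $\{\vec{d}_i\}$, the integrand reduces to
\begin{equation*}
\delta\tilde{\mu}_{n,1}\bigl(\vec{d}_1\cdot\vec{F}_c\vec{D}_1-1\bigr)
+\delta\tilde{\mu}_{n,2}\bigl(\vec{d}_2\cdot\vec{F}_c\vec{D}_2-1\bigr)
+\delta\tilde{\mu}_{n,3}\bigl(\vec{d}_1\cdot\vec{F}_c\vec{D}_2+\vec{d}_2\cdot\vec{F}_c\vec{D}_1\bigr),
\end{equation*}
which, using $\vec{d}_i\cdot\vec{F}_c\vec{D}_j=[\fib{\vec{R}}\tp\vec{F}_c]_{ij}$ and the symmetric block structure~\eqref{nu_sym} of $[\delta\tilde{\vec{\mu}}_n]_s$, reorganizes into the double contraction $(\fib{\vec{R}}\tp\vec{F}_c-\ten{I}):[\delta\tilde{\vec{\mu}}_n]_s$.

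The second step is to show that this linear-in-$\vec{F}_c$ expression can be replaced by the quadratic form $\tfrac{1}{2}(\vec{F}_c\tp\vec{F}_c-\ten{I}):[\delta\tilde{\vec{\mu}}_n]_s$ as a variational constraint. Since the two tensorial expressions are not algebraically identical as polynomials in $\vec{F}_c$, I would invoke the companion constraint~\eqref{eq:coup_cond1}, whose strong form enforces $\vec{d}_3\cdot\vec{F}_c\vec{D}_\alpha=0$ and $\vec{d}_1\cdot\vec{F}_c\vec{D}_2=\vec{d}_2\cdot\vec{F}_c\vec{D}_1$. Writing $\vec{F}_c\vec{D}_\alpha=a_{\alpha\beta}\vec{d}_\beta$ with $a_{12}=a_{21}=:c$, $a_{11}=:a$ and $a_{22}=:b$, the linear constraint reads $a=b=1$, $c=0$, while the quadratic one reads $a^2+c^2=1$, $b^2+c^2=1$, $c(a+b)=0$.

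The main obstacle is then to identify the two solution sets. The quadratic system admits, beyond the desired $(a,b,c)=(1,1,0)$, the root $(-1,-1,0)$ and a one-parameter branch with $a=-b$ and $a^2+c^2=1$. The latter branch has cross-sectional determinant $ab-c^2=-1<0$, violating the orientation-preserving condition $J=\det\vec{F}_c>0$; the root $(-1,-1,0)$ is excluded by continuity with the undeformed reference configuration. Hence both~\eqref{eq:coup_cond2} and~\eqref{eq:area_cond_F} characterize the same admissible set of matrix deformation gradients on the beam centerline. The form~\eqref{eq:area_cond_F} is preferred in Theorem~\ref{th:condensed_sys} because, as observed in Remark~\ref{re:shear}, it is independent of the beam rotation~$\fib{\vec{R}}$, enabling the subsequent static condensation.
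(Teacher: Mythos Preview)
Your first step---expanding the integrand of~\eqref{eq:coup_cond2} to $(\fib{\ten{R}}\tp\ten{F}_c-\ten{I}):[\delta\tilde{\vec{\mu}}_n]_s$---is correct and coincides with the paper's equation~\eqref{eq:area_cond_F_proof}$_1$. From there, however, the paper takes a much shorter and purely algebraic route: it tests the \emph{full} constraint~\eqref{eq:sys:pi2} (not just the $\ten{\mathcal{Q}}$-part~\eqref{eq:coup_cond2}) with a second admissible choice $\delta\ten{\Sigma}=\ten{F}_c\,[\delta\tilde{\vec{\mu}}_n]_s$, yielding $(\ten{F}_c-\fib{\ten{R}})\tp\ten{F}_c:[\delta\tilde{\vec{\mu}}_n]_s=0$ in the weak sense. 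Adding this to your first identity and using $\ten{F}_c\tp\ten{F}_c-\ten{I}=\fib{\ten{R}}\tp(\ten{F}_c-\fib{\ten{R}})+(\ten{F}_c-\fib{\ten{R}})\tp\ten{F}_c$ together with the symmetry of $[\delta\tilde{\vec{\mu}}_n]_s$ gives~\eqref{eq:area_cond_F} directly. No strong-form analysis, no polynomial root classification, and no physical branch selection are needed.

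Your alternative---passing to the strong form, parametrising $\ten{F}_c\vec{D}_\alpha$ via \eqref{eq:coup_cond1}, and comparing the solution sets of the linear and quadratic systems---is a genuinely different argument. It does reach the same conclusion, but at the cost of an extra assumption: the root $(a,b,c)=(-1,-1,0)$ is \emph{not} excluded by $J>0$ (since the $2\times2$ cross-sectional block has determinant $ab-c^2=1>0$), so you must invoke continuity from the reference configuration to discard it. That is a reasonable physical selection, yet it lies outside the weak formulation itself and makes the equivalence conditional. The paper's test-function trick avoids this detour entirely and keeps the derivation at the variational level, which is precisely what is needed for the subsequent condensation in Theorem~\ref{th:condensed_sys}.
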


\begin{proof}
	Integrating in~\eqref{eq:sys:pi2} with $\delta\ten{\Sigma} = \fib{\ten{R}}\,[\delta\tilde{\vec{\mu}}_{n}]_s$ and $\delta\ten{\Sigma} = \ten{F}_{c}\,[\delta\tilde{\vec{\mu}}_{n}]_s$, and using symmetry of $[\delta\tilde{\vec{\mu}}_{n}]_s$, we obtain 
	\begin{equation}\label{eq:area_cond_F_proof}
	\int\limits_{\mathfrak{C}_0} \fib{\ten{R}}\tp\,(\ten{F}_{c}-\fib{\ten{R}}):[\delta\tilde{\vec{\mu}}_{n}]_s\,
	\BW{\abs{A}}\,\d s = 0,
	\qquad
	\int\limits_{\mathfrak{C}_0} (\ten{F}_{c}-\fib{\ten{R}})\tp\,\ten{F}_{c}:[\delta\tilde{\vec{\mu}}_{n}]_s\,
	\BW{\abs{A}}\,\d s = 0,
	\end{equation}
	respectively. Given $	\ten{F}_{c}\tp\,\ten{F}_{c}-\ten{I} = \fib{\ten{R}}\tp\,(\ten{F}_{c}-\fib{\ten{R}}) + (\ten{F}_{c}-\fib{\ten{R}})\tp\,\ten{F}_{c}$, summing up the integrals in~\eqref{eq:area_cond_F_proof} yields the statement.
\end{proof}

\begin{lemma}\label{th:R2F}
	The following equality holds:
	\begin{equation}\label{eq:R2F}
	\int\limits_{\mathfrak{C}_0}\ten{\mathcal{Q}}_{\alpha}\,\tilde{\vec{\mu}}_n
	\otimes\vec{D}_{\alpha}:\BW{\delta\ten{F}_{c}\abs{A}}\,\d s 
	=
	\int\limits_{\mathfrak{C}_0}\ten{F}_{c}\,[\tilde{\vec{\mu}}_{n}]_s:\BW{\delta\ten{F}_{c}\abs{A}}\,\d s 
	\end{equation}
\end{lemma}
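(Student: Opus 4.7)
The plan is to split the equality into a purely algebraic identity together with an invocation of the weak coupling constraint~\eqref{eq:sys:pi2}. First I would rewrite $\ten{\mathcal{Q}}_{\alpha}\tilde{\vec{\mu}}_n\otimes\vec{D}_{\alpha}$ as $\fib{\vec{R}}\,[\tilde{\vec{\mu}}_n]_s$, and then I would use the constraint with a specifically chosen test tensor to exchange $\fib{\vec{R}}$ for $\ten{F}_{c}$ inside the integral.

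The first step is a pointwise computation. Starting from the formulas~\eqref{eq:shortcuts} for $\ten{\mathcal{Q}}_1,\ten{\mathcal{Q}}_2$ together with $\tilde{\vec{\mu}}_n = \tilde{\mu}_{n,j}\vec{d}_j$, one gets
\[
\ten{\mathcal{Q}}_{\alpha}\tilde{\vec{\mu}}_n\otimes\vec{D}_{\alpha}
=
\tilde{\mu}_{n,1}\,\vec{d}_1\otimes\vec{D}_1 + \tilde{\mu}_{n,2}\,\vec{d}_2\otimes\vec{D}_2 + \tilde{\mu}_{n,3}\,(\vec{d}_1\otimes\vec{D}_2 + \vec{d}_2\otimes\vec{D}_1).
\]
Using $\vec{d}_i = \fib{\vec{R}}\vec{D}_i$ and comparing with~\eqref{nu_sym}, this factors as $\fib{\vec{R}}\,[\tilde{\vec{\mu}}_n]_s$. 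The claim of the lemma then reduces to showing
\[
\int_{\mathfrak{C}_0}(\ten{F}_{c} - \fib{\vec{R}})\,[\tilde{\vec{\mu}}_n]_s:\nabla\delta\vec{\varphi}\,\abs{C}\,\d s = 0.
\]

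For the second step, I would apply the constraint~\eqref{eq:sys:pi2} with the admissible test $\delta\ten{\Sigma} = \nabla\delta\vec{\varphi}\,[\tilde{\vec{\mu}}_n]_s$. Using the elementary identity $\ten{A}:\ten{B}\ten{C} = \ten{B}\tp \ten{A}:\ten{C}$ together with the symmetry of $[\tilde{\vec{\mu}}_n]_s$, the resulting integral equation transforms precisely into the vanishing integral displayed above. The main obstacle is to justify that $\nabla\delta\vec{\varphi}\,[\tilde{\vec{\mu}}_n]_s$ lies in the Lagrange-multiplier space described in Remark~\ref{th:Sigma_structure}. This follows because $[\tilde{\vec{\mu}}_n]_s$ has support only in the cross-sectional subspace $\vec{D}_\alpha\otimes\vec{D}_\beta$ with $\alpha,\beta\in\{1,2\}$, so $\nabla\delta\vec{\varphi}\,[\tilde{\vec{\mu}}_n]_s$ has no $\vec{D}_3$-column and is therefore of the admissible form $\vec{a}_\beta\otimes\vec{D}_\beta$; the regularity $\vec{D}_\alpha\cdot\nabla\delta\vec{\varphi}|_{\mathfrak{C}_0}\in\L{2}(\mathfrak{C}_0)$ is exactly the condition built into the test function space of Theorem~\ref{th:condensed_sys}.
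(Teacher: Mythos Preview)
Your proposal is correct and matches the paper's approach exactly: the paper's one-line proof also inserts $\delta\ten{\Sigma}=\nabla\delta\vec{\varphi}\,[\tilde{\vec{\mu}}_n]_s$ into~\eqref{eq:sys:pi2} and relies on the identity $\ten{\mathcal{Q}}_{\alpha}\tilde{\vec{\mu}}_n\otimes\vec{D}_{\alpha}=\fib{\vec{R}}\,[\tilde{\vec{\mu}}_n]_s$ from~\eqref{nu_sym}. Your additional check that this $\delta\ten{\Sigma}$ is admissible (no $\vec{D}_3$-column, correct regularity) is a useful detail that the paper leaves implicit.
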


\begin{proof}
	\review{Integration} in~\eqref{eq:sys:pi2} with $\delta\ten{\Sigma} = \BW{\delta\ten{F}_{c}}\,[\tilde{\vec{\mu}}_{n}]_s$, \review{using the definition~\eqref{mu_sym},} yields the statement.
\end{proof}


\BW{
\begin{proof}[Proof of Proposition~\ref{th:condensed_sys1}]	
Integrating~\eqref{eq:sys:loads} with the test functions~\BW{$\delta\fib{\vec{\varphi}} = \delta\vec{\varphi}_{c}$}, we sum it up with~\eqref{eq:sys:matrix} to obtain
	\begin{multline}\label{eq:new:sys:matrix_condense}
	\int\limits_{\mtx{\Omega}_0} \left(\drv{\mtx{\Psi}}{\ten{\mtx{F}}}:\nabla\delta\vec{\mtx{\varphi}}- \vec{B}_{ext}\cdot\delta\vec{\varphi}\right) \d V
	-
	\int\limits_{\Gamma^{\sigma}} \vec{T}_{ext}\cdot\delta\vec{\mtx{\varphi}} \d A
	\\
	+\EC{\bar{\vec{\mu}}_e\cdot\delta\vec{\varphi}_e\abs{A_e}}
	-\int\limits_{\mathfrak{C}_0}\left(\mathfrak{n}^\prime +
	\bar{\tilde{\vec{n}}} \right)\cdot \delta\mtx{\vec{\varphi}}_{c}\d s
	+
	\int\limits_{\mathfrak{C}_0}
	\ten{\Sigma}:\delta\ten{F}_{c}\abs{A}\d s
	= 0.
	\end{multline}
Thus, we have eliminated the set of Lagrange multipliers $\bar{\vec{\mu}}$. For the second set of Lagrange multipliers~$\ten{\Sigma}$, we integrate in~\eqref{eq:sys:moments} with the test functions~$\delta\vec{\phi}= \frac{1}{2}\,\ten{\mathcal{P}}_{\alpha}\tp\,\delta\ten{F}_{c} \,\vec{D}_{\alpha}$.
Taking the decomposition~\eqref{eq:PQ_decomp} and Lemma~\ref{th:R2F} into account,  it yields
	\begin{equation}\label{eq:expl_stress}
	-\int\limits_{\mathfrak{C}_0}\left[\frac{1}{2}\,\ten{\mathcal{P}}_{\alpha}
	\left({\mathfrak{m}}^{\prime} + \tilde{\vec{\varphi}}^{\prime}\times{\mathfrak{n}}+\bar{\tilde{\vec{m}}}\right)
	\otimes\vec{D}_{\alpha}
	+
	\left(\ten{\Sigma}
	-
	\ten{F}_{c}\,[\tilde{\vec{\mu}}_n]_s
	\right)	\abs{A}	
	\right]
	:
	\delta\ten{F}_{c}
	\d s
	= 0.
	\end{equation}
Then, summing it up with~\eqref{eq:new:sys:matrix_condense}, we obtain~\eqref{eq:condensedCompact1},
supplemented with constitutive equations~\eqref{eq:constitutive_n}-\eqref{eq:constitutive_m} and coupling constraints \eqref{eq:sys:pi1}, \eqref{eq:coup_cond1} and \eqref{eq:area_cond_F}.
\end{proof}

Let us remark that the existence of solution to the system~\eqref{eq:condensedCompact1} is an open question and beyond the scope of this work.
}

\subsection{Projection of the coupling constraints to the beam centerline}\label{sec:TaylorTruncation}

\BW{
	We obtained the coupled system of the 1D Cosserat beam and the 3D material matrix using the circular mean of the deformation~$\vec{\varphi}$ on the beam mantle $\Gamma_C$.
	Such approach is common for 1D-3D coupling models (see, e.g., \cite{d2008coupling,koppl2018mathematical}).
	Note that the trace of~$\vec{\varphi}\in\H{1}(\Omega_0)^3$ on the beam centerline~$\mathfrak{C}_0$ is not well-defined, while the circular mean $\vec{\varphi}_{c}\in\L{2}(\mathfrak{C}_0)^3$ according to the trace theorem.
	We point out that Galerkin $\Cn{1}$-conforming approximations allow us to deal with the traces of $\vec{\varphi}$ and $\nabla\vec{\varphi}$ on the beam centerline.
	In what follows, we assume the necessary regularity and make use of the following lemma exploiting formal Taylor expansion of $\vec{\varphi}$ around the centerline to approximate~$\vec{\varphi}_{c}$.

\begin{lemma}\label{th:expansion_in_r}
	Let $\vec{\varphi}$ be regular enough, such that the traces $\vec{\varphi}|_{\mathfrak{C}_0}(s)=\vec{\varphi}(\tilde{\vec{\varphi}}_0(s))$ and $\vec{F}|_{\mathfrak{C}_0}(s)=\vec{F}(\tilde{\vec{\varphi}}_0(s))$ are in~$\L{2}(\mathfrak{C}_0)$.
	Then, the circular means in~\eqref{eq:phi_bar} can be approximated as follows:
	\begin{equation}\label{eq:centerline_projection}
	\vec{\varphi}_{c} = \vec{\varphi}|_{\mathfrak{C}_0}	+ \O(r^2),
	\qquad
	\ten{F}_{c} = \ten{F}|_{\mathfrak{C}_0}\,\vec{D}_{\alpha}\otimes\vec{D}_{\alpha} + \O(r^2).	
	\end{equation}
\end{lemma}

\begin{proof}
	Using Fourier representation of $\vec{\varphi}$ in the plane containing $C(s)$ with the origin on the centerline and the formal power series of the exponential, we can write
	\begin{equation}\label{eq:FormalTaylor}
			\vec{\varphi}\bigr|_{C(s)} 
			= 
			\int\limits_{\R^2}\hat{\vec{\varphi}}(\vec{\omega},s)\, \exp{\i\,\vec{\omega}\cdot\vec{N}(\theta)\,r} \d\vec{\omega}
			=
			\int\limits_{\R^2}\hat{\vec{\varphi}}(\vec{\omega},s)\, \sum_{k=0}^{\infty}\frac{\left(\i\,\vec{\omega}\cdot\vec{N}(\theta)\,r\right)^k}{k!} \d\vec{\omega},
	\end{equation}	
	where $\vec{N}(\theta) = \vec{D}_{1}\cos\theta + \vec{D}_{2}\sin\theta$.
	Then, substituting this expansion to the formula~\eqref{eq:phi_bar}$_1$, we get rid of the term $k=1$ in the sum, owing to $\int_{C(s)}\vec{N}\d C = \vec{0}$, which leads to~\eqref{eq:centerline_projection}$_1$.
	
	Next, let us also remind that $\int_{C(s)}\vec{N}\otimes\vec{N}\d C = \pi\, r\,\vec{D}_{\alpha}\otimes\vec{D}_{\alpha}$.
	Hence, we have
	\begin{equation}\label{eq:Taylor_grad_k1}
	\begin{aligned}
		\int\limits_{C(s)}\!\left[\;\int\limits_{\R^2}\hat{\vec{\varphi}}(\vec{\omega},s)\, \,(\i\,\vec{\omega}\cdot\vec{N}\,r)\d\vec{\omega}\right] \otimes\vec{N} \d C
		&=
		\int\limits_{\R^2}\hat{\vec{\varphi}}(\vec{\omega},s)\otimes (\i\,\vec{\omega})\d\vec{\omega}\cdot r\int\limits_{C(s)}\vec{N}\otimes\vec{N}\d C 
		\\	
		&=
		\pi\, r^2\,\nabla\vec{\varphi}|_{\mathfrak{C}_0}\,\vec{D}_{\alpha}\otimes\vec{D}_{\alpha}.
	\end{aligned}		
	\end{equation}
	Moreover, it holds $
	\int_{0}^{2\pi}\cos^3\theta\d\theta = \int_{0}^{2\pi}(1-\sin^2\theta)\d\,(\sin\theta) = 0$ and $
	\int_{0}^{2\pi}\cos^2\theta\sin\theta\d\theta = -\int_{0}^{2\pi}\cos^2\theta\d\,(\cos\theta) = 0
	$.
	Analogously, $\int_{0}^{2\pi}\sin^3\theta\d\theta=\int_{0}^{2\pi}\sin^2\theta\cos\theta\d\theta=0$.
	Therefore, $\int_{C(s)}(\vec{\omega}\cdot\vec{N})^2\,\vec{N}\d C = \vec{0}$.
	Thus, substituting the expansion~\eqref{eq:FormalTaylor} to the formula~\eqref{eq:phi_bar}$_2$, we get rid of the terms $k=0$ and $k=2$ in the sum.
	And using~\eqref{eq:Taylor_grad_k1} for the term $k=1$, we obtain~\eqref{eq:centerline_projection}$_2$.
\end{proof}

Note that the $\O(r^2)$-terms in~\eqref{eq:centerline_projection} yield respectively $\O(r^3)$- and $\O(r^4)$-terms in~\eqref{eq:coupling_term_lemma}.
Under the assumption that the fiber is thin enough, such that the above terms are of order of the discretization error, we neglect them.
Thus, we have $\vec{\varphi}_{c} \approx \vec{\varphi}|_{\mathfrak{C}_0}$ and $\ten{\Sigma}:\ten{F}_{c} \approx \ten{\Sigma}:\ten{F}|_{\mathfrak{C}_0}$, owing to $\ten{\Sigma}\,\vec{D}_{\alpha}\otimes\vec{D}_{\alpha}=\ten{\Sigma}$.
Since $\vec{D}_3$-component of the gradient is never used in the constraints, in what follows, we directly replace $\vec{\varphi}_{c}$ and $\ten{F}_{c}$ with $\vec{\varphi}|_{\mathfrak{C}_0}$ and $\ten{F}|_{\mathfrak{C}_0}$, respectively in the system~\eqref{eq:condensedCompact1}, which results to a second gradient material model (see Proposition~\ref{th:condensed_sys} below).
Note that further consideration of higher order terms in the expansion~\eqref{eq:FormalTaylor} will lead to higher gradient models.
}

\begin{proposition}[Second gradient model]\label{th:condensed_sys}
	\review{Assuming necessary regularity, let us neglect $\O(r^2)$-terms in~\eqref{eq:centerline_projection}. Then, integrating by parts on the centerline~$\mathfrak{C}_0$ with the endpoint conditions, the system~\eqref{eq:condensedCompact1} can be written as the following second gradient model:}
	\begin{equation}\label{eq:condensedCompact}
	\begin{aligned}
	\int\limits_{\mtx{\Omega}_0}\bigl(\vec{P}:\nabla\delta\vec{\mtx{\varphi}}\, -\, \vec{B}_{ext}\cdot\delta\vec{\varphi}\bigl) \d V 
	-\int\limits_{\Gamma^{\sigma}} \vec{T}_{ext}\cdot\delta\vec{\mtx{\varphi}} \d A &\\
	+\,\int\limits_{\mathfrak{C}_0}\biggl(\mathfrak{P}\,\vdots\,\nabla^2\delta\vec{\varphi} + \bigl(\vec{P}^{\mathfrak{n}} + \vec{P}^{\mathfrak{m}} + \vec{P}^{g} + \ten{F}\,[\tilde{\vec{\mu}}_n]_s\,\abs{A}
	\bigr):\nabla\delta\vec{\varphi} 
	- \bar{\tilde{\vec{n}}} \cdot \delta\vec{\varphi}
	\biggl)\d s&\\
	-\EC{\vec{n}_{ext}^{e}\cdot\delta\vec{\varphi}	+	\frac{1}{2}\left(\ten{\mathcal{P}}_{\alpha}\,\vec{m}_{ext}^{e} \otimes\,\vec{D}_{\alpha}\right):\nabla\delta\vec{\varphi}}&= 0,
	\\[4mm]
	\int\limits_{\mathfrak{C}_0}\delta{\mathfrak{g}}\cdot\left(\fib{\ten{R}}\,\frac{\partial\tilde{\Psi}(\vec{\Gamma},\vec{K})}{\partial\vec{\Gamma}}-{\mathfrak{n}} \right)\d s+\int\limits_{\mathfrak{C}_0}\delta{\mathfrak{k}}\cdot\left( \fib{\ten{R}}\,\frac{\partial\tilde{\Psi}(\vec{\Gamma},\vec{K})}{\partial\vec{K}}-{\mathfrak{m}} \right)\,\d s &=0,
	\\[4mm]
	\int\limits_{\mathfrak{C}_0}\delta\vec{\bar{\mu}} \cdot\left(\vec{\mtx{\varphi}}-\fib{\vec{\varphi}}\right)\,\abs{C}\,\d s
	+
	\int\limits_{\mathfrak{C}_0}
	\UK{\ten{\mathcal{P}}_{\alpha}\tp\vec{F}\,\vec{D}_{\alpha}} 
	\cdot\delta\tilde{\vec{\mu}}_\tau\,\abs{A}\, \d s
	+\int\limits_{\mathfrak{C}_0} \left(\vec{F}\tp\vec{F}-\ten{I}\right):[\delta\tilde{\vec{\mu}}_n]_s\,\abs{A}\, \d s
	&= 0,
	\end{aligned}
	\end{equation}
	\review{with boundary conditions $\vec{\varphi}|_{\Gamma^{\varphi}} = \vec{\varphi}_{\Gamma}$ and  $\delta\vec{\varphi}|_{\Gamma^{\varphi}} = \vec{0}$,}
	where the corresponding stresses are defined as follows:
	\begin{align}
	\vec{P} &:= \drv{\mtx{\Psi}}{\ten{\mtx{F}}}, \\
	\vec{P}^{\mathfrak{n}} &:= \mathfrak{n}\otimes\vec{D}_3, \label{eq:Pn}\\
	\vec{P}^{\mathfrak{m}} &:= \frac{1}{2}\left[\ten{\mathcal{P}}_{\alpha}^\prime\,\mathfrak{m}		
	-\ten{\mathcal{P}}_{\alpha}\left(\tilde{\vec{\varphi}}^{\prime}\times{\mathfrak{n}} +\bar{\tilde{\vec{m}}}\right)
	\right]\otimes\vec{D}_{\alpha}, \label{eq:Pm}\\
	\vec{P}^{g} &:= \frac{1}{2}\,\ten{\mathcal{P}}_{\alpha}\,\mathfrak{m}\otimes\vec{D}_\alpha^\prime,
	\label{eq:Pg}
	\\
	\mathfrak{P} &:= \frac{1}{2}\,\ten{\mathcal{P}}_{\alpha}\,\mathfrak{m}\otimes\vec{D}_\alpha\otimes\vec{D}_3.\label{eq:thirdOrderDef}
	\end{align}
\end{proposition}

\UK{Note that after integration by parts, the end faces terms are gone owing to the projection on the centerline.}
For the remainder of this section, a series of remarks with further details on technical issues as well as the interpretation of the arising coupled stresses are provided.

\begin{remark}
	Note that the term $\vec{P}^{g}$ in~\eqref{eq:Pg} is due to the initial curvature of the beam.	
	In the case of a straight initial configuration, i.e.,\ when $\vec{D}_{\alpha}^\prime\equiv 0$, this term vanishes.	
	\UK{
		Moreover, we have in this case
		$\ten{\mathcal{P}}_{\alpha}^\prime = \VPTen{\vec{\kappa}}\ten{\mathcal{P}}_{\alpha} - \ten{\mathcal{P}}_{\alpha}\VPTen{\vec{\kappa}}$ with $\vec{\kappa} = \axl(\tilde{\ten{R}}^{\prime}\tilde{\ten{R}}\tp)$.
	Indeed, since we have $\vec{d}_i^{\prime} = \tilde{\vec{R}}^{\prime}\vec{D}_i = \tilde{\vec{R}}^{\prime}\tilde{\vec{R}}^T\vec{d}_i = \VPTen{\vec{\kappa}}$, direct differentiation in~\eqref{eq:shortcuts} results in
	\begin{equation}
	\begin{aligned}
	\ten{\mathcal{P}}_1^{\prime} 
	&=
	\left(\vec{d}_2^{\prime}\otimes\vec{d}_3 + \vec{d}_2\otimes\vec{d}_3^{\prime}\right) -2\left(\vec{d}_3^{\prime}\otimes\vec{d}_2 + \vec{d}_3\otimes\vec{d}_2^{\prime}\right)
	\\
	&= \left(\VPTen{\vec{\kappa}}\vec{d}_2\otimes\vec{d}_3 - \vec{d}_2\otimes\vec{d}_3\VPTen{\vec{\kappa}}\right) - 
	2\left(\VPTen{\vec{\kappa}}\vec{d}_3\otimes\vec{d}_2 - \vec{d}_3\otimes\vec{d}_2\VPTen{\vec{\kappa}}\right),\\
	&= \VPTen{\vec{\kappa}}\ten{\mathcal{P}}_1 - \ten{\mathcal{P}}_1\VPTen{\vec{\kappa}}.
	\end{aligned}
	\end{equation}
	For the second term $\ten{\mathcal{P}}_2^{\prime}$, we proceed analogously.
	}
\end{remark}

\UK{
\begin{remark}
	Let us note that
	\begin{equation}\label{key}
	\ten{\mathcal{P}}_{\alpha}\tp\vec{F}\,\vec{D}_{\alpha}
	= 
	\left[\vec{D}_2\fib{\ten{R}}\tp\vec{F}\vec{D}_1
	-
	\vec{D}_1\fib{\ten{R}}\tp\vec{F}\vec{D}_2\right]\,\vec{d}_3 
	-
	2\left[\vec{D}_3\fib{\ten{R}}\tp\vec{F}\vec{D}_1\right]\vec{d}_2
	+
	2\left[\vec{D}_3\fib{\ten{R}}\tp\vec{F}\vec{D}_2\right]\vec{d}_1.
	\end{equation}
	Thus, the constraints associated to~$\delta\tilde{\vec{\mu}}_\tau$ can be seen as a planar (restricted on the beam cross-section) version of the spherical (omni-directional) rotational constraint 
	\begin{equation}
		\axl(\fib{\ten{R}}\tp\mtx{\ten{F}})=0,
	\end{equation}
	which can be found, e.g., in~\cite{simo1992h}.
\end{remark}
}

\begin{remark}\label{remark_6}
	We now obtain in~\eqref{eq:condensedCompact} an explicit representation for the third order stress tensor introduced in~\eqref{eq:secondGradient}, which emanates from the condensation of the moment equation of the beams. Note that this term directly depends on the resultant torque~$\mathfrak{m}$, derived from, e.g.,\ \eqref{eq:simpleMat}, which is written in terms of the second and the polar areal moment,  respectively. This third order stress tensor is restricted to the line $\mathfrak{C}_0$ at the scale of the fiber, which is, in general,  in terms of [\textmu m] for short fiber reinforced polymers. For a more general strain gradient framework at a macroscopic scale, suitable homogenization procedures have to be applied.
\end{remark}

\section{Spatial discretization}\label{sec:discretization}
Concerning the spatial approximation, suitable isogeometric discretization schemes are employed for the different fields. In particular, a standard IGA approach is applied to the matrix as well as to the beam. 

\paragraph{Matrix material.}
A standard displacement-based finite element approach employs approximations of the deformation field $\vec{\varphi}$ and its variation of the form
\begin{equation}\label{eq:approxGeo}
\vec{\varphi}^{\mathrm{h}}=\sum\limits_{A\in\mathcal{I}}R^{A}\,\vec{q}_{A} \quad\text{and}\quad\delta\vec{\varphi}^{\mathrm{h}} =\sum\limits_{A\in\mathcal{I}}R^{A}\,\delta\vec{q}_{A},  
\end{equation}
respectively, where $\vec{q}_{A}\in\mathbb{R}^3$ and $\delta\vec{q}_{A}\in\mathbb{R}^3$. Here, \(R^A:\mathcal{B}_0^{\mathrm{h}}\rightarrow\mathbb{R}\) are NURBS based shape functions associated of order $p$ with control points \(A\in\mathcal{I}=\{1,\,\hdots,\,\mathfrak{M}\}\), where \(\mathfrak{M}\) denote the overall number of control points. The discrete form of \eqref{eq:pdvA} reads now
\begin{equation}\label{eq:pdvADis}
\delta\vec{q}_A\cdot\left(\,\int\limits_{\Omega_0}\vec{P}^{\mathrm{h}}\,\nabla R^A\d V - \int\limits_{\Omega_0}R^A\,\vec{B}^{\mathrm{h}}_{ext}\d V -
\int\limits_{\Gamma^{\sigma}}R^A\,\vec{T}^{\mathrm{h}}_{ext}\,\text{d}A \right)= 0,\quad\forall\,\delta\vec{q}_A,
\end{equation}
where $\vec{P}^{\mathrm{h}}$, $\vec{B}^{\mathrm{h}}_{ext}$ and $\vec{T}^{\mathrm{h}}_{ext}$ are the appropriately approximated Piola-Kirchhoff stress and the external contributions.

\paragraph{\CH{Beam}.}
For the beam, we require in a first step the approximations of the centerline and the rotations,
where we make use of quaternions $\mathfrak{q}$ for the parametrization \UK{of $\tilde{\vec{R}}$ as in~\eqref{eq:EulerRodrigues}}.
The approximation along the centerline is given by
\begin{equation}
\tilde{\vec{\varphi}}^{\mathrm{h}}=\sum\limits_{A\in\mathcal{J}}\tilde{R}^{A}\,\tilde{\vec{q}}_{A} \quad\text{and}\quad\mathfrak{q}^{\mathrm{h}}=\sum\limits_{A\in\mathcal{J}}\tilde{R}^{A}\,\mathfrak{q}_{A}.
\end{equation}
Here, $\tilde{R}^{A}:\mathfrak{C}_0^{\mathrm{h}}\rightarrow\mathbb{R}$ are 1-dimensional, NURBS based shape functions of order \review{$p_1$} with associated control points $A\in\mathcal{J} = \{1,\hdots,\mathfrak{O}\}$. For the mixed formulation as introduced before, additional approximations are required for the resultants
\begin{equation}
\mathfrak{n}^{\mathrm{h}}=\sum\limits_{A\in\mathcal{L}}\tilde{M}^{A}\,\mathfrak{n}_{A}\quad\text{and}\quad\mathfrak{m}^{\mathrm{h}}=\sum\limits_{A\in\mathcal{L}}\tilde{M}^{A}\,\mathfrak{m}_{A},
\end{equation}
where $\tilde{M}^{A}:\mathfrak{C}_0^{\mathrm{h}}\rightarrow\mathbb{R}$ are 1-dimensional, NURBS based shape functions of order \review{$p_2$} with associated control points $A\in\mathcal{L} = \{1,\hdots,\mathfrak{O}-1\}$. The test functions $\delta\tilde{\vec{\varphi}}^{\mathrm{h}}$ and $\delta\vec{\phi}^{\mathrm{h}}$ and in the context of the applied mixed method, $\delta\mathfrak{g}^{\mathrm{h}}$ and $\delta\mathfrak{k}^{\mathrm{h}}$ are discretized via a classical Bubnov-Galerkin approach. Thus, we obtain
\begin{equation}
\delta\tilde{\vec{\varphi}}^{\mathrm{h}} = \sum\limits_{A\in\mathcal{J}} \tilde{R}^{A}\,\delta\tilde{\vec{q}}_A\quad,\quad
\delta\vec{\phi}^{\mathrm{h}} = \sum\limits_{A\in\mathcal{J}} \tilde{R}^{A}\,\delta\vec{\phi}_A,
\end{equation}
and
\begin{equation}
\delta\mathfrak{g}^{\mathrm{h}} = \sum\limits_{A\in\mathcal{L}} \tilde{M}^{A}\,\delta\mathfrak{g}_A\quad,\quad
\delta\mathfrak{k}^{\mathrm{h}} = \sum\limits_{A\in\mathcal{L}} \tilde{M}^{A}\,\delta\mathfrak{k}_A.
\end{equation}
As the balance equations are condensed within the matrix equation, we will show the discrete contributions of the mixed beam formulation subsequently within the coupled system.

\paragraph{Coupling conditions.}
Next, we introduce suitable approximations for the variations of the Lagrange multipliers as
\begin{equation}\label{eq:sigmaDis}
\delta\bar{\vec{\mu}}^{\mathrm{h}} = \sum\limits_{A\in\mathcal{J}} \tilde{N}^{A}\,\delta\bar{\vec{\mu}}_A,
\end{equation}
along with the 6 variations 
\begin{equation}\label{eq:sigmaDis2}
\delta\tilde{\vec{\mu}}_{\tau}^{\mathrm{h}} = \sum\limits_{A\in\mathcal{J}}\tilde{N}^{A}\,\delta\tilde{\vec{\mu}}_{\tau,A},\quad \delta\tilde{\vec{\mu}}_n^{\mathrm{h}} = \sum\limits_{A\in\mathcal{L}} \tilde{N}^{A}\,\delta\tilde{\vec{\mu}}_{n,A},
\end{equation}
using the discretization of the beam, along which we define the Lagrange multipliers. Here, $\tilde{N}^{A}:\mathfrak{C}_0^{\mathrm{h}}\rightarrow\mathbb{R}$ are 1-dimensional, NURBS based shape functions
of order \review{$p_3$} with associated control points $A\in\mathcal{L} = \{1,\hdots,\mathfrak{O}-2\}$. Note that $\bar{\vec{\mu}}$ and $\tilde{\vec{\mu}}_{\tau}$ are already condensed and \CH{$\tilde{\vec{\mu}}_n^{\mathrm{h}} = \sum\limits_{A\in\mathcal{L}}\tilde{N}^{A}\,\tilde{\vec{\mu}}_{n,A}$} follows immediately from \eqref{eq:sigmaDis2}$_2$ for a Bubnov-Galerkin approach.

\begin{remark}
We have to note that the coupling condition in the form~\eqref{eq:coup_cond2} explicitly includes the rotation tensor~$\fib{\ten{R}}$.	Due to its orthogonality, this condition can be written only in terms of $\ten{F}$ (see Lemma~\ref{th:area_cond_F}). However, the quaternion approach,  which is free of singularities,  produces an error in the orthogonality of $\fib{\ten{R}}^{\mathrm{h}}$, as the unity constraints provide their own approximation error. Thus,~\eqref{eq:coup_cond2} couples $\mathfrak{q}^{\mathrm{h}}$ and $\ten{F}^{\mathrm{h}}$ providing an additional numerical error. These issues are avoided when using the condition in the form~\eqref{eq:area_cond_F}.
\end{remark}

\paragraph{Discrete system.}
The discretized condensed system as proposed in \eqref{eq:condensedCompact} reads
\begin{equation}\label{eq:red2}
\begin{aligned}
\int\limits_{\Omega_0}\vec{P}^{\mathrm{h}}\,\nabla R^A\,\text{d}V - \int\limits_{\Omega_0}R^A\,\vec{B}^{\mathrm{h}}_{ext}\,\text{d}V -
\int\limits_{\Gamma^{\sigma}}R^A\,\vec{T}^{\mathrm{h}}_{ext}\,\text{d}A &\\
+\int\limits_{\mathfrak{C}_0}\left(\mathfrak{P}^{\mathrm{h}}\,:\,\nabla^2R^A + 
\left(\vec{P}^{\mathfrak{n},{\mathrm{h}}}+\vec{P}^{\mathfrak{m},{\mathrm{h}}} + \vec{P}^{g,{\mathrm{h}}} + \BW{\vec{F}^{\mathrm{h}}\,[\tilde{\vec{\mu}}_n^{\mathrm{h}}]_s\,\abs{A^\mathrm{h}}}\right)\,\nabla R^A - 
R^A\,\bar{\tilde{\vec{n}}}^{\mathrm{h}} \right)\d s&\\
-\left[R^A\,\vec{n}_{ext}^{\CH{e,\mathrm{h}}} + \CH{\frac{1}{2}}\,\left(\ten{\mathcal{P}}^{\mathrm{h}}_{\alpha}\,\vec{m}_{ext}^{\CH{e,\mathrm{h}}} \otimes\,\vec{D}^{\mathrm{h}}_{\alpha}\right)\,\nabla R^A\right]\biggl|_0^L &= \vec{0},
\end{aligned}
\end{equation}
evaluated at every node $A$. Additionally, we require
\begin{equation}\label{eq:beamDis2}
\begin{aligned}
\vec{0} &= \int\limits_{\mathfrak{C}_0}
\tilde{M}^{A}\left[\tilde{\vec{R}}^{\mathrm{h}}\,\frac{\partial\tilde{\Psi}}{\partial\vec{\Gamma}}\left(\vec{\Gamma}^{\mathrm{h}},\vec{K}^{\mathrm{h}}\right)-\mathfrak{n}^{\mathrm{h}}\right]\d s,\\
\vec{0} &= \int\limits_{\mathfrak{C}_0}
\tilde{M}^{A}\left[\tilde{\vec{R}}^{\mathrm{h}}\,\frac{\partial\tilde{\Psi}}{\partial\vec{K}}\left(\vec{\Gamma}^{\mathrm{h}},\vec{K}^{\mathrm{h}}\right)-\mathfrak{m}^{\mathrm{h}}\right]\d s ,\\
0 &= \int\limits_{\mathfrak{C}_0}
\tilde{R}^{A}\left[\mathfrak{q}^{\mathrm{h}}\cdot\mathfrak{q}^{\mathrm{h}} - 1\right]\d s\\
\end{aligned}
\end{equation}
along with 
\begin{equation}\label{eq:disc:constraints}
\begin{aligned}
\vec{0} &= \int\limits_{\mathfrak{C}_0}
\tilde{N}^{A}\left[\vec{\varphi}^{\mathrm{h}} - \tilde{\vec{\varphi}}^{\mathrm{h}}\right]\,\CH{\abs{C^\mathrm{h}}}\,\d s,\\
\vec{0} &= \int\limits_{\mathfrak{C}_0}
\tilde{N}^{A}\,\ten{\mathcal{P}}\CH{\tp}^{,\mathrm{h}}_{\alpha}
\UK{\vec{F}^{\mathrm{h}}\,\vec{D}^{\mathrm{h}}_{\alpha}}
\,\BW{\abs{A^\mathrm{h}}}\,\d s,\\
\vec{0} &= \int\limits_{\mathfrak{C}_0}
\tilde{N}^{A}\,\ten{\mathcal{Q}}_{\alpha}^{\mathrm{T,h}}\,\tilde{\vec{R}}^\mathrm{h}\left[\vec{F}^{\mathrm{T,h}}\,\vec{F}^{\mathrm{h}}-\ten{I}\right]\,\vec{D}^\mathrm{h}_{\alpha}\,\BW{\abs{A^\mathrm{h}}}\,\d s,
\end{aligned}
\end{equation}
Note that the discrete values of $\mathfrak{P}^{\mathrm{h}}$, $\vec{P}^{\mathfrak{n},{\mathrm{h}}}$ and $\vec{P}^{\mathfrak{m},{\mathrm{h}}}$, are given as the discrete counterparts of \eqref{eq:thirdOrderDef}, \eqref{eq:Pn} and \eqref{eq:Pm}. Moreover, the unity constraint of the quaternions is evaluated in an integral sense,  i.e.,\ this is not necessarily fulfilled at each Gauss point where we use them to construct the rotation matrix.

\paragraph{Discrete reduction scheme.}
Within the Newton-Raphson iteration, we have to solve at every step

\begin{equation}\label{eq:newton}
\begin{bmatrix}
\vec{K}_{\vec{\varphi}\vec{\varphi}} & \vec{K}_{\vec{\varphi}\tilde{\vec{\varphi}}} & 
\vec{K}_{\vec{\varphi}\mathfrak{n}} & \vec{K}_{\vec{\varphi}\mathfrak{m}} & 
\vec{K}_{\vec{\varphi}\mathfrak{q}} & \vec{K}_{\vec{\varphi}\tilde{\vec{\mu}}_n} \\
\vec{K}_{\tilde{\vec{\varphi}}\vec{\varphi}} & \ten{M}_{\tilde{\vec{\varphi}}\tilde{\vec{\varphi}}} & \vec{0} & \vec{0} & \vec{0} & \vec{0} \\
\vec{0} & \vec{K}_{\mathfrak{n}\tilde{\vec{\varphi}}} &
\ten{M}_{\mathfrak{n}\mathfrak{n}} & \vec{0} & \vec{K}_{\mathfrak{n}\mathfrak{q}} & \vec{0} \\
\vec{0} & \vec{0} & \vec{0} & \ten{M}_{\mathfrak{m}\mathfrak{m}} & \vec{K}_{\mathfrak{m}\mathfrak{q}} & \vec{0} \\
\vec{K}_{\mathfrak{q}\vec{\varphi}} & \vec{0} & \vec{0} & \vec{0} & \vec{K}_{\mathfrak{q}\mathfrak{q}} & \vec{0}\\
\vec{K}_{\tilde{\vec{\mu}}_n\vec{\varphi}} & \vec{0} & \vec{0} & \vec{0} &\vec{K}_{\tilde{\vec{\mu}}_n\mathfrak{q}} & \vec{0}
\end{bmatrix}
\begin{bmatrix}
\Delta\vec{\varphi}\\
\Delta\tilde{\vec{\varphi}}\\
\Delta\mathfrak{n}\\
\Delta\mathfrak{m}\\
\Delta\mathfrak{q}\\
\Delta\tilde{\vec{\mu}}_n
\end{bmatrix} =
\begin{bmatrix}
\vec{R}_{\vec{\varphi}}\\
\vec{R}_{\tilde{\vec{\varphi}}}\\
\vec{R}_{\mathfrak{n}}\\
\vec{R}_{\mathfrak{m}}\\
\vec{R}_{\mathfrak{q}}\\
\vec{R}_{\tilde{\vec{\mu}}_n}
\end{bmatrix},
\end{equation}
where we have made use of \eqref{eq:simpleMat}. Here, $\vec{R}_{\vec{\varphi}}$ correlates to \eqref{eq:red2}, $\vec{R}_{\tilde{\vec{\varphi}}}$ to \eqref{eq:disc:constraints}$_1$, $\vec{R}_{\mathfrak{n}}$ to \eqref{eq:beamDis2}$_1$, $\vec{R}_{\mathfrak{m}}$ to \eqref{eq:beamDis2}$_2$, $\vec{R}_{\mathfrak{q}}$ and $\vec{R}_{\mathfrak{p}}$ to \eqref{eq:beamDis2}$_3$, \eqref{eq:disc:constraints}$_2$ and \eqref{eq:disc:constraints}$_3$, respectively. The matrices on the diagonal with components
\begin{equation}
[\ten{M}_{\tilde{\vec{\varphi}}\tilde{\vec{\varphi}}}]^{AB} = - \int\limits_{\mathfrak{C}_0}\tilde{N}^{A}\tilde{R}^{B}\d s,\quad\text{and}\quad
[\ten{M}_{\mathfrak{n}\mathfrak{n}}]^{AB} = [\ten{M}_{\mathfrak{m}\mathfrak{m}}]^{AB} = - \int\limits_{\mathfrak{C}_0}\tilde{M}^{A}\tilde{M}^{B}\d s
\end{equation}
are invertible, as long as $\ten{M}_{\tilde{\vec{\varphi}}\tilde{\vec{\varphi}}}$ is quadratic and invertible, depending on the order of $\delta\bar{\vec{\mu}}$.  This is obvious for $\tilde{N}^A = \tilde{R}^A$, for all other cases the invertibility has to be shown for the specific case.  Different choices are compared in the subsequently following numerical examples. With this, we can solve the second, third and fourth line in \eqref{eq:newton} via
\begin{equation}\label{eq:reduction1}
\begin{aligned}
\Delta\tilde{\vec{\varphi}} &= \ten{M}_{\tilde{\vec{\varphi}}\tilde{\vec{\varphi}}}^{-1}\left(\vec{R}_{\tilde{\vec{\varphi}}}-\vec{K}_{\tilde{\vec{\varphi}}\vec{\varphi}}\,\Delta\vec{\varphi}\right),\\
\Delta\mathfrak{n} &= \ten{M}_{\mathfrak{n}\mathfrak{n}}^{-1}\left(\vec{R}_{\mathfrak{n}}-\vec{K}_{\mathfrak{n}\tilde{\vec{\varphi}}}\,\Delta\tilde{\vec{\varphi}}-\vec{K}_{\mathfrak{n}\mathfrak{q}}\,\Delta\mathfrak{q}\right),\\
\Delta\mathfrak{m} &= \ten{M}_{\mathfrak{m}\mathfrak{m}}^{-1}\left(\vec{R}_{\mathfrak{m}}-\vec{K}_{\mathfrak{m}\mathfrak{q}}\,\Delta\mathfrak{q}\right),
\end{aligned}
\end{equation}
and obtain
\begin{equation}\label{eq:newton3}
\begin{bmatrix}
\bar{\vec{K}}_{\vec{\varphi}\vec{\varphi}} & 
\bar{\vec{K}}_{\vec{\varphi}\mathfrak{q}} & 
\vec{K}_{\vec{\varphi}\tilde{\vec{\mu}}_n} \\
\vec{K}_{\mathfrak{q}\vec{\varphi}} & \vec{K}_{\mathfrak{q}\mathfrak{q}} & \vec{0} \\
\vec{K}_{\tilde{\vec{\mu}}_n\vec{\varphi}} &\vec{K}_{\tilde{\vec{\mu}}_n\mathfrak{q}}  & \vec{0}
\end{bmatrix}
\begin{bmatrix}
\Delta\vec{\varphi}\\
\Delta\mathfrak{q}\\
\Delta\tilde{\vec{\mu}}_n
\end{bmatrix} =
\begin{bmatrix}
\bar{\vec{R}}_{\vec{\varphi}}\\
\vec{R}_{\mathfrak{q}}\\
\vec{R}_{\tilde{\vec{\mu}}_n}
\end{bmatrix},
\end{equation}
where 
\begin{equation}
\begin{aligned}
\bar{\vec{K}}_{\vec{\varphi}\vec{\varphi}} &= \vec{K}_{\vec{\varphi}\vec{\varphi}} - (\vec{K}_{\vec{\varphi}\tilde{\vec{\varphi}}} -
\vec{K}_{\vec{\varphi}\mathfrak{n}}\,\ten{M}_{\mathfrak{n}\mathfrak{n}}^{-1}\,\vec{K}_{\mathfrak{n}\tilde{\vec{\varphi}}})\,
\ten{M}_{\tilde{\vec{\varphi}}\tilde{\vec{\varphi}}}^{-1}\,\vec{K}_{\tilde{\vec{\varphi}}\vec{\varphi}},\\
\bar{\vec{K}}_{\vec{\varphi}\mathfrak{q}} &= \vec{K}_{\vec{\varphi}\mathfrak{q}} -
\vec{K}_{\vec{\varphi}\mathfrak{n}}\,\ten{M}_{\mathfrak{n}\mathfrak{n}}^{-1}\,\vec{K}_{\mathfrak{n}\mathfrak{q}} -
\vec{K}_{\vec{\varphi}\mathfrak{m}}\,\ten{M}_{\mathfrak{m}\mathfrak{m}}^{-1}\,\vec{K}_{\mathfrak{m}\mathfrak{q}},
\end{aligned}
\end{equation}
and the modified residual vector 
\begin{equation}
\bar{\vec{R}}_{\vec{\varphi}} = \vec{R}_{\vec{\varphi}} -
(\vec{K}_{\vec{\varphi}\tilde{\vec{\varphi}}}-\vec{K}_{\vec{\varphi}\mathfrak{n}}\,\ten{M}_{\mathfrak{n}\mathfrak{n}}^{-1}\,\vec{K}_{\mathfrak{n}\tilde{\vec{\varphi}}})\,
\ten{M}_{\tilde{\vec{\varphi}}\tilde{\vec{\varphi}}}^{-1}\,\vec{R}_{\tilde{\vec{\varphi}}} -
\vec{K}_{\vec{\varphi}\mathfrak{n}}\,\ten{M}_{\mathfrak{n}\mathfrak{n}}^{-1}\,\vec{R}_{\mathfrak{n}}  -
\vec{K}_{\vec{\varphi}\mathfrak{m}}\,\ten{M}_{\mathfrak{m}\mathfrak{m}}^{-1}\,\vec{R}_{\mathfrak{m}}.
\end{equation}

\begin{remark}
Without further proof we remark here, that the quaternions can be removed from the system using \eqref{eq:disc:constraints}$_2$ with regard to \eqref{eq:disc:constraints}$_3$. Therefore, we evaluate $\vec{d}_{\alpha} = \vec{F}\,\vec{D}_{\alpha}$ and calculate $\vec{d}_3 = \vec{d}_1\times\vec{d}_2$ at the respective Gauss point along the beam center line, such that the rotation tensor is given by $\tilde{\vec{R}} =\vec{d}_i\otimes\vec{D}_i$. Eventually, we need $\tilde{\vec{R}}^{\prime} = \vec{d}_i^{\prime}\otimes\vec{D}_i$ assuming again straight initial fibers, and we obtain immediately
\begin{equation}
\vec{d}_{\alpha}^{\prime} = \nabla \vec{F}:\vec{D}_{\alpha}\otimes\vec{D}_3,\quad\text{and}\quad
\vec{d}^{\prime}_3 = (\nabla \vec{F}:\vec{D}_{1}\otimes\vec{D}_3)\times \vec{d}_2 + \vec{d}_1\times(\nabla \vec{F}:\vec{D}_{2}\otimes\vec{D}_3).
\end{equation}
This leads finally to a system to be solved with respect to the matrix degrees of freedom $\vec{\varphi}$ and the Lagrange multipliers $\tilde{\vec{\mu}}_n$. If required, suitable methods like augmented Lagrange can be considered to solve solely with respect to the matrix unknowns. For the numerical examples in the next section, quaternions are used.
\end{remark}

\section{Numerical experiment}\label{sec:numerics}
In this section, we investigate the accuracy of the proposed formulation. In particular, we consider a benchmark test with results for a surface coupling between beam and matrix instead of a multidimensional coupling from~\cite{steinbrecher2019c}. Afterwards we investigate a torsional test, where it is obvious that pure position constraints are not suitable. Eventually, we demonstrate the applicability towards larger representative volume elements (RVE) for the analysis of multiple embedded fibers.

\subsection{Bending of a beam}\label{num:Bending}

\begin{figure}[t]
	\centering
	\begin{minipage}[top]{0.46\textwidth}
	\psfrag{x}{\small{$x$}}
	\psfrag{y}{\small{$y$}}
	\psfrag{z}{\small{$z$}}
	\psfrag{M}[c][r]{\small{$M$}}
		\includegraphics[width=\textwidth]{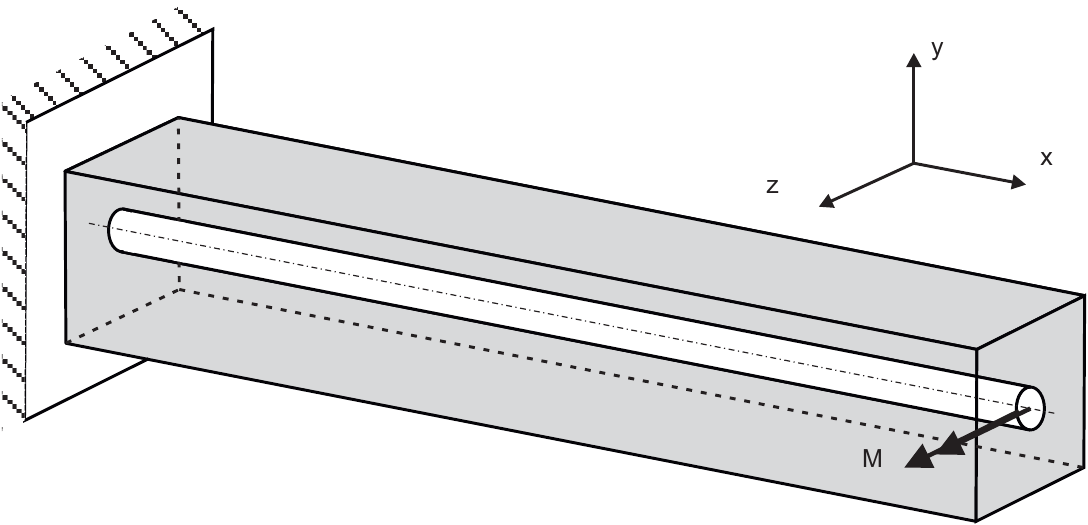}
	\end{minipage}
	\hfill
	\begin{minipage}[top]{0.46\textwidth}
	\psfrag{x}{\small{$x$}}
	\psfrag{y}{\small{$y$}}
	\psfrag{z}{\small{$z$}}
	\psfrag{M}{\small{$M$}}
	 \includegraphics[width=\textwidth]{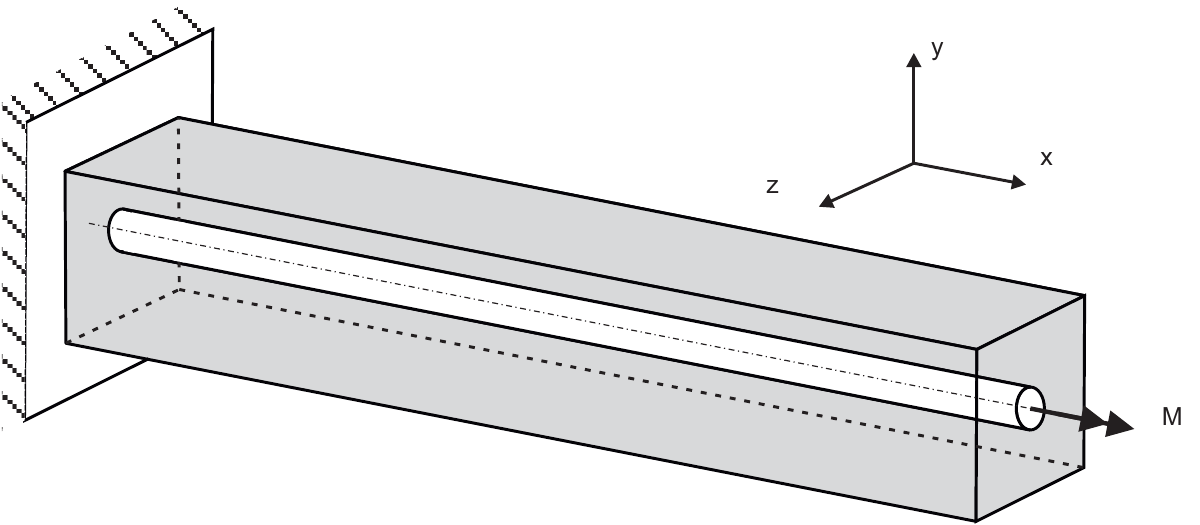}
	\end{minipage}
	\hfill
	\caption{Reference configuration of the bending test (left) and the torsion test (right).}
	\label{fig:RefBend}
\end{figure}

In this first numerical example, we consider a model problem from~\cite{steinbrecher2019c}, Section 4.2: A beam of length~$5\,\mathrm{m}$ and radius~$r=0.125\,\mathrm{m}$ with Young modulus~$4346\,\mathrm{N/m^2}$ is embedded into the $1\,\mathrm{m}\times1\,\mathrm{m}\times5\,\mathrm{m}$ matrix block of Saint-Venant-Kirchhoff material with Young modulus~$10\,\mathrm{N/m^2}$. Poisson ratio is zero for both materials. The geometry and material parameters for both, the matrix and the beam are chosen such that both systems have similar properties, i.e.,\ if the matrix material would be considered as beam, we would obtain the same moment of inertia.

The matrix and the beam are both fixed at $x=0\,\mathrm{m}$, and we apply a moment~$\vec{m}^{L}_{ext}=[0,\,0,\,0.025]\,\mathrm{Nm}$ to the beam tip~$x=5\,\mathrm{m}$ as a dead load, see Figure \ref{fig:RefBend}, left, for details. This benchmark test allows us to illustrate the stability of the proposed approach, its convergence and the model error, since the solution for a surface coupling is provided in \cite{steinbrecher2019c}. The matrix is discretized with B-Splines of order~$p=[p_x,p_y,p_z]$. We consider the same order in $y$- and $z$-direction, $p_y=p_z$, and the order in $x$-direction is the same as for the beam. The number~$n$ of elements in $y$- and $z$-direction is the same and by a factor of five smaller than in $x$-direction. For the beam, the same number of elements in $x$-direction is used.

\begin{figure}[t]
	\centering
	\includegraphics[width=\textwidth]{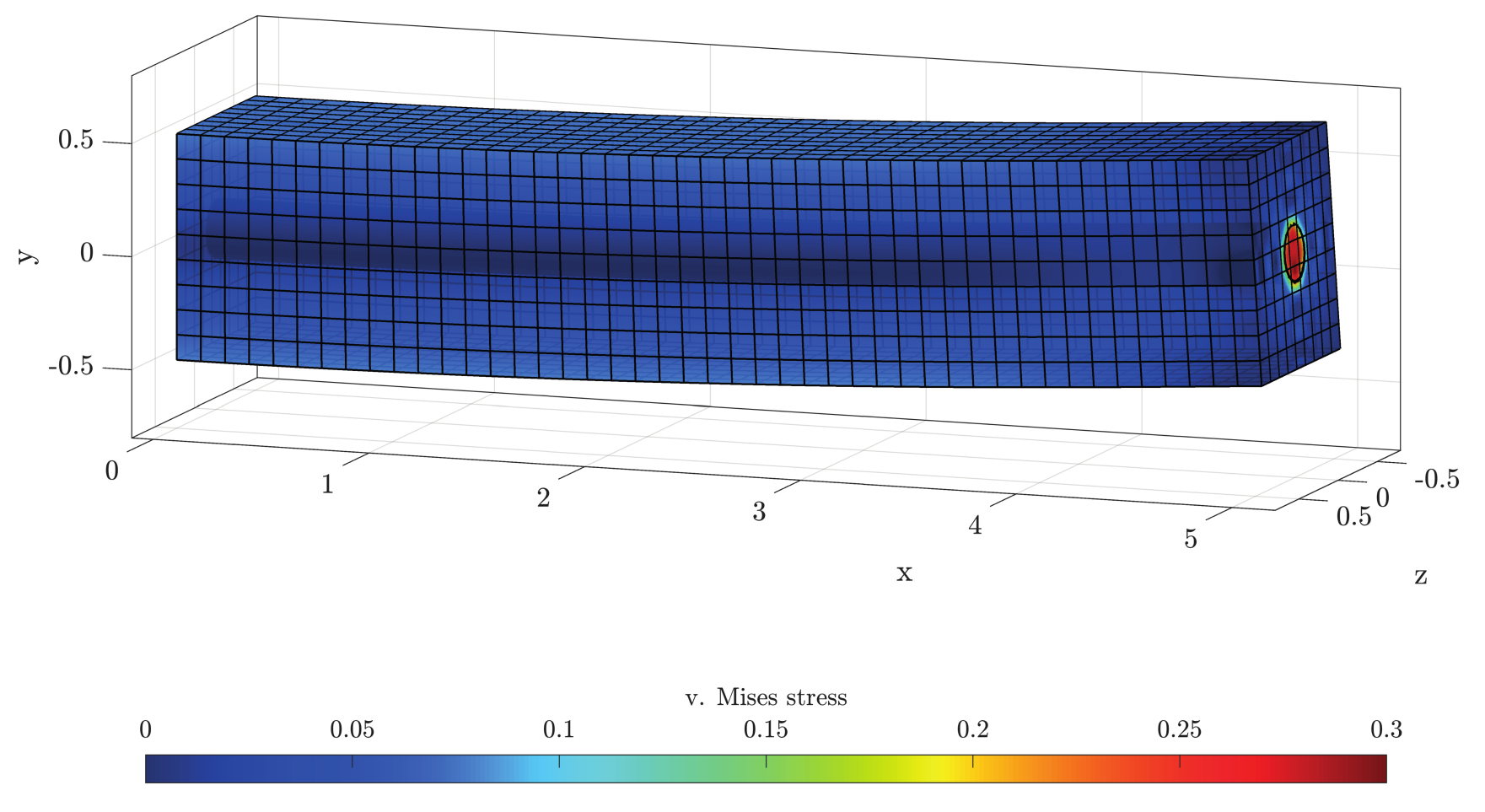}
	\caption{\textbf{Bending of a beam:} Von Mises stress distribution.}
	\label{fig:BeamMatrixBending}
\end{figure}

\begin{figure}[htp]
\centering
\includegraphics[width=0.6\textwidth]{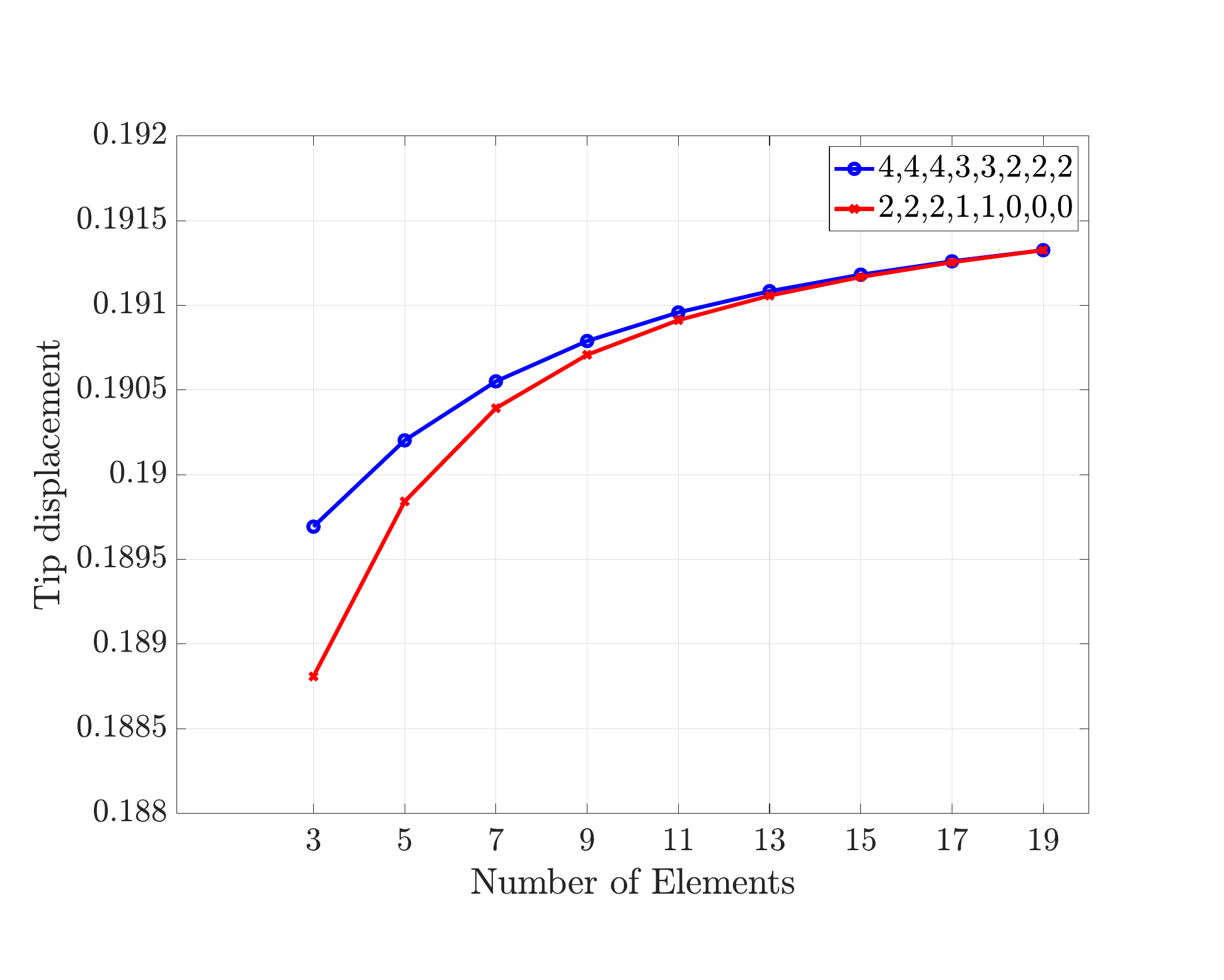}
\caption{\textbf{Bending of a beam:} Convergence of the tip displacement for different combinations of shape functions.}\label{fig:TipDispAbsErr}
\end{figure}

The deformation and the von Mises stress distribution for the bending problem is presented in Figure \ref{fig:BeamMatrixBending}. As expected, the stresses concentrate at the right hand side of the beam where the external moment is applied to. In Figure \ref{fig:TipDispAbsErr}, the displacement of the tip of the beam is presented for a second and fourth order example. The numbers in the legend denote the order of the B-splines for [$\vec{\varphi},\,\tilde{\vec{\varphi}},\,\mathfrak{q},\,\mathfrak{n},\,\mathfrak{m},\,\bar{\vec{\mu}},\,\tilde{\vec{\mu}}_{\tau},\,\tilde{\vec{\mu}}_{n}$]. As can be seen, both combinations of different orders converge to the same result.

\begin{figure}[t]
	\centering
	\begin{minipage}[top]{0.325\textwidth}
		\includegraphics[width=\textwidth]{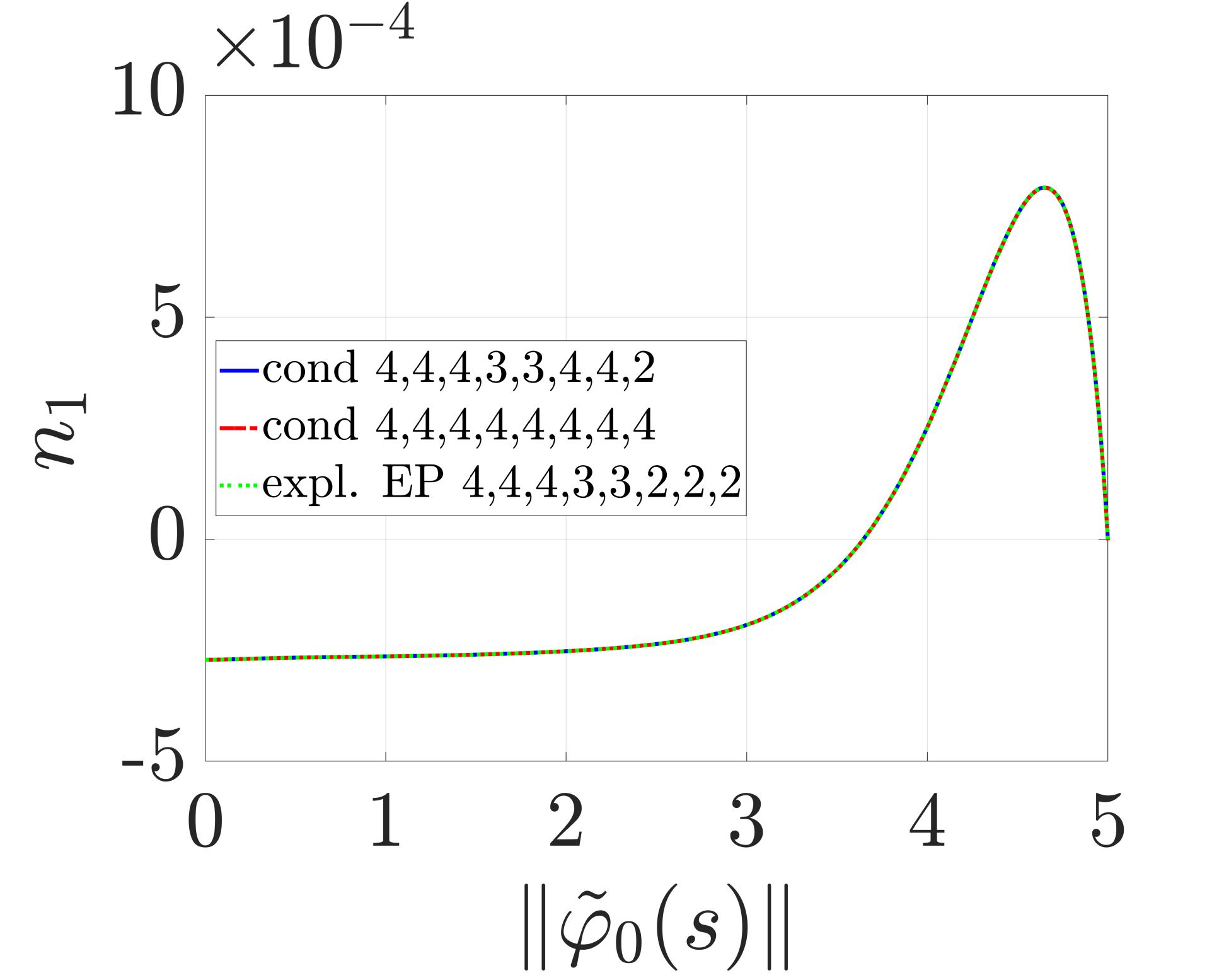}
	\end{minipage}
	\hfill
	\begin{minipage}[top]{0.325\textwidth}
		\includegraphics[width=\textwidth]{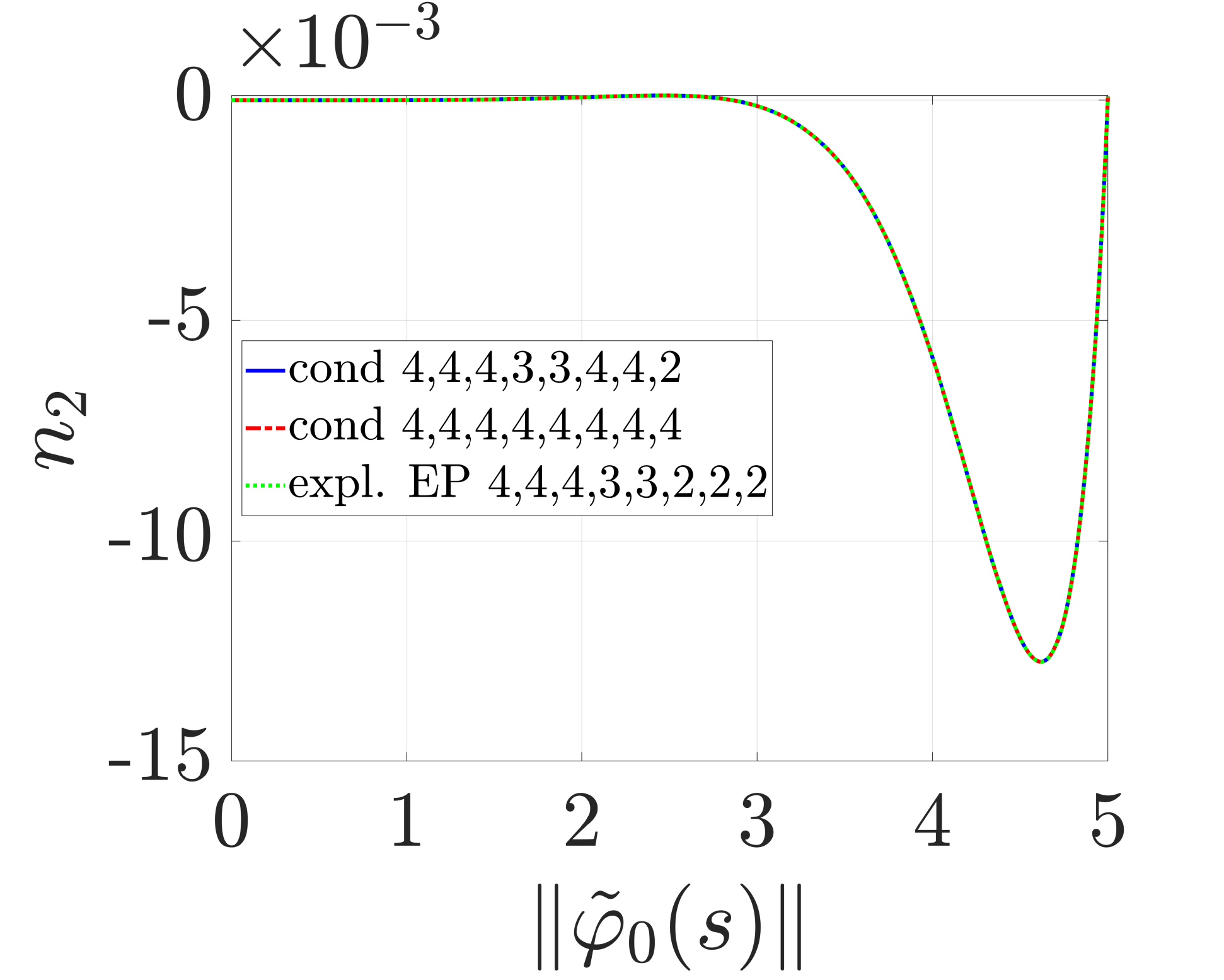}
	\end{minipage}
	\hfill
	\begin{minipage}[top]{0.325\textwidth}
		\includegraphics[width=\textwidth]{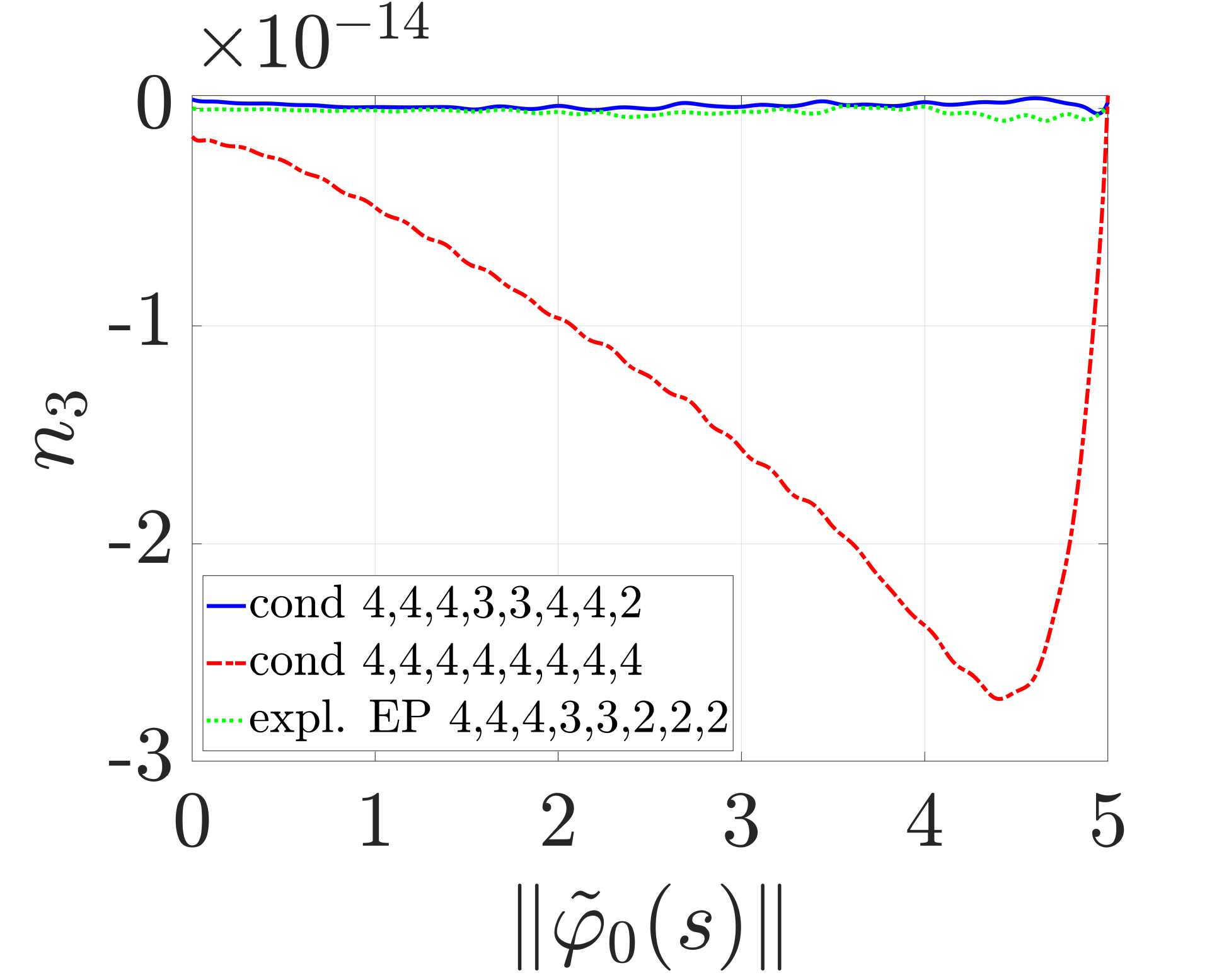}
	\end{minipage}
	\hfill\\
	\begin{minipage}[top]{0.325\textwidth}
		\includegraphics[width=\textwidth]{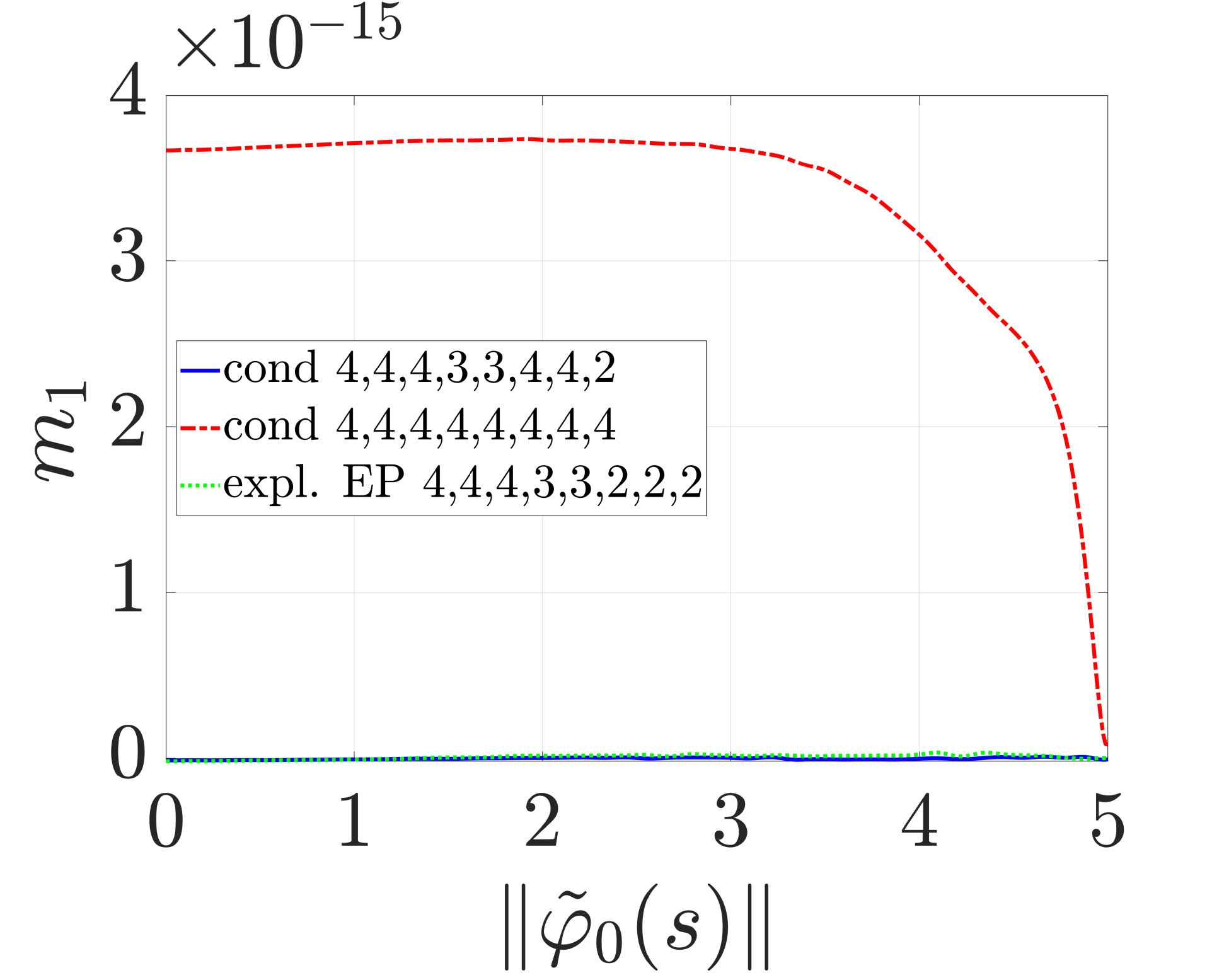}
	\end{minipage}
	\hfill
	\begin{minipage}[top]{0.325\textwidth}
		\includegraphics[width=\textwidth]{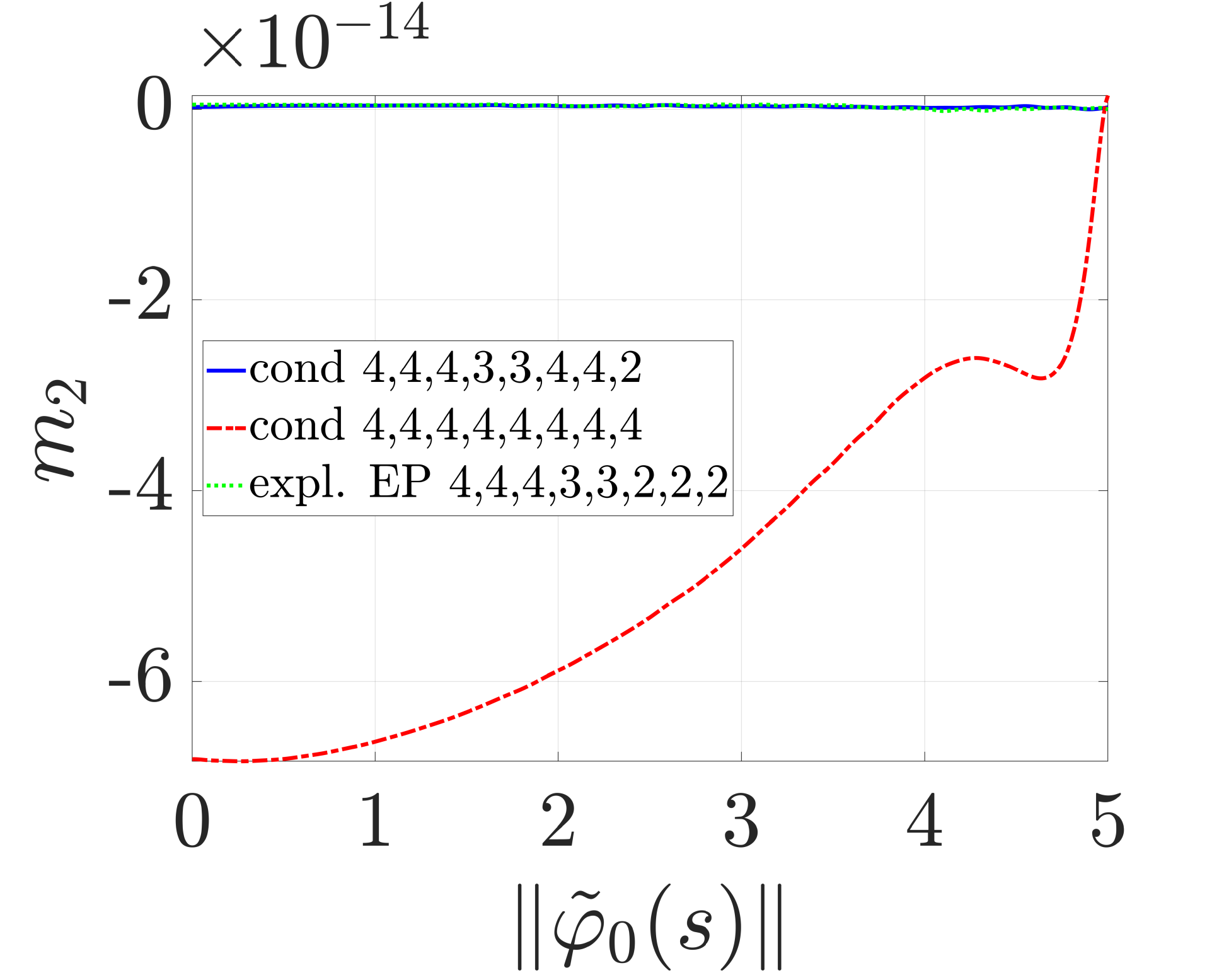}
	\end{minipage}
	\hfill
	\begin{minipage}[top]{0.325\textwidth}
		\includegraphics[width=\textwidth]{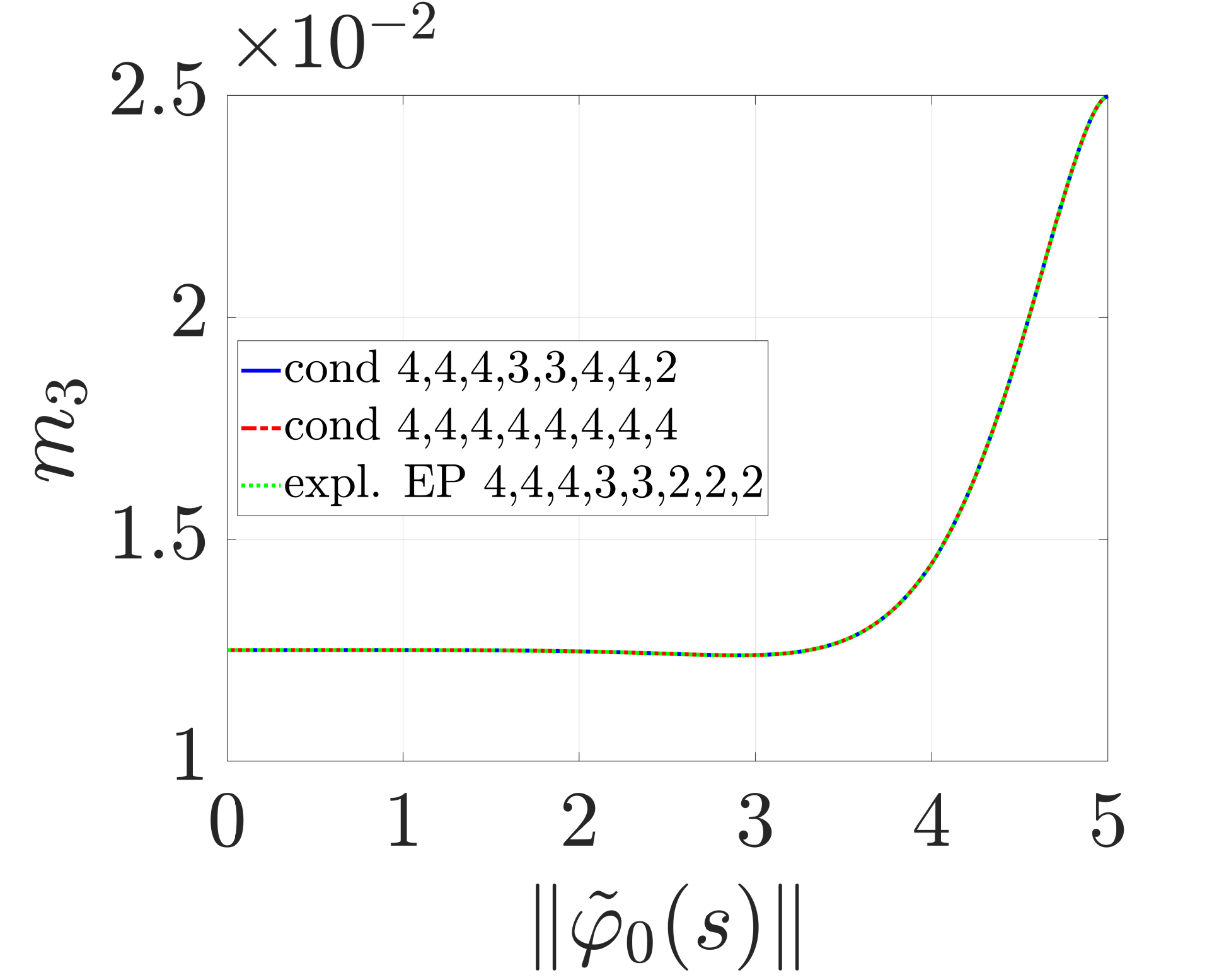}
	\end{minipage}
	\hfill\\
	\begin{minipage}[top]{0.325\textwidth}
		\includegraphics[width=\textwidth]{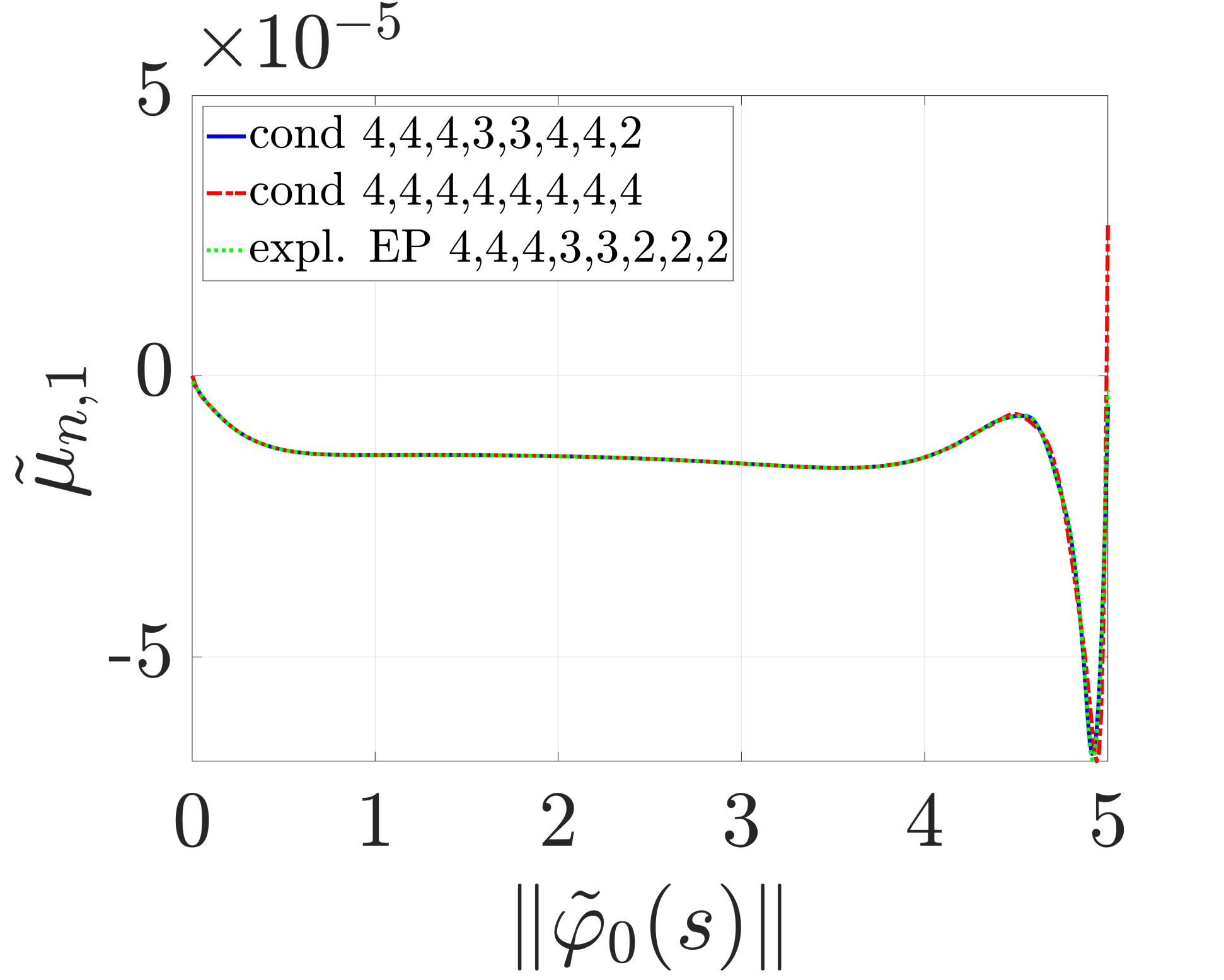}
	\end{minipage}
	\hfill
	\begin{minipage}[top]{0.325\textwidth}
		\includegraphics[width=\textwidth]{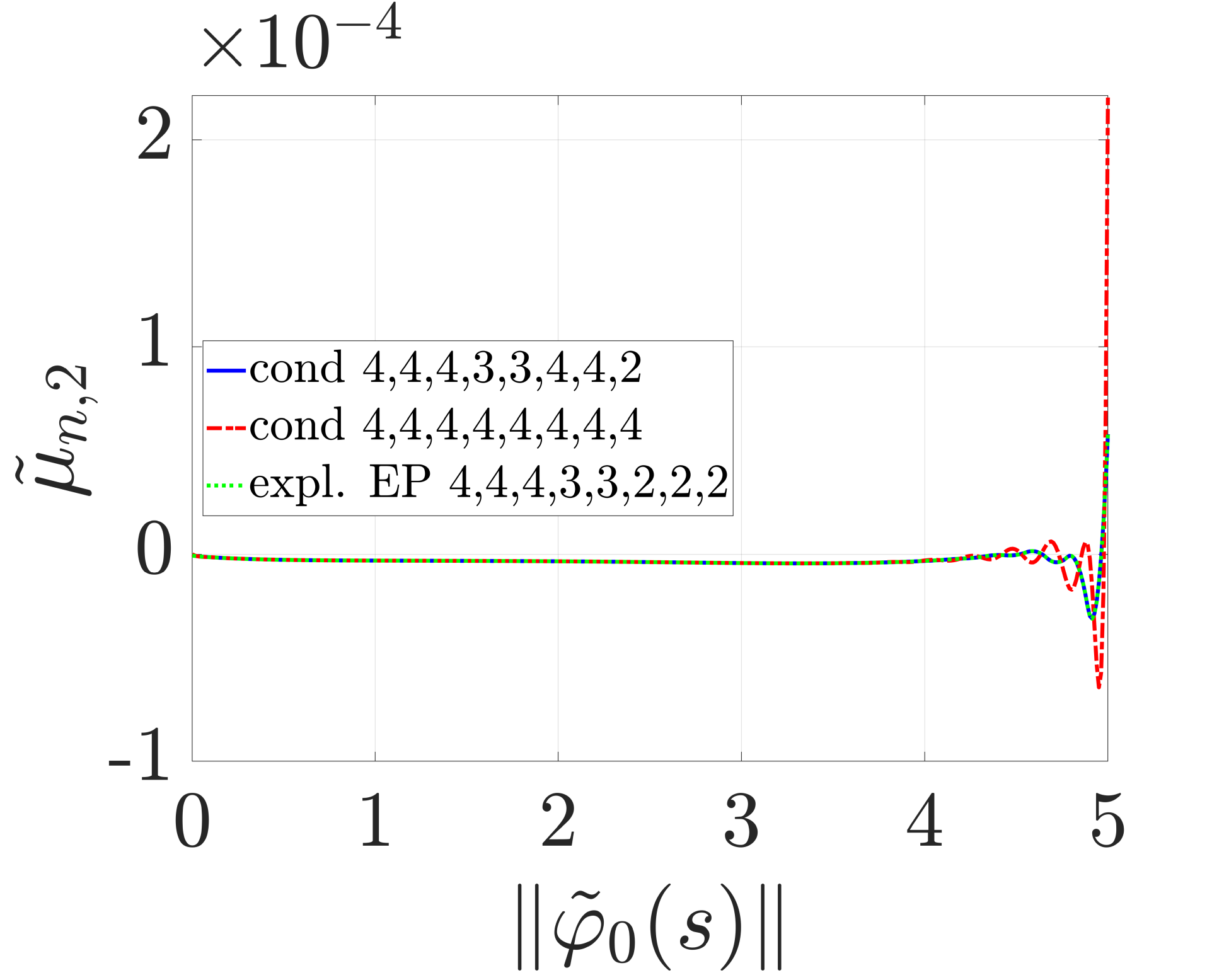}
	\end{minipage}
	\hfill
	\begin{minipage}[top]{0.325\textwidth}
		\includegraphics[width=\textwidth]{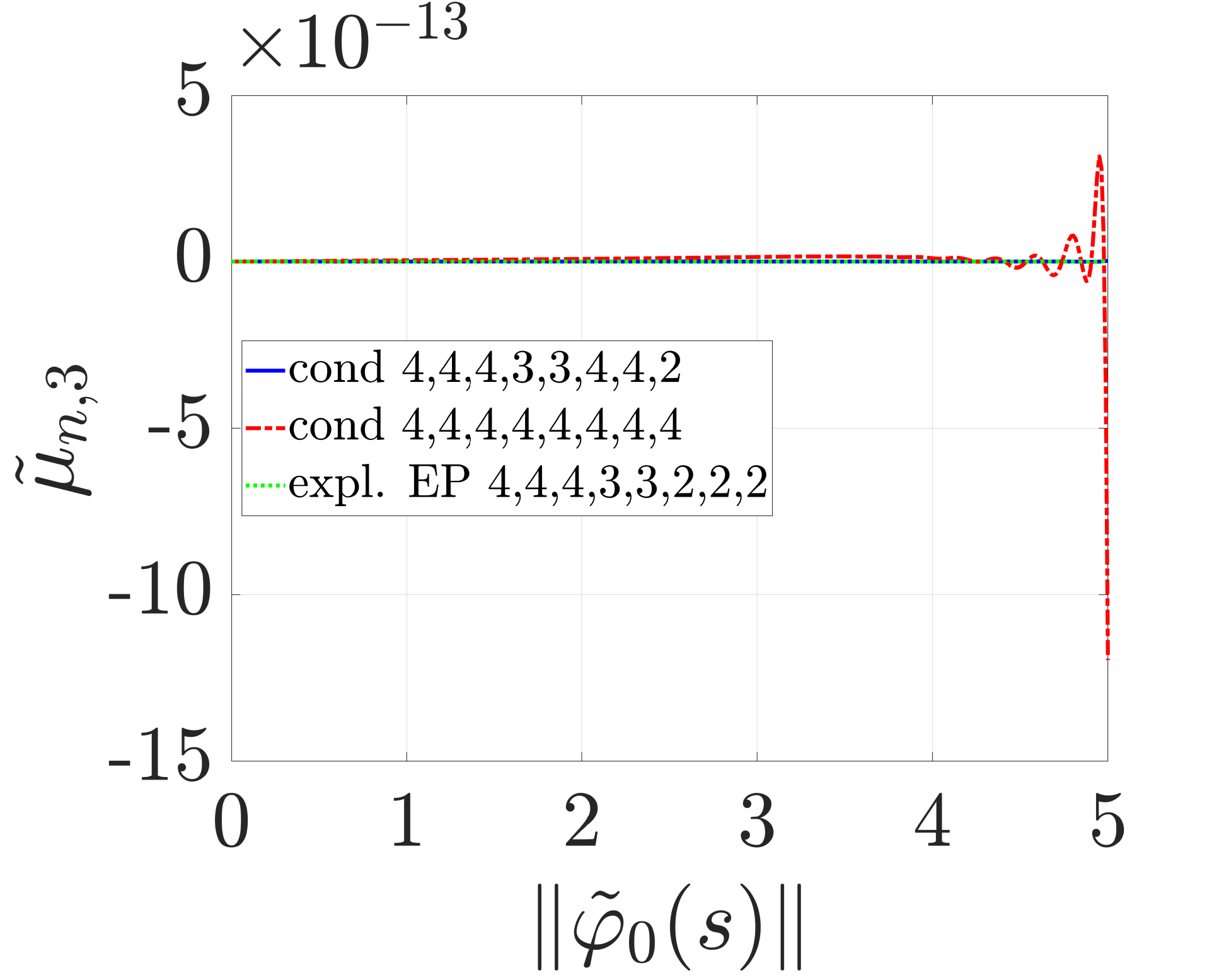}
	\end{minipage}
	\hfill
	\caption{\textbf{Bending of a beam:} From top to bottom, $\mathfrak{n},\,\mathfrak{m}$ and $\tilde{\vec{\mu}}_n$, from left to right $1$-,\,$2$- and $3$-components are displayed.}
	\label{fig:BeamMatrixBendingResults}
\end{figure}

In Figure \ref{fig:BeamMatrixBendingResults}, the results for $\mathfrak{n},\,\mathfrak{m}$ and $\tilde{\vec{\mu}}_n$ along the beam center line are displayed. The different order chosen for [$\vec{\varphi},\,\tilde{\vec{\varphi}},\,\mathfrak{q},\,\mathfrak{n},\,\mathfrak{m},\,\bar{\vec{\mu}},\,\tilde{\vec{\mu}}_{\tau},\,\tilde{\vec{\mu}}_{n}$] are displayed as well. Note that we fixed $\tilde{\vec{\mu}}_{n}$ at the first node equal to zero, to avoid spurious oscillations leading to a divergence within the Newton iteration. Moreover, the components $[\mathfrak{n}]_3$, $[\mathfrak{m}]_1$, $[\mathfrak{m}]_2$ and $[\tilde{\vec{\mu}}_n]_3$ are close to zero, as expected for this example. Finally, we have plotted the results for an explicit enforcement of \eqref{eq:sys:pi2} at both endpoints of the beam additionally to the integral enforcement of the constraints along $\mathcal{C}_0$. Note that no differences can be observed. 

\begin{center}
\begin{table}
\scriptsize
\centering\begin{tabular}{|c|c|c|c|c|c|c|c|c|}
\hline
Routine & & & expl. EP & expl. EP & expl. EP & expl. EP & cond & cond \\\hline
&q-unity && weak&  direct&  direct&  direct&  direct&  direct\\\hline
&&orders & 44433222 & 44433222 &44444333	&44444222	&44433442	&44433443	\\\hline
expl. EP&	weak	&44433222	&&	1.13E-4 \%	&1.32E-4 \%	& 1.13E-4\%	& 1.17E-4 \%	& 1.15E-4 \%\\\hline
expl. EP&	direct	&44433222	& 1.13E-4 \%	&&	1.90E-5 \%&	1.00E-7 \%	& 4.00E-6\%	& 2.00E-6 \%\\\hline
expl. EP&	direct	&44444333	& 1.32E-4 \%	&1.90E-5 \%	&& 1.90E-5 \%	& 1.50E-5 \%	& 1.70E-5 \%\\\hline
expl. EP&	direct	&44444222	& 1.13E-4 \%	&1.00E-7 \%& 1.90E-5 \%	&&	4.00E-6\% & 2.00E-6 \%\\\hline
cond	&direct	&44433442	& 1.17E-4 \%&	4.00E-6 \%	& 1.50E-5 \%	& 4.00E-6 \%	&& 2.00E-6 \%\\\hline
cond	&direct	&44433443	& 1.15E-4 \%	&2.00E-6 \%	& 1.70E-5 \%	& 2.00E-6 \%	& 2.00E-6 \%	&\\\hline
\end{tabular}
\caption{\textbf{Bending of a beam:} Deviation of the beam tip displacement in percentage for different orders and approaches for the quaternion unity constraint.}\label{tab:bending}
\end{table}
\end{center}

Eventually, we compare the deviation of the beam tip displacement in percentage for different approaches. To be more specific, we evaluate the absolute value of 
$|\frac{\|(\tilde{\vec{\varphi}}(L)-\tilde{\vec{\varphi}}_0(L)\|_a}{\|(\tilde{\vec{\varphi}}(L)-\tilde{\vec{\varphi}}_0(L)\|_b} -1|$ in percentage for approach $a$ compared to $b$, where the absolute tip displacement of the first approach displaced in Table \ref{tab:bending} is $0.19078898128\,\mathrm{m}$. First, we compare the additional, explicit enforcement of the endpoint constraints as discussed above. Moreover, the unity constraints of the quaternions is either enforced in an integral sense or directly, i.e.,\ pointwise at the nodes. This is evaluated for different combinations shape functions of different order for [$\vec{\varphi},\,\tilde{\vec{\varphi}},\,\mathfrak{q},\,\mathfrak{n},\,\mathfrak{m},\,\bar{\vec{\mu}},\,\tilde{\vec{\mu}}_{\tau},\,\tilde{\vec{\mu}}_{n}$], see Table \ref{tab:bending}. We remark here, that the relative deviation between $[4,\,4,\,4,\,3,\,3,\,2,\,2,\,2]$ compared to $[4,\,4,\,4,\,4,\,4,\,3,\,3,\,3]$ is in the order of $10^{-5}$ \%, i.e.\ negligible and already includes the better approximation of the higher order shape functions. However, the implementation simplifies for $[4,\,4,\,4,\,4,\,4,\,3,\,3,\,3]$ significant, as only two different order of shape functions have to be evaluated at each Gauss point. 

\subsection{Torsion test}\label{num:Torsion}

Next, we use the same setup as in Section \ref{num:Bending} using a Mooney-Rivlin material model 
\begin{equation}
\Psi(J, I_1,I_2) = c\,(J-1)^2 - d\, \op{ln}(J) + c_1\,(I_1-3) + c_2\,(I_2-3),
\end{equation}
as we expect to obtain large deformations. Here, $J = \op{det}(\vec{F})$, $I_1 = \op{tr}(\vec{F}\tp\,\vec{F}) = \vec{F}:\vec{F}$ and \review{$I_2 = \op{tr}(\op{cof}(\vec{F}\tp\,\vec{F})) = \vec{H}:\vec{H}$}. Moreover, $c = 2/3\,(c_1 + c_2)$, $d = 2\,(c_1 + 2\,c_2)$, $c_1 = 2$ and $c_2 = 1$. We apply an external moment $\vec{m}_{ext}^L = [0.9,\,0,\, 0]$ Nm, such that we obtain a pure torsion (see Figure \ref{fig:RefBend}, right). Note that using only position constraints is not possible for this example, as the immersed beam can rotate without influencing the matrix material. In Figure \ref{fig:BeamMatrixTorsion}, the von Mises stress distribution is shown. For visualization, we have plotted the (virtual) surface of the beam inside the matrix material. As can be observed, the matrix rotates within the virtual area of the beam geometry due to the incorporation of geometrical data in terms of the second moment of inertia within the beam material model in \eqref{eq:simpleMat}. In fact, the gradient around the 1-dimensional beam is restricted; a full reconstruction of the beam geometry would require all higher-order terms.

\begin{figure}[t]
	\centering
	\begin{minipage}[top]{0.9\textwidth}
	\vspace{5ex}
		\includegraphics[width=\textwidth]{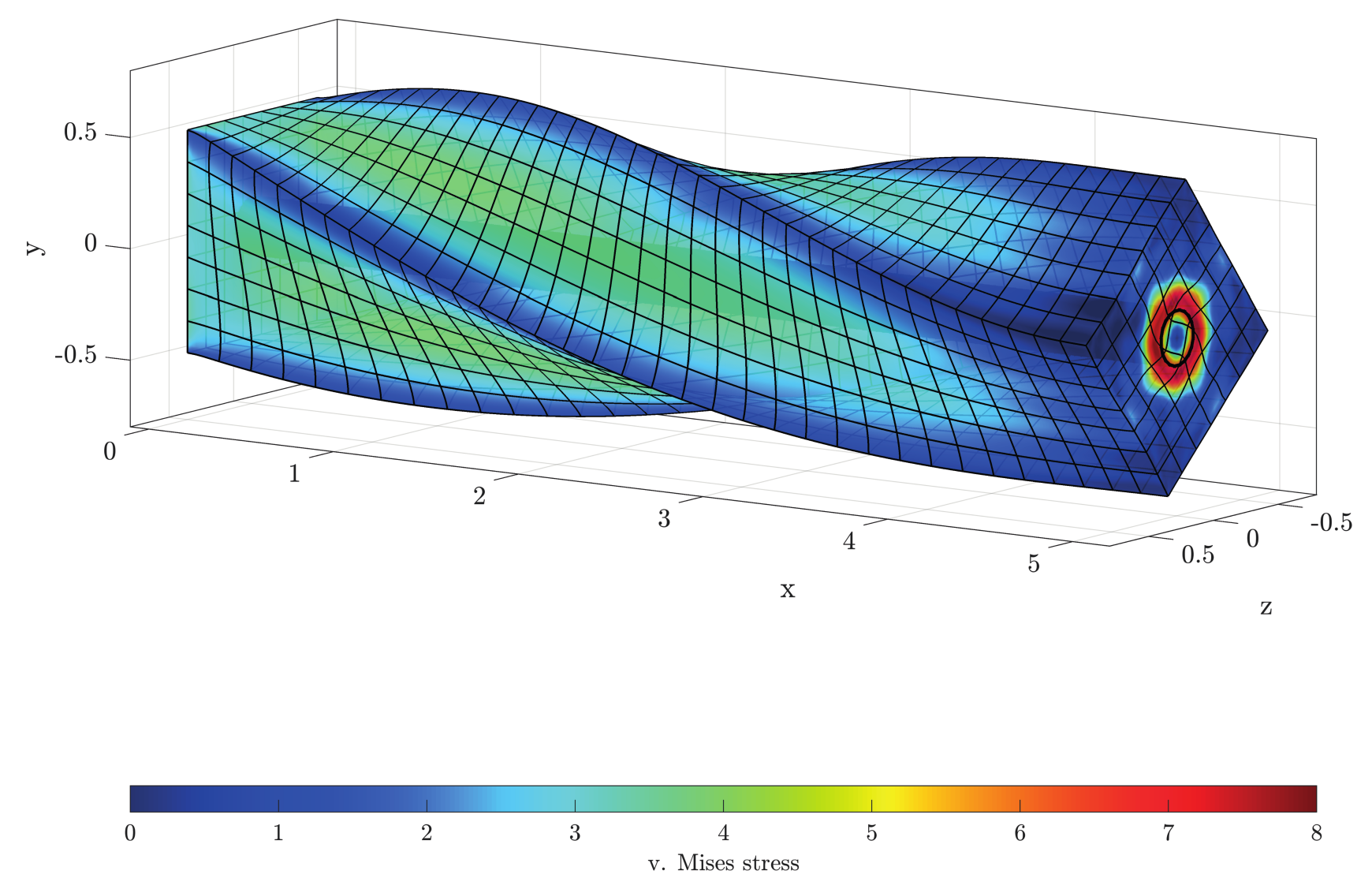}
	\end{minipage}
	\hfill
	\caption{\textbf{Torsion Test:} Von Mises stress distribution.}
	\label{fig:BeamMatrixTorsion}
\end{figure}

In Figure \ref{fig:BeamMatrixTorsionResults}, the results for $\mathfrak{n},\,\mathfrak{m}$ and $\tilde{\vec{\mu}}_n$ along the beam center line are displayed. Note that the $\mathfrak{n}_2$, $\mathfrak{n}_3$, $\mathfrak{m}_2$ and $\mathfrak{m}_3$ are again near the numerical limit, which corresponds to the physics of this example. The chosen order for [$\vec{\varphi},\,\tilde{\vec{\varphi}},\,\mathfrak{q},\,\mathfrak{n},\,\mathfrak{m},\,\bar{\vec{\mu}},\,\tilde{\vec{\mu}}_{\tau},\,\tilde{\vec{\mu}}_{n}$] is $[4,\,4,\,4,\,3,\,3,\,2,\,2,\,2]$. As the Mooney-Rivlin material model does not provide a Poisson ration equal zero, we obtain a slight elongation in $x-$ direction, resulting in the displayed forces obtained in $\mathfrak{n}_1(L)$. A comparative study using the Saint-Venant Kirchhoff model of the sub-section before results in an exact zero elongation and zero forces in $\mathfrak{n}_1(L)$. As the results are not physically meaningful due to the obtained large deformation, only the results for the Mooney-Rivlin material are plotted.

\begin{figure}[t]
	\centering
	\begin{minipage}[top]{0.325\textwidth}
		\includegraphics[width=\textwidth]{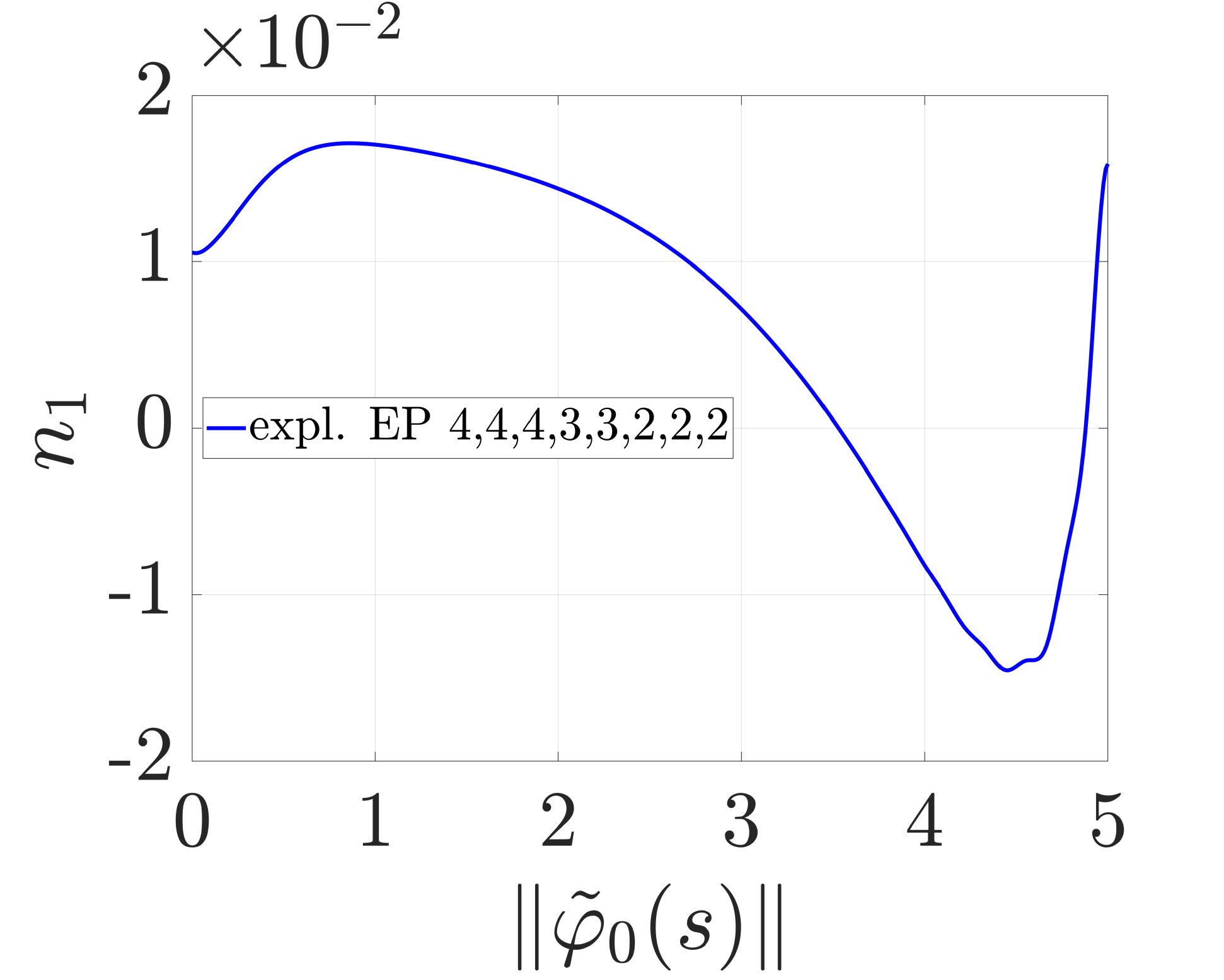}
	\end{minipage}
	\hfill
	\begin{minipage}[top]{0.325\textwidth}
		\includegraphics[width=\textwidth]{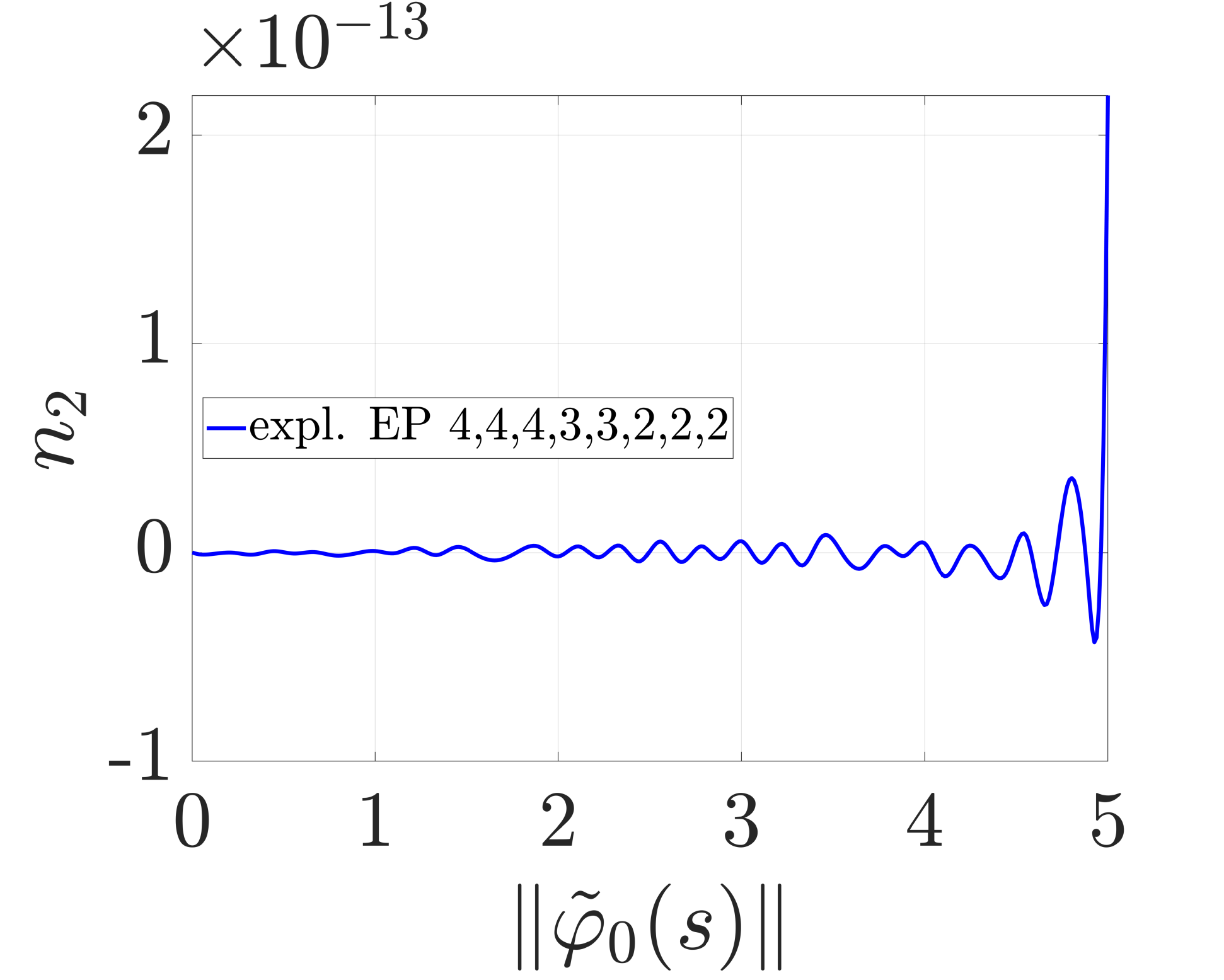}
	\end{minipage}
	\hfill
	\begin{minipage}[top]{0.325\textwidth}
		\includegraphics[width=\textwidth]{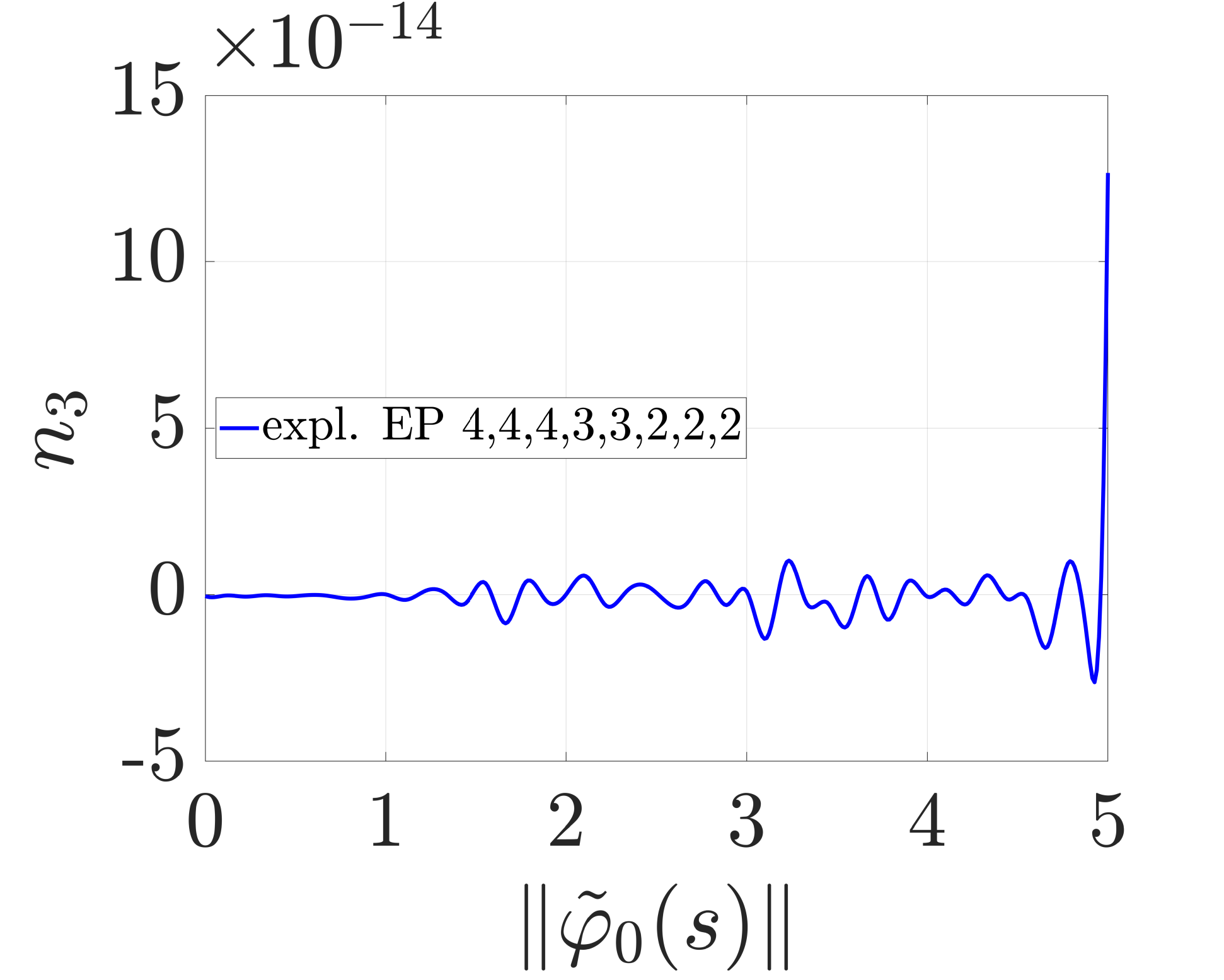}
	\end{minipage}
	\hfill\\
	\begin{minipage}[top]{0.325\textwidth}
		\includegraphics[width=\textwidth]{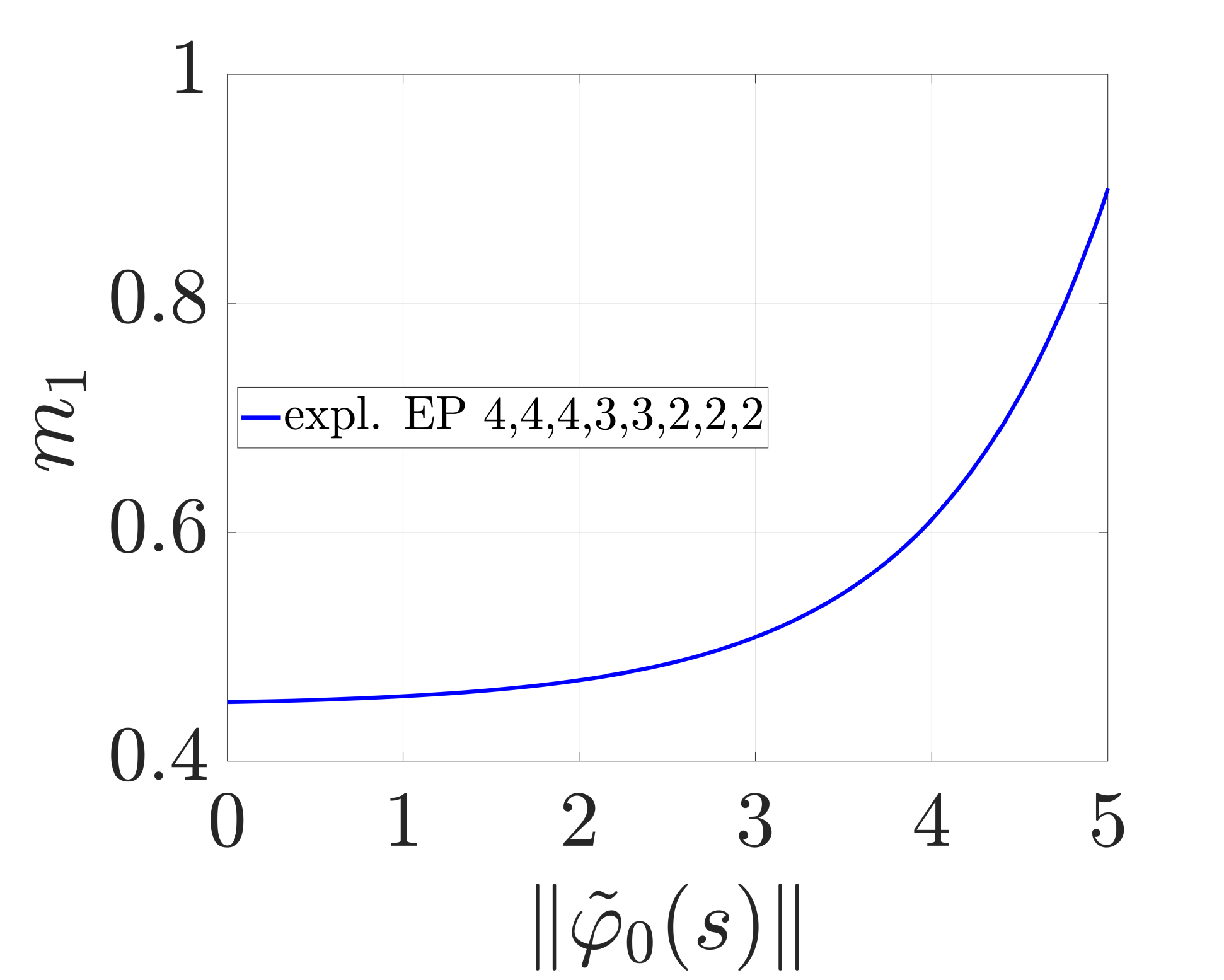}
	\end{minipage}
	\hfill
	\begin{minipage}[top]{0.325\textwidth}
		\includegraphics[width=\textwidth]{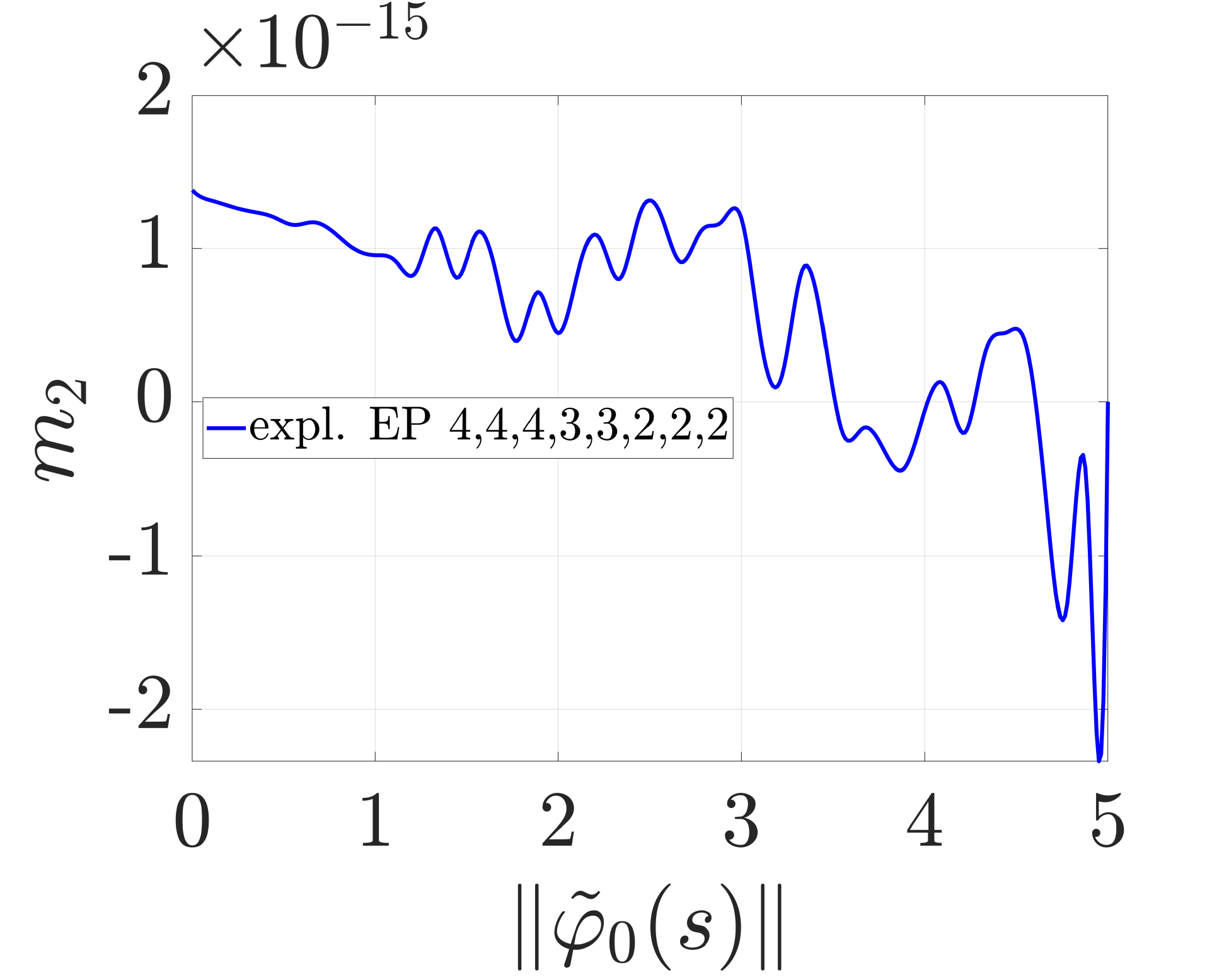}
	\end{minipage}
	\hfill
	\begin{minipage}[top]{0.325\textwidth}
		\includegraphics[width=\textwidth]{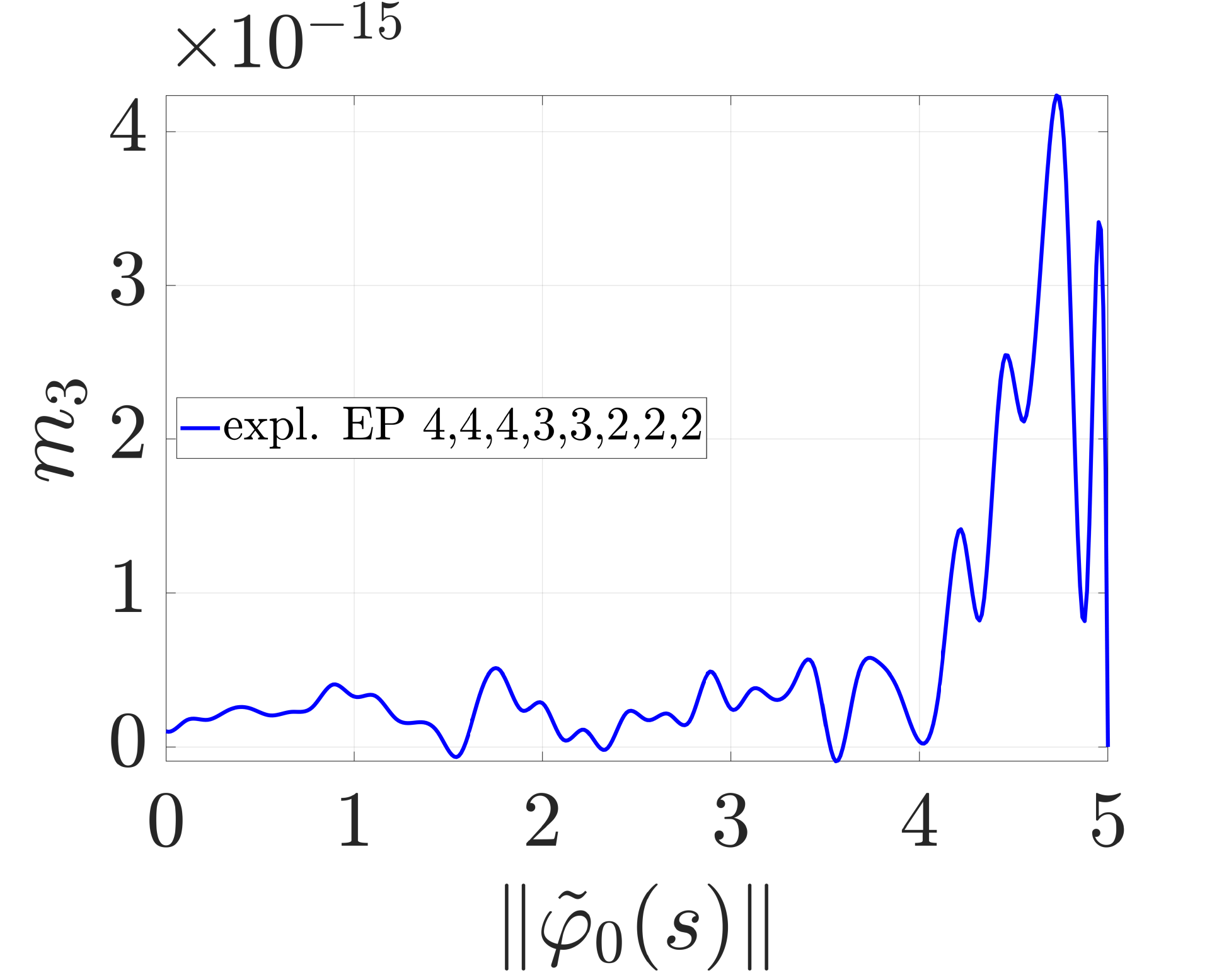}
	\end{minipage}
	\hfill\\
	\begin{minipage}[top]{0.325\textwidth}
		\includegraphics[width=\textwidth]{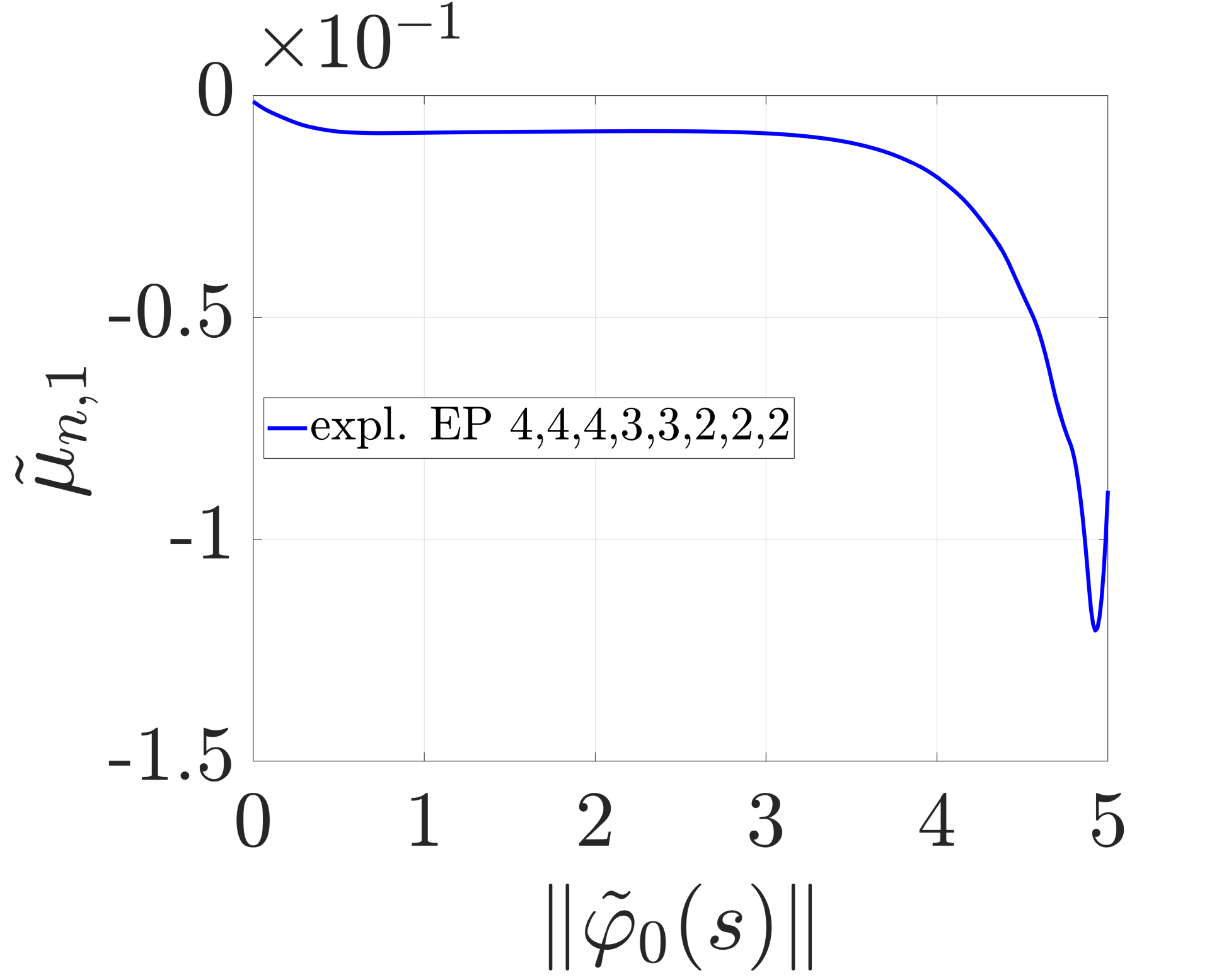}
	\end{minipage}
	\hfill
	\begin{minipage}[top]{0.325\textwidth}
		\includegraphics[width=\textwidth]{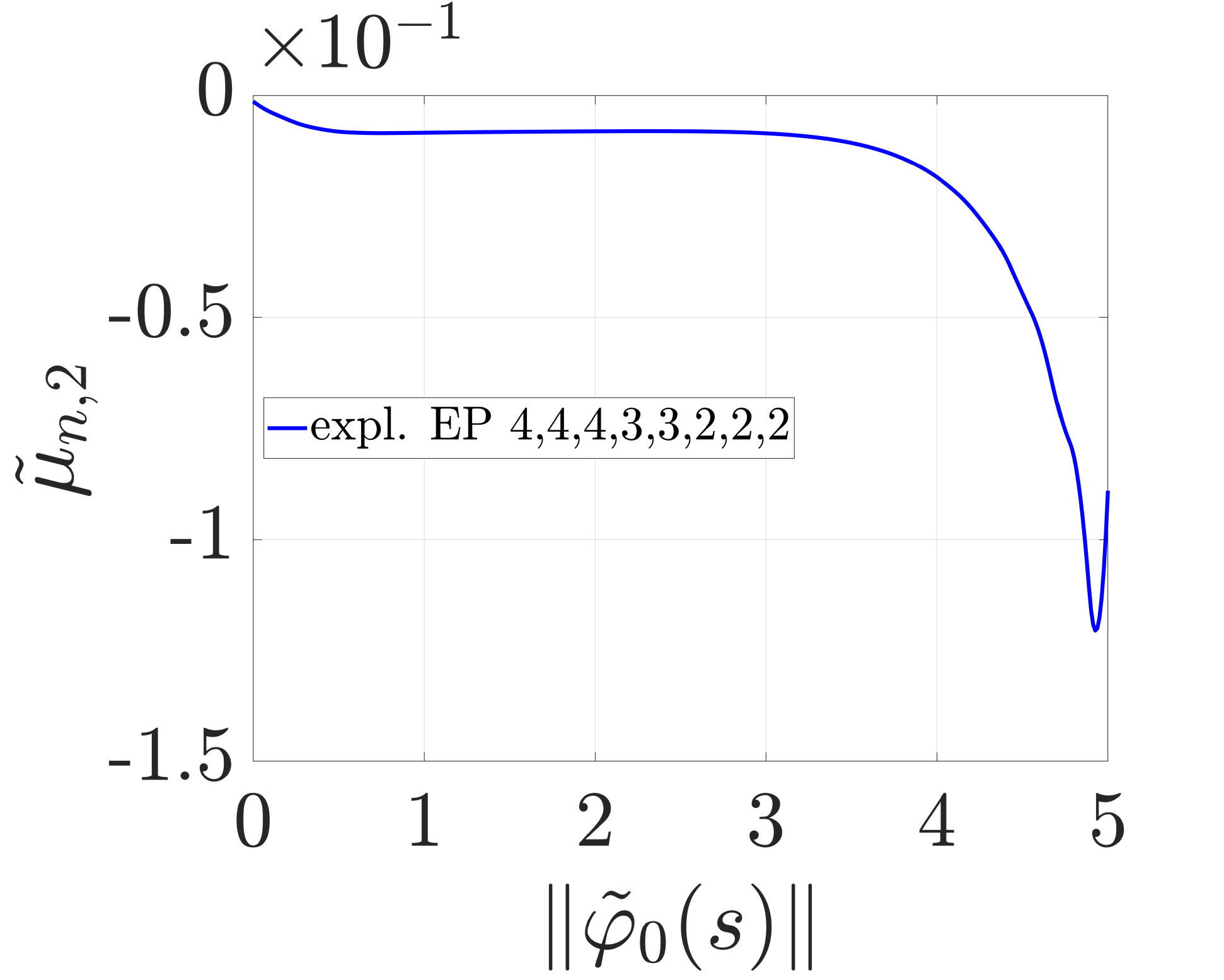}
	\end{minipage}
	\hfill
	\begin{minipage}[top]{0.325\textwidth}
		\includegraphics[width=\textwidth]{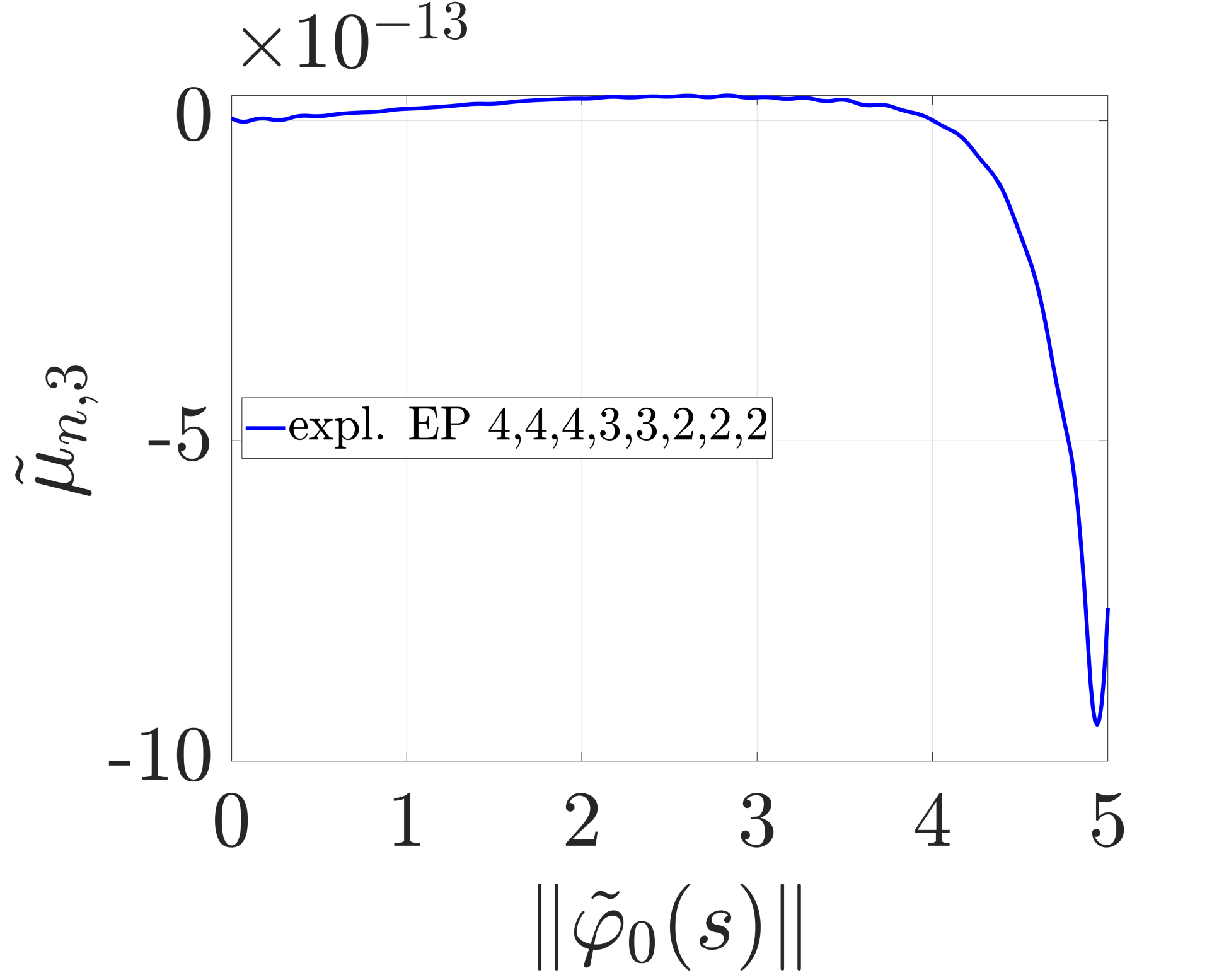}
	\end{minipage}
	\hfill
	\caption{\textbf{Torsion test:} From top to bottom, $\mathfrak{n},\,\mathfrak{m}$ and $\mathfrak{p}$, from left to right $1$-, $2$- and $3$-components are displayed.}
	\label{fig:BeamMatrixTorsionResults}
\end{figure}

\subsection{Shear test}
The last example using the above described geometry can be denoted as shear test. Therefore, we fix the lower surface of the matrix material and prescribe the displacement of the top surface. In Figure \ref{fig:BeamMatrixShear} the resulting von Mises stress distribution with (left) and without (right) constraints on hydrostatic pressure and shear are presented., see Remark \ref{re:shear} and Lemma \ref{th:area_cond_F}.

The shear test does not bend the beam, nor does the beam elongate. Additionally, the rotation in the center line is nearly constant, such that without the constraints the matrix is not affected by the beam. This can be observed in Figure \ref{fig:BeamMatrixShear}, right. In contrast, using the constraints in \eqref{eq:area_cond_F}, restricting the stretches in terms of the right Cauchy-Green tensor on the cross section of the beam, leads to a redistribution of the stresses around the virtual beam geometry, see left plot in Figure \ref{fig:BeamMatrixShear}. 

\begin{figure}[t]
	\centering
	\begin{minipage}[top]{0.49\textwidth}
		\includegraphics[width=\textwidth]{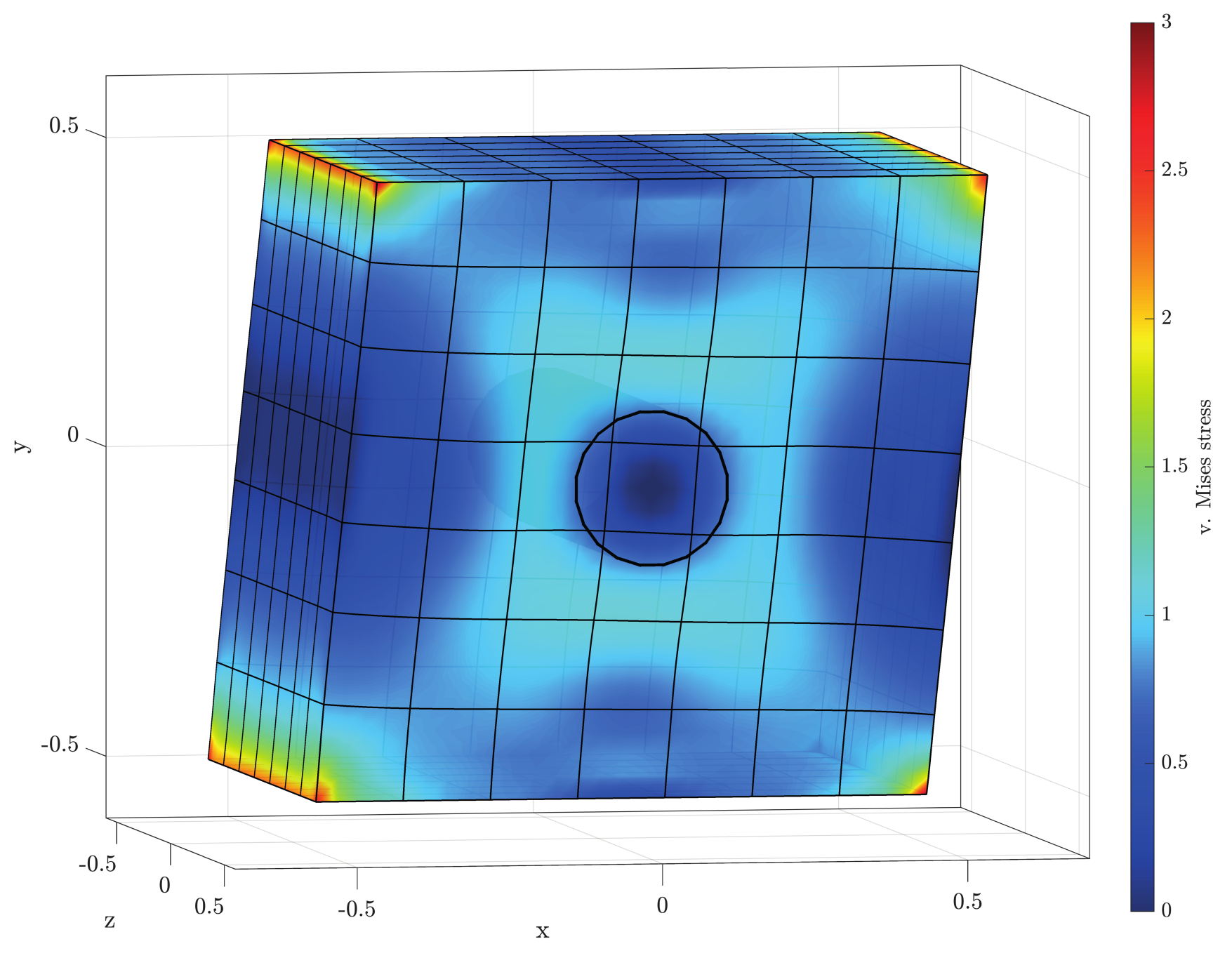}
	\end{minipage}
	\hfill
	\begin{minipage}[top]{0.49\textwidth}
		\includegraphics[width=\textwidth]{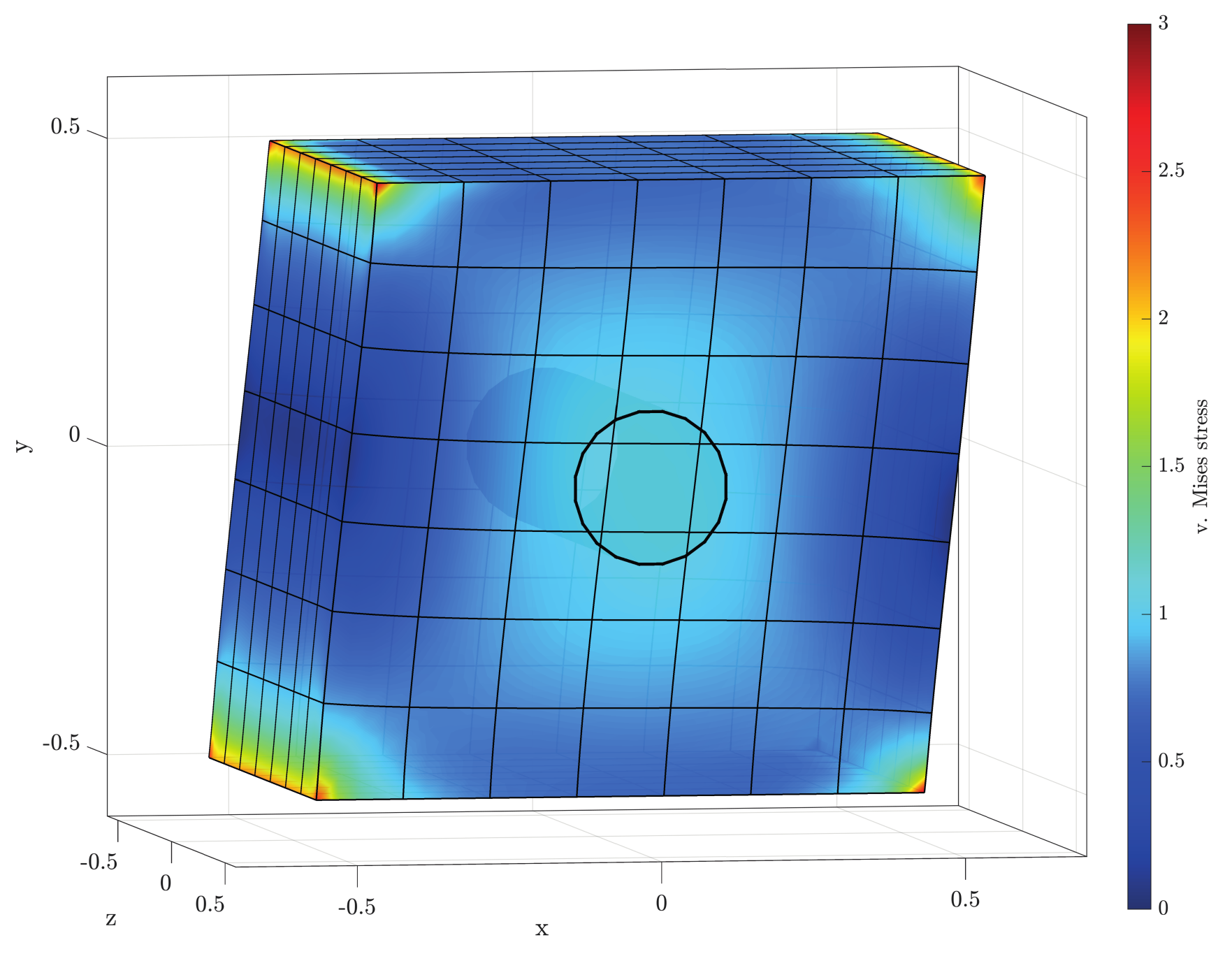}
	\end{minipage}
	\caption{\textbf{Shear Test:} Von Mises stress distribution with (left) and without (right) constraints on hydrostatic pressure and shear.}
	\label{fig:BeamMatrixShear}
\end{figure}


\subsection{Application of multiple beams}
\review{
As a proof of concept, we investigate in this final example a representative volume element (RVE) for fiber reinforced plastics. For the polymer we use the same Mooney-Rivlin material model as presented in \ref{num:Torsion}, but with $c_1 = 2000\,\mathrm{N/mm^2}$ and $c_2 = 1000\,\mathrm{N/mm^2}$. The glass fibers have a diameter of $d=0.01\,\mathrm{mm}$, an approximate length of $l = 0.2\,\mathrm{mm}$, a Young modulus of $E = 73000\,\mathrm{N/mm^2}$ and a Poisson ratio of $\nu = 0.3$.

The RVE has a size of $1\,\mathrm{mm}\times 1\,\mathrm{mm}\times 1\,\mathrm{mm}$ and we made use of $20\times 20\times 20$ elements. The order for [$\vec{\varphi},\,\tilde{\vec{\varphi}},\,\mathfrak{q},\,\mathfrak{n},\,\mathfrak{m},\,\bar{\vec{\mu}},\,\tilde{\vec{\mu}}_{\tau},\,\tilde{\vec{\mu}}_{n}$] is set to $[4,\,4,\,4,\,3,\,3,\,2,\,2,\,2]$. For both,  the nearly isotropic case as well as the anisotropic case we applied 40 fibres with each 2 to 5 elements depending on the length. All surfaces of the RVE are prescribed in a periodic sense to obey the Hill-Mandel criteria, equilibrating the virtual work on both scales. Therefore a prescribed, external deformation gradient $\bar{\vec{F}}$ is applied as Dirichlet boundary on the vertices and all three sets of opposing surfaces are restricted to equal fluctuations, i.e.,\ the deviation of the current configuration from the reference configuration deformed by $\bar{\vec{F}}$, with
{\begin{small}
\begin{equation}
\begin{aligned}
\bar{\vec{F}} &=
\begin{bmatrix}
\phantom{-}0.9985 & \phantom{-}0.025\phantom{0} & -0.002\phantom{0}\\
-0.01\phantom{00} & \phantom{-}1.0005 & -0.005\phantom{0}\\
-0.001\phantom{0} & \phantom{-}0.01\phantom{00} & \phantom{-}0.9985
\end{bmatrix}. \\
\end{aligned}
\end{equation}
\end{small}}

Figure \ref{fig:RVE_stresses} displays the corresponding stress distribution of the matrix material, where we obtain peak values of about $289\,\mathrm{N/mm^2}$ for the isotropic and $266\,\mathrm{N/mm^2}$ for the anisotropic case compared to average values of $\approx 170\,\mathrm{N/mm^2}$ for both cases. 

\begin{figure}[t]
	\centering
	\begin{minipage}[top]{0.49\textwidth}
		\includegraphics[width=\textwidth]{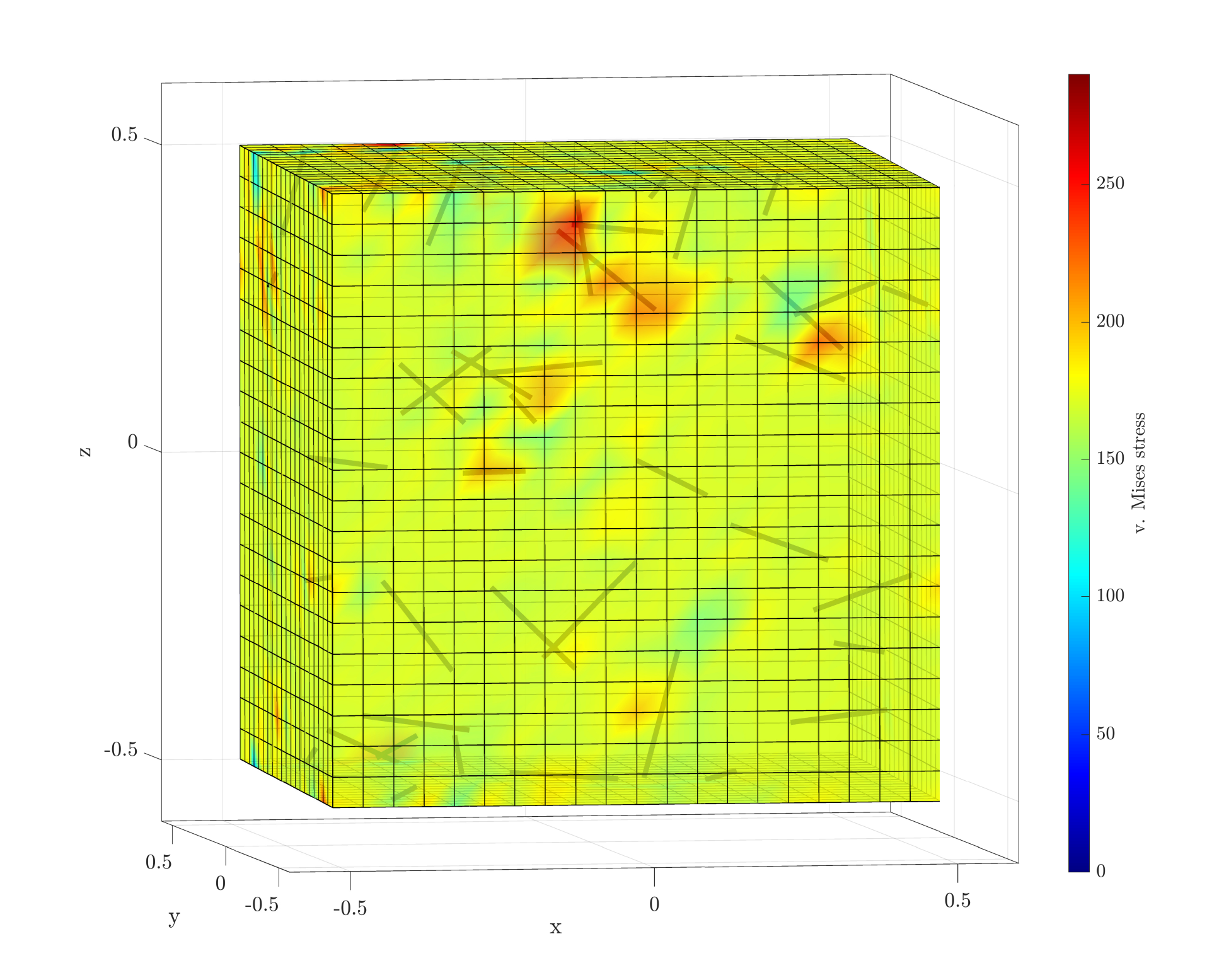}
	\end{minipage}
	\hfill
	\begin{minipage}[top]{0.49\textwidth}
		\includegraphics[width=\textwidth]{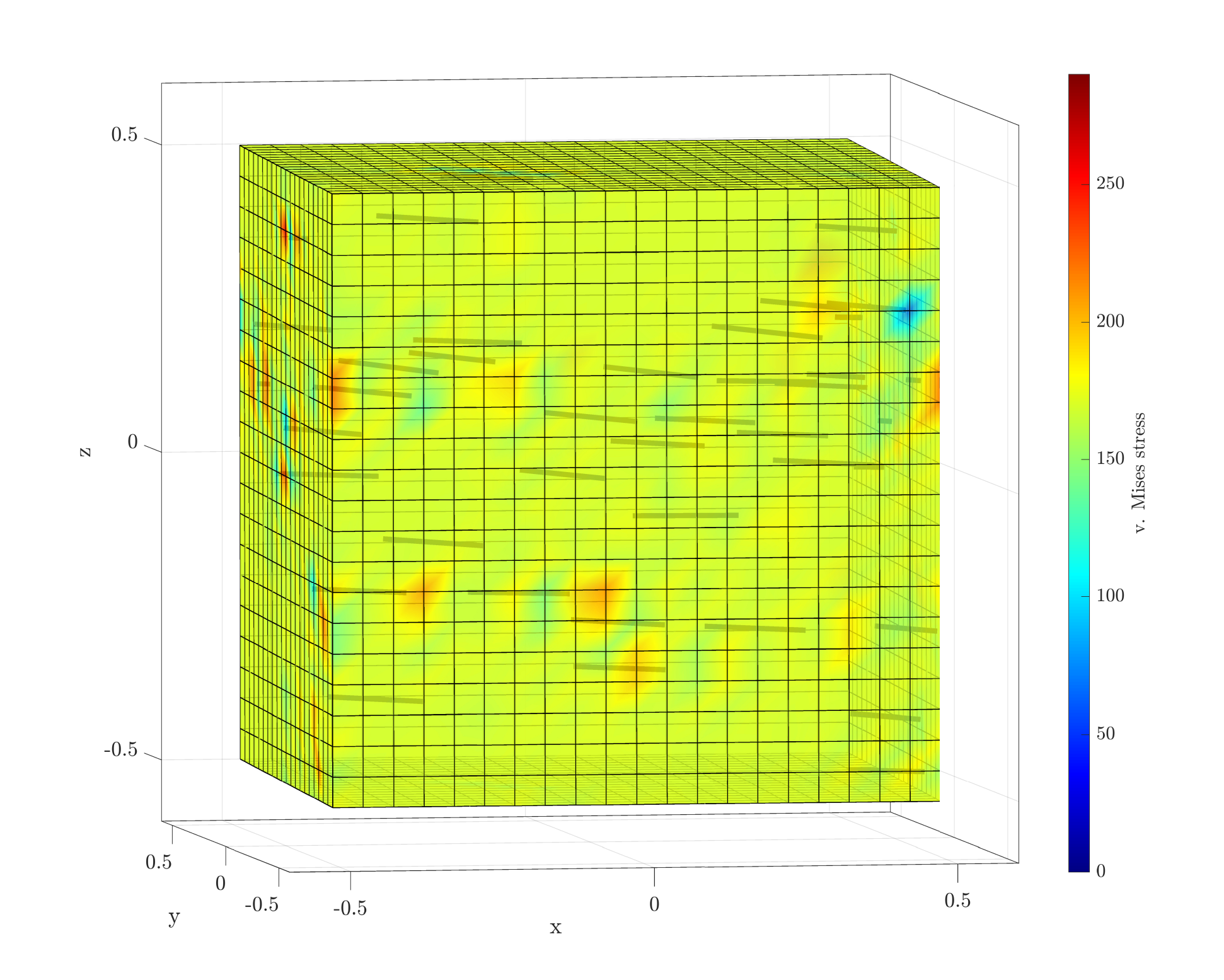}
	\end{minipage}
	\caption{\textbf{Application of multiple beams:} Von Mises stress distribution, isotropic (left) and anisotropic (right) case, plotted in the reference configuration.}
	\label{fig:RVE_stresses}
\end{figure}
}

\section{Conclusions}\label{sec:conclusions}
The framework provided in this contribution utilizes an overlapping domain decomposition technique similar to immersed techniques in fluid and solid mechanics to embed 1-dimensional fibers into a suitable 3-dimensional matrix material. For the first time, the coupling terms not only consider the forces, but also the bending and torsional moments of the beam model, leading to an enhanced formulation, which has shown to be superior to previous approaches. The chosen benchmark tests for bending and torsion demonstrate the flexibility as well as the accuracy of the proposed model. Especially the latter example regarding a torsional moment cannot be obtained from a simple coupling of forces.

The proposed static condensation procedures in the continuum as well as in the discrete setting reduces the computational effort on the calculation of the matrix material with regard to the enforcement of the shear and pressure terms of the beam, which kinematically constrains the cross sectional area. This condensation procedures for the multidimensional coupling of the immersed fibers give rise to a third order stress tensor, which is related to the concept of coupled stresses. This can be considered as first step of a homogenization, as we can now avoid the highly inefficient resolution of the beam as a classical Cauchy continuum.

\section*{Acknowledgements}
Support for this research was provided by the Deutsche Forschungsgemeinschaft (DFG) under grant DI2306/1-1. The author C. Hesch gratefully acknowledge support by the DFG. U. Khristenko and B. Wohlmuth gratefully acknowledge support by the European Union's Horizon 2020 research and innovation programme under grant agreement No 800898, the German Research Foundation by grants WO671/11-1 and WO671/15-2. We would like to thank the group of Alexander Popp at the Universit\"at der Bundeswehr M\"unchen and his Co-Worker Ivo Steinbrecher for providing us additional data for the benchmark test. We also thank the group of Torsten Leutbecher at the University of Siegen for providing all necessary information on fiber reinforced ultra high performance concrete in the final example.
\review{We also thank Patrick Le Tallec (\'Ecole Polytechnique Paris) for helpful discussions.}

\appendix

	
\bibliographystyle{plain}
\bibliography{bibliography,literature}

\begin{thebibliography}{10}

\bibitem{antman1995}
S.S. Antman.
\newblock {\em {Nonlinear problems of elasticity}}.
\newblock Springer, 1995.

\bibitem{menzel2017a}
T.~Asmanoglo and A.~Menzel.
\newblock {A finite deformation continuum modelling framework for curvature
  effects in fibre-reinforced nanocomposites}.
\newblock {\em Journal of the Mechanics and Physics of Solids}, 107:411--432,
  2017.

\bibitem{menzel2017c}
T.~Asmanoglo and A.~Menzel.
\newblock {A multi-field finite element approach for the modelling of
  fibre-reinforced composites with fibre-bending stiffness}.
\newblock {\em Computer Methods in Applied Mechanics and Engineering},
  317:1037--1067, 2017.

\bibitem{betsch2009c}
P.~Betsch and R.~Siebert.
\newblock Rigid body dynamics in terms of quaternions: Hamiltonian formulation
  and conserving numerical integration.
\newblock {\em International Journal for Numerical Methods in Engineering},
  79:444--473, 2009.

\bibitem{betsch2002c}
P.~Betsch and P.~Steinmann.
\newblock {Conservation Properties of a Time FE Method. Part III: Mechanical
  systems with holonomic constraints}.
\newblock {\em International Journal for Numerical Methods in Engineering},
  53:2271--2304, 2002.

\bibitem{betsch2002b}
P.~Betsch and P.~Steinmann.
\newblock {Constrained dynamics of geometrically exact beams}.
\newblock {\em Computational Mechanics}, 31:49--59, 2003.

\bibitem{bonet2015a}
J.~Bonet, A.J. Gil, and R.~Ortigosa.
\newblock {A computational framework for polyconvex large strain elasticity}.
\newblock {\em Computer Methods in Applied Mechanics and Engineering},
  283:1061--1094, 2015.

\bibitem{cosserat1909}
E.~Cosserat and F.~Cosserat.
\newblock {\em {Sur la Theorie des Corps Deformables}}.
\newblock Herman, Paris, 1909.

\bibitem{d2008coupling}
C.~D'Angelo and A.~Quarteroni.
\newblock {On the coupling of 1D and 3D diffusion-reaction equations:
  Application to tissue perfusion problems}.
\newblock {\em Mathematical Models and Methods in Applied Sciences},
  18(08):1481--1504, 2008.

\bibitem{delIsola2015}
F.~dell{\textquoteright}Isola, I.~Giorgio, M.~Pawlikowski, and N.L. Rizzi.
\newblock {Large deformations of planar extensible beams and pantographic
  lattices: heuristic homogenization, experimental and numerical examples of
  equilibrium}.
\newblock {\em Proceedings of the Royal Society of London A: Mathematical,
  Physical and Engineering Sciences}, 472(2185), 2016.

\bibitem{hesch2019a}
F.~dell'Isola, P.~Seppecher, M.~Spagnuolo, E.~Barchiesi, F.~Hild, T.~Lekszycki,
  I.~Giorgio, L.~Placidi, U.~Andreaus, M.~Cuomo, S.R. Eugster, A.~Pfaff,
  K.~Hoschke, R.~Langkemper, E.~Turco, R.~Sarikaya, A.~Misra, M.~{De Angelo},
  F.~D'Annibale, A.~Bouterf, X.~Pinelli, A.~Misra, B.~Desmorat, M.~Pawlikowski,
  C.~Dupuy, D.~Scerrato, P.~Peyre, M.~Laudato, L.~Manzari, P.~G{\"o}ransson,
  C.~Hesch, S.~Hesch, P.~Franciosi, J.~Dirrenberger, F.~Maurin, Z.~Vangelatos,
  C.~Grigoropoulos, V.~Melissinaki, M.~Farsari, W.~Muller, B.E. Abali,
  C.~Liebold, G.~Ganzosch, P.~Harrison, R.~Drobnicki, L.~Igumnov, F.~Alzahrani,
  and T.~Hayat.
\newblock {Advances in pantographic structures: design, manufacturing, models,
  experiments and image analyses}.
\newblock {\em Continuum Mechanics and Thermodynamics}, 31:1231--1282, 2019.

\bibitem{dittman2020straingradient}
M.~Dittman, J.~Schulte, F.~Schmidt, and C.~Hesch.
\newblock A strain-gradient formulation for fiber reinforced polymers: Hybrid
  phase-field model for porous-ductile fracture, submitted 2020.

\bibitem{eringen1999}
A.C. Eringen.
\newblock {\em {Microcontinuum Field Theories I: Foundations and Solids}}.
\newblock Springer, 1999.

\bibitem{eugster2014}
S.R. Eugster, C.~Hesch, P.~Betsch, and C.~Glocker.
\newblock {Director-based beam finite elements relying on the geometrically
  exact beam theory formulated in skew coordinates}.
\newblock {\em International Journal for Numerical Methods in Engineering},
  97:111--129, 2014.

\bibitem{germain1973b}
P.~Germain.
\newblock {The Method of Virtual Power in Continuum Mechanics. Part 2:
  Microstructure}.
\newblock {\em SIAM Journal on Applied Mathematics}, 25:556--575, 1973.

\bibitem{gil2010}
A.J. Gil, A.~{Arranz Carre{\~n}o}, J.~Bonet, and O.~Hassan.
\newblock {The Immersed Structural Potential Method for haemodynamic
  applications}.
\newblock {\em Journal of Computational Physics}, 229:8613--8641, 2010.

\bibitem{giorgio2016}
I.~Giorgio.
\newblock {Numerical identification procedure between a micro-Cauchy model and
  a macro-second gradient model for planar pantographic structures.}
\newblock {\em Zeitschrift für angewandte Mathematik und Physik}, 67:95:1--17,
  2016.

\bibitem{glowinski1994}
R.~Glowinski, T.-W. Pan, and J.~P{\'e}riaux.
\newblock {{A} fictitious domain method for {D}irichlet problems and
  applications}.
\newblock {\em Computer Methods in Applied Mechanics and Engineering},
  111:283--303, 1994.

\bibitem{hesch2012b}
C.~Hesch, A.J. Gil, A.~{Arranz Carre{\~n}o}, and J.~Bonet.
\newblock {On immersed techniques for fluid-structure interaction}.
\newblock {\em Computer Methods in Applied Mechanics and Engineering},
  247-248:51--64, 2012.

\bibitem{hesch2014b}
C.~Hesch, A.J. Gil, A.~{Arranz Carre{\~n}o}, J.~Bonet, and P.~Betsch.
\newblock {A Mortar approach for Fluid-Structure Interaction problems: Immersed
  strategies for deformable and rigid bodies}.
\newblock {\em Computer Methods in Applied Mechanics and Engineering},
  278:853--882, 2014.

\bibitem{steinmann2013}
A.~Javili, F.~dell'Isola, and P.~Steinmann.
\newblock {Geometrically nonlinear higher-gradient elasticity with energetic
  boundaries}.
\newblock {\em Journal of the Mechanics and Physics of Solids},
  61(12):2381--2401, 2013.

\bibitem{koppl2018mathematical}
T.~K{\"o}ppl, E.~Vidotto, B.~Wohlmuth, and P.~Zunino.
\newblock Mathematical modeling, analysis and numerical approximation of
  second-order elliptic problems with inclusions.
\newblock {\em Mathematical Models and Methods in Applied Sciences},
  28(05):953--978, 2018.

\bibitem{liu2007}
W.K. Liu, D.W. Kim, and S.~Tang.
\newblock {Mathematical foundations of the immersed finite element method}.
\newblock {\em Computational Mechanics}, 39:211--222, 2007.

\bibitem{liu2006}
W.K. Liu, Y.~Liu, D.~Farrell, L.~Zhang, X.S. Wang, Y.~Fukui, N.~Patankar,
  Y.~Zhang, C.~Bajaj, J.~Lee, J.~Hong, X.~Chen, and H.~Hsu.
\newblock {{I}mmersed finite element method and its applications to biological
  systems}.
\newblock {\em Computer Methods in Applied Mechanics and Engineering},
  195:1722--1749, 2006.

\bibitem{marsden1983}
J.E. Marsden and T.J.R. Hughes.
\newblock {\em {Mathematical Foundations of Elasticity}}.
\newblock Pren\-tice-Hall, INC, 1983.

\bibitem{marsden2003}
J.E. Marsden and T.S. Ratiu.
\newblock {\em {Introduction to Mechanics and Symmetry}}.
\newblock Springer, 2003.

\bibitem{mcrobie1999}
F.A. McRobie and J.~Lasenby.
\newblock {Simo--Vu Quoc rods using Clifford algebra}.
\newblock {\em International Journal for Numerical Methods in Engineering},
  45:377--398, 1999.

\bibitem{meier2018}
C.~Meier, M.J. Grill, W.A. Wall, and A.~Popp.
\newblock {Geometrically exact beam elements and smooth contact schemes for the
  modeling of fiber-based materials and structures.}
\newblock {\em International Journal of Solids and Structures}, 154:124--1146,
  2018.

\bibitem{meier2014}
C.~Meier, A.~Popp, and W.A. Wall.
\newblock {An objective 3D large deformation finite element formulation for
  geometrically exact curved Kirchhoff rods.}
\newblock {\em Computer Methods in Applied Mechanics and Engineering},
  278:445--478, 2014.

\bibitem{meier2019geometrically}
C.~Meier, A.~Popp, and W.A. Wall.
\newblock {Geometrically exact finite element formulations for slender beams:
  Kirchhoff--\review{Love} theory versus \review{Simo--Reissner} theory}.
\newblock {\em Archives of Computational Methods in Engineering},
  26(1):163--243, 2019.

\bibitem{mindlin1964}
R.D. Mindlin.
\newblock {Micro-structure in linear elasticity}.
\newblock {\em Archive for Rational Mechanics and Analysis}, 16:51--78, 1964.

\bibitem{mindlin1965a}
R.D. Mindlin.
\newblock {Second gradient of strain and surface-tension in linear elasticity}.
\newblock {\em International Journal of Solids and Structures}, 1:417--438,
  1965.

\bibitem{ortigosa2016}
R.~Ortigosa, A.J. Gil, J.~Bonet, and C.~Hesch.
\newblock {A computational framework for polyconvex large strain elasticity for
  geometrically exact beam theory}.
\newblock {\em Computational Mechanics}, 57(2):277--303, 2016.

\bibitem{peskin2002}
C.S. Peskin.
\newblock {The immersed boundary method}.
\newblock {\em Acta Numerica}, 11:479--517, 2002.

\bibitem{Rei1981}
E.~Reissner.
\newblock {On finite deformations of space-curved beams.}
\newblock {\em Zeitschrift für angewandte Mathematik und Physik}, 32:734--744,
  1981.

\bibitem{romero2001}
I.~Romero and F.~Armero.
\newblock {An objective finite element approximation of the kinematics of
  geometrically exact rods and its use in the formulation of an energy-momentum
  conserving scheme in dynamics}.
\newblock {\em International Journal for Numerical Methods in Engineering},
  54:1683--1716, 2002.

\bibitem{rubin2010}
M.B. Rubin.
\newblock {\em {Cosserat Theories: Shells, Rods and Points}}.
\newblock Kluwer Academic Publishers, 2000.

\bibitem{puso2012}
J.~Sanders and M.A. Puso.
\newblock An embedded mesh method for treating overlapping finite element
  meshes.
\newblock {\em International Journal for Numerical Methods in Engineering},
  91(3):289--305, 2012.

\bibitem{Schulte2020b}
J.~Schulte, M.~Dittmann, S.R. Eugster, S.~Hesch, T.~Reinicke, F.~Dell'Isola,
  and C.~Hesch.
\newblock {Isogeometric analysis of fiber reinforced composites using
  Kirchhoff--Love shell elements}.
\newblock {\em Computer Methods in Applied Mechanics and Engineering},
  362:112845, 2020.

\bibitem{Simo1985}
J.C. Simo.
\newblock {A finite strain beam formulation. {T}he three-dimensional dynamic
  problem. {Part I}}.
\newblock {\em Computer Methods in Applied Mechanics and Engineering},
  49(1):55--70, 1985.

\bibitem{simo1992h}
J.C. Simo, D.D. Fox, and T.J.R. Hughes.
\newblock {Formulations of finite elasticity with independent rotations}.
\newblock {\em Computer Methods in Applied Mechanics and Engineering},
  95:277--288, 1992.

\bibitem{Simo1986}
J.C. Simo and L.~Vu-Quoc.
\newblock {A three-dimensional finite-strain rod model. {Part II:
  Computational} aspects}.
\newblock {\em Computer Methods in Applied Mechanics and Engineering},
  58:79--116, 1986.

\bibitem{soldatos2010}
K.P. Soldatos.
\newblock {Second-gradient plane deformations of ideal fibre-reinforced
  materials: implications of hyper-elasticity theory}.
\newblock {\em Journal of Engineering Mathematics}, 68:99--127, 2010.

\bibitem{spencer2007}
A.J.M. Spencer and K.P. Soldatos.
\newblock {Finite deformations of fibre-reinforced elastic solids with fibre
  bending stiffness}.
\newblock {\em International Journal of Non-Linear Mechanics}, 42:355--368,
  2007.

\bibitem{steinbrecher2019c}
I.~Steinbrecher, M.~Mayr, M.J. Grill, J.~Kremheller, C.~Meier, and A.~Popp.
\newblock A mortar-type finite element approach for embedding 1{D} beams into
  3{D} solid volumes.
\newblock {\em Computational Mechanics}, 66:1377--1398, 2020.

\bibitem{toupin1962}
R.A. Toupin.
\newblock {Elastic materials with couple-stresses}.
\newblock {\em Archive for Rational Mechanics and Analysis}, 11:385--414, 1962.

\bibitem{toupin1964}
R.A. Toupin.
\newblock {Theories of elasticity with couple stress}.
\newblock {\em Archive for Rational Mechanics and Analysis}, 17:85--112, 1964.

\bibitem{Weeger2017}
O.~Weeger, S.K. Yeung, and M.L. Dunn.
\newblock {Isogeometric collocation methods for Cosserat rods and rod
  structures}.
\newblock {\em Computer Methods in Applied Mechanics and Engineering},
  316:100--122, 2017.

\end{thebibliography}

\end{document}